    \newcommand{\bra}[1]{\ensuremath{\left\langle{#1}\right\vert}}
    \newcommand{\ket}[1]{\ensuremath{\left\vert{#1}\right\rangle}}
\newcommand{\qw}[1][-1]{\ar @{-} [0,#1]}
\newcommand{\qwx}[1][-1]{\ar @{-} [#1,0]}
\newcommand{\cds}[2]{*+<1em,.9em>{\hphantom{#2}} \POS [0,0].[#1,0]="e",!C *{#2};"e"+ R \qw}
\newcommand{\gate}[1]{*+<.6em>{#1} \POS ="i","i"+UR;"i"+UL **\dir{-};"i"+DL **\dir{-};"i"+DR **\dir{-};"i"+UR **\dir{-},"i" \qw}
\newcommand{\meter}{*=<1.8em,1.4em>{\xy ="j","j"-<.778em,.322em>;{"j"+<.778em,-.322em> \ellipse ur,_{}},"j"-<0em,.4em>;p+<.5em,.9em> **\dir{-},"j"+<2.2em,2.2em>*{},"j"-<2.2em,2.2em>*{} \endxy} \POS ="i","i"+UR;"i"+UL **\dir{-};"i"+DL **\dir{-};"i"+DR **\dir{-};"i"+UR **\dir{-},"i" \qw}
\newcommand{\control}{*!<0em,.025em>-=-<.2em>{\bullet}}
\newcommand{\ctrl}[1]{\control \qwx[#1] \qw}
\newcommand{\targ}{*+<.02em,.02em>{\xy ="i","i"-<.39em,0em>;"i"+<.39em,0em> **\dir{-}, "i"-<0em,.39em>;"i"+<0em,.39em> **\dir{-},"i"*\xycircle<.4em>{} \endxy} \qw}
\newcommand{\inputgroupv}[5]{\POS"#1,1"."#2,1"."#1,1"."#2,1"!C*+<#3>\frm{\{}, \POS"#1,1"."#2,1"."#1,1"."#2,1"*!C!<1.7em,#4>=<0em>{#5}}
\newcommand{\lstick}[1]{*!R!<.5em,0em>=<0em>{#1}}
\newcommand{\ustick}[1]{*!D!<0em,-.5em>=<0em>{#1}}
\newcommand{\Qcircuit}{\xymatrix @*=<0em>}
\newtheorem{theorem}{Theorem}
\newtheorem{lemma}{Lemma}
\newtheorem{problem}{Problem}
\newtheorem{corollary}{Corollary}
\newtheorem{claim}{Claim}
\newtheorem{assumption}{Assumption}
\DeclareMathOperator*{\argmin}{\arg\,\min}
\DeclareMathOperator{\diag}{diag}
\DeclareMathOperator{\rank}{rank}
\newcommand{\abs}[1]{\left| #1 \right|}
\newcommand{\vabs}[1]{\left\| #1 \right\|}
\newcommand{\pbra}[1]{\left( #1 \right)}
\newcommand{\cbra}[1]{\left\{ #1 \right\}}
\newcommand{\ceil}[1]{\left\lceil #1\right \rceil }
\newcommand{\Cbb}{\mathbb{C}}
\newcommand{\Ebb}{\mathbb{E}}
\newcommand{\Rbb}{\mathbb{R}}
\newcommand{\Ccal}{\mathcal{C}}
\newcommand{\Tcal}{\mathcal{T}}
\newcommand{\Amat}{\bm{\mathrm A}}
\newcommand{\Pmat}{\bm{\mathrm P}}
\newcommand{\Rmat}{\bm{\mathrm R}}
\newcommand{\Mmat}{\bm{\mathrm M}}
\newcommand{\Hmat}{\bm{\mathrm H}}
\newcommand{\Imat}{\bm{\mathrm I}}
\newcommand{\Qmat}{\bm{\mathrm Q}}
\newcommand{\Umat}{\bm{\mathrm U}}
\newcommand{\Vmat}{\bm{\mathrm V}}
\newcommand{\Wmat}{\bm{\mathrm W}}
\newcommand{\Xmat}{\bm{\mathrm X}}
\newcommand{\poly}{\mathrm{poly}}
\newcommand{\polylog}{\mathrm{polylog}}
\newcommand{\real}{\mathrm{Re}}
\newcommand{\imag}{\mathrm{Im}}
\newcommand{\CNOT}{\mathrm{CNOT}}
\newcommand{\Control}{\mathrm{Control}}
\newcommand{\Ord}[1]{\mathcal{O}\left( #1 \right)}
\newcommand{\tOrd}[1]{\widetilde{\mathcal{O}}\left( #1 \right)}
\newcommand{\Tht}[1]{\Theta \left( #1 \right)}
\newcommand{\ErrRel}{\varepsilon_{\rm rel} }
\def\be{\begin{eqnarray}}
\def\ee{\end{eqnarray}}
\definecolor{Pr}{rgb}{0.4,0.3,0.9}
\definecolor{Lm}{rgb}{0.6,0.2,0.2}
\definecolor{cornsilk}{rgb}{1.0, 0.97, 0.86}
\definecolor{darkgray}{rgb}{0.66, 0.66, 0.66}
\definecolor{cordovan}{rgb}{0.54, 0.25, 0.27}
\begin{document}

\title{Quantum-classical algorithms for skewed linear systems with optimized Hadamard test}
\author{Bujiao Wu} 
\email{bujiaowu@gmail.com}
\affiliation{
Institute of Computing Technology, Chinese Academy of Sciences}
\affiliation{
University of Chinese Academy of Sciences}
\author{Maharshi Ray}
\affiliation{
Centre for Quantum Technologies, National University of Singapore}
\author{Liming Zhao}
\affiliation{
Centre for Quantum Technologies, National University of Singapore}
\author{Xiaoming Sun}
\affiliation{
Institute of Computing Technology, Chinese Academy of Sciences}
\affiliation{
University of Chinese Academy of Sciences}
\author{Patrick Rebentrost} 
\email{cqtfpr@nus.edu.sg}
\affiliation{
Centre for Quantum Technologies, National University of Singapore}

\date{\today}

\begin{abstract}
The solving of linear systems provides a rich area to investigate the use of nearer-term, noisy, intermediate-scale quantum computers.
In this work, we discuss hybrid quantum-classical algorithms for skewed linear systems for over-determined and under-determined cases. Our input model is such that the columns or rows of the matrix defining the linear system are given via quantum circuits of poly-logarithmic depth and the number of circuits is much smaller than their Hilbert space dimension.  
Our algorithms have poly-logarithmic dependence on the dimension and polynomial dependence in other natural quantities. 
In addition, we present an algorithm for the special case of a factorized linear system with run time poly-logarithmic in the respective dimensions. 
At the core of these algorithms is the Hadamard test and
in the second part of this paper we consider the optimization of the circuit depth of this test. 
Given an $n$-qubit and $d$-depth quantum circuit $\mathcal{C}$, we can approximate $\bra{0}\mathcal{C}\ket{0}$ using $(n + s)$ qubits and $O\left(\log s + d\log (n/s) + d\right)$-depth quantum circuits, where $s\leq n$.
In comparison, the standard implementation requires $n+1$ qubits and $O(dn)$ depth. 
Lattice geometries underlie recent quantum supremacy experiments with superconducting devices.
We also optimize the Hadamard test for an $(l_1\times l_2)$ lattice with $l_1 \times l_2 = n$, and can approximate $\bra{0}\mathcal{C}\ket{0}$ with $(n + 1)$ qubits and $O\left(d \left(l_1 + l_2\right)\right)$-depth circuits. In comparison, the standard depth is $O\left(d n^2\right)$ in this setting. Both of our optimization methods are asymptotically tight in the case of one-depth quantum circuits $\mathcal{C}$.
\end{abstract}

\maketitle

\section{Introduction\label{sec:intro}}

Linear systems appear in a variety of problems in engineering, the physical sciences, and machine learning. A commonly encountered situation is that, given a matrix $\Amat\in \Cbb^{N\times M}$, and a vector $\bm b\in \Cbb^{N}$, we would like to find an optimized solution $\bm x^*$ such that $\bm x^*=\argmin_{\bm x} \vabs{\Amat \bm x - \bm b}$, in $\ell_2$-norm.
Two important scenarios can be discussed. The simpler case is when the number of variables is much less than the number of equations.
An example is the common regression problem of fitting a low-degree polynomial to big data~\cite{draper1998applied,montgomery2012introduction}. 
The second case is when the number of equations  is much less than
the number of variables.
The linear system is under-determined and can have many solutions. Further constraints can help to find a unique solution. Popular scenarios are the LASSO estimation~\cite{tibshirani1996regression,santosa1986linear} and compressed sensing~\cite{candes2006robust,candes2008introduction} that arise in many applications in statistics and applied mathematics~\cite{efron1994introduction,weisberg2005applied}. 

Due to the wide range of applications, there are continued developments in algorithms, including classical algorithms~\cite{le2014powers,spielman2010algorithms,spielman2004nearly,cohen2014solving,andoni2018solving,van2020solving,strohmer2009randomized}, quantum algorithms~\cite{harrow2009quantum,ambainis2012variable,childs2017quantum,dervovic2018quantum,wossnig2018quantum,wang2017quantum,lin2019solving,shao2019randomized}, quantum-inspired classical algorithms~\cite{chia2018quantum,gilyen2018quantum}, and hybrid variational algorithms~\cite{huang2019near, mcardle2019variational,xu2019variational,bravo2019variational}.
For exact solving of general square matrices the latest classical algorithm is discussed by Gall ~\cite{le2014powers} (where $M = N$), which has running time $\Ord{N^{\omega}}$, where $\omega< 2.373$ is the matrix multiplication exponent. 
Cohen \emph{et al.}~\cite{cohen2014solving} proposed an approximate $\Ord{d\sqrt{\log N}\pbra{\log \log N}^{\Ord{1}} \log (1/\ErrRel) }$ time algorithm for $d$-sparse symmetric diagonally dominated matrices with relative error $\ErrRel $ in an optimized solution $\bm x^\ast$.
Subsequently, Andoni, Krauthgamer, and Pogrow~\cite{andoni2018solving} proposed an approximate sublinear time algorithm for symmetric diagonally dominated matrices. 
Recently, van den Brand \emph{et al.}~\cite{van2020solving} proposed an $\tOrd{NM + M^3}$ time randomized algorithm for $N\times M$ matrices with high probability. The notation $\tOrd{f(n)}$ means $\Ord{f(n)\polylog n}$.

Quantum computing promises speedups for several computationally hard problems ~\cite{aaronson2017complexity,shor1994algorithms, grover1996fast}. There are a number of quantum approaches to the linear systems problem. Harrow, Hassidim, and Lloyd~\cite{harrow2009quantum} (HHL) proposed the first quantum algorithm to solve $d$-sparse $N\times N$ linear systems with running time $\Ord{d^2\kappa^2 \log N /\varepsilon}$, where $\kappa$ is the condition number of matrix  $\Amat$, and $\varepsilon$ is the additive error. HHL algorithm does not output the optimal solution $\bm x^*$ but prepares a normalized quantum state $\ket{x}$ which is proportional to $\bm x^*$ approximately. The original HHL algorithm requires the use of phase estimation and sparse Hamiltonian simulation as subroutines, which is beyond the reach of present-day quantum computers for larger dimensions.
Several follow-up quantum algorithms~\cite{ambainis2012variable,childs2017quantum,dervovic2018quantum} improve the condition number, sparsity, and error dependence. 
Wossnig, Zhao, and Prakash~\cite{wossnig2018quantum} gave an $\Ord{\kappa^2\sqrt{N}\polylog N/\varepsilon}$ time quantum algorithm for dense linear systems with spectral norm bounded by a constant, and a column/row-based input model. 
Wang~\cite{wang2017quantum} proposed a $\poly\pbra{\log N, M, \kappa, 1/\varepsilon}$ time quantum algorithm to solve the linear regression problems for a full-rank matrix $\Amat$ ($\rank(\Amat) = M$).
Shao and Xiang~\cite{shao2019randomized} give a quantum version of the stochastic Kaczmarz algorithm~\cite{strohmer2009randomized} for linear systems in a column/row-based input model, with time polynomial in the logarithm of the number of columns/rows. 

Recent works also include other randomized classical algorithms for low-rank matrices, whose discussion was inspired by quantum algorithms especially in the context of machine learning. Chia, Lin and Wang~\cite{chia2018quantum} proposed an approximate algorithm with time complexity $\tOrd{\vabs{\Amat}^{6\omega + 4}\vabs{\Amat}_{F}^{2\omega}\kappa^{8\omega + 4}/\ErrRel^{2\omega}}$. 
The algorithm uses similar techniques as the dequantization algorithm for recommendation systems~\cite{kerenidis2017quantum} proposed by Tang~\cite{tang2019quantum}. 
The algorithms require that $\Amat$ and $\bm b$ are given by low-overhead data structures and entries of $\Amat$ and $\bm b$ can be sampled according to their size
and achieve a sublinear time complexity.
Gily{\'e}n, Lloyd, and Tang~\cite{gilyen2018quantum} have proposed a similar algorithm for low-rank systems with time complexity $\Ord{\vabs{\Amat}_{F}^{6}\kappa^{16}R^{6}\polylog  (N,M)/\ErrRel^{6}}$ and the same assumptions, where $R$ is the rank of matrix.
Although the time complexity is sublinear in the matrix dimension for the above two algorithms, they are only efficient for low-rank matrices and show substantial dependence on the relative error $\ErrRel$ and other parameters $\|\Amat\|_F, \kappa,\|\bm b\|$ and rank $R$. 
Note that recently Gilyen, Song, and Tang~\cite{Gilyen2020improved} proposed a dequantized algorithm with improved time complexity $\Ord{\vabs{\Amat}_F^6\vabs{\Amat}^2\vabs{\Amat^{-1}}^8/\ErrRel^4}$.

With the recent advent of small-size, noisy quantum computers~\cite{arute2019quantum,IBMQ,ye2019propagation}, Noisy Intermediate Scaled Quantum (NISQ) applications have received a significant amount of attention. A series of near-term variational quantum algorithms have been proposed~\cite{xu2019variational,bravo2019variational,mcardle2019variational,huang2019near} to solve certain linear system problems with square matrices. 
These algorithms use parameterized quantum circuits to construct Ansatz states and employ a quantum-classical loop to train the parameters via a suitable cost function.
The data input model of these works is one where the matrix is given as a linear combination of a relatively small number of unitaries.
In principle, these algorithms may allow to take advantage of NISQ devices and offer potential quantum advantages for certain linear systems. However, often they suffer from a barren plateau/local minima issue and do not theoretically guarantee the run time.
Huang, Bharti, and Rebentrost~\cite{huang2019near} discuss a linear combination of quantum states method to avoid the aforementioned barren plateau problem common in many NISQ algorithms.  In some cases,  many (proportional to the dimension) variational quantum states have to be combined to provably obtain the solution. 

In this work, we give algorithms for skewed linear systems for potential use on NISQ devices.
We consider a different input model compared to the linear combination of unitaries input of Refs.~\cite{xu2019variational,bravo2019variational,mcardle2019variational,huang2019near}. 
The columns of the matrix $\mathbf{A}\in \mathbb{C}^{N \times M}$ shall be given via quantum circuits of poly-logarithmic depth and with the assumption that $M \ll N$.  
The first algorithm is for solving the over-determined problem $\mathbf{A} \bm x = \bm b$, with vector $\bm b\in \mathbb{C}^{N}$ given by an efficient quantum circuit. 
The second algorithm is for solving the under-determined problem
$\mathbf{A}^\dagger \bm y = \bm c$, with vector $\bm c\in \mathbb{C}^{M}$ given classically.
The input model can enhance the types of linear system problems that could be solved on a near-term quantum computer. Our model can be in principle converted to the linear combination of unitaries input model, with an additional cost which may go beyond the NISQ setting, see
Appendix \ref{app:CompareInput} for the details.
Under the given assumptions, our algorithms approximately finds the true solution with high probability. 
In both cases, the run time is polynomial in the dimension $M$ and the circuit depth of the unitaries defining the input model, \emph{i.e.}, ${\rm poly} \log N$.
We also consider factorized linear systems. Instead of accessing the matrix $\Amat$, we can only access the rank $R$ matrices $\Amat_1\in \Cbb^{N\times R}, \Amat_2\in \Cbb^{R\times M}$ such that $\Amat = \Amat_1 \Amat_2$. This setting  has several applications in machine learning~\cite{kim2008sparse,zou2006sparse}.
Ma, Needell, and Ramadas~\cite{ma2018iterative} proposed a classical stochastic iterative algorithms for factorized linear systems.
Their algorithms have an exponentially fast convergence for some types of linear systems, while the run time per step depends on whether $\Amat$ is consistent or not.
Our algorithm solves such systems with time polynomial in the rank $R$ and logarithmic in the dimensions $N,M$.

We list the comparison of our algorithms, the well known stochastic gradient decent algorithm~\cite{strohmer2009randomized} and the recent dequantized algorithms~\cite{chia2018quantum, gilyen2018quantum} in Table \ref{tab:ComparisonAlg}. 
As discussed, these algorithms have different input models.

The Hadamard test is the central subroutine in algorithms for near-term linear systems \cite{huang2019near,bravo2019variational,xu2019variational} and also the present work.
In short, given a unitary $\Umat$, the Hadamard test of $\{\ket{0},\Umat\}$\footnote{We use $\ket{0}$ to represent $\ket{0^t}$ where $\Umat\in \Cbb^{2^t}$.} is to calculate value the $\bra{0}\Umat\ket{0}$ by independently and repeatedly performing a quantum circuit related to $\Umat$.
In the second part of the work, we discuss circuit optimization of the Hadamard test. We specifically consider the optimization of the Hadamard test when the qubits have (i) all-to-all connectivity and (ii) 2D nearest-neighbor connectivity as in recent 2D quantum devices \cite{arute2019quantum,IBMQ}.

Our results for linear systems can be informally described with the following theorems.
\begin{theorem}[\emph{Informal}]
For an $N\times M$ ($M\times N$) dimensional matrix for which the columns (rows) can be accessed in quantum form with $\Ord{\log N}$ qubits and corresponding right-hand side vector, precisely defined in Assumptions \ref{assume:quantummatrix}, \ref{assume:vecb} and Problems \ref{pro:overdetermined}, \ref{pro:underdetermined} below,
there exists a hybrid algorithm with run time
polynomial in $M\log N,$ and dependent on $1/\varepsilon^4$, such that the algorithm outputs classical data to construct the solution to accuracy $\varepsilon$ with high success probability.
\label{thm:HybridAlgLogMIntro}
\end{theorem}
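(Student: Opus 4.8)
The plan is to reduce both skewed systems to a single small $M\times M$ linear system whose coefficients are inner products among the columns of $\Amat$, and to obtain those inner products on the quantum device through Hadamard tests before finishing the computation classically. Writing the column‑preparation circuits of Assumption~\ref{assume:quantummatrix} as unitaries $\Umat_1,\dots,\Umat_M$ with $\Umat_j\ket{0}$ equal to the (normalized) $j$‑th column $\bm a_j$, and $\Umat_b$ preparing $\bm b$, the Gram matrix $G:=\Amat^\dagger\Amat$ has entries $G_{jk}=\langle\bm a_j|\bm a_k\rangle=\bra{0}\Umat_j^\dagger\Umat_k\ket{0}$, and the vector $\bm f:=\Amat^\dagger\bm b$ has entries $\bm f_j=\bra{0}\Umat_j^\dagger\Umat_b\ket{0}$. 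Each of these quantities is exactly the Hadamard test applied to a composed unitary of depth $\Ord{d}$, and because $M\ll N$ there are only $\Ord{M^2}$ of them to estimate; any known column‑norm factors are absorbed into $\Amat$.

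Concretely, I would first run, for every index pair, the real‑ and imaginary‑part Hadamard tests of $\Umat_j^\dagger\Umat_k$ (and of $\Umat_j^\dagger\Umat_b$) to produce empirical estimates $\widehat{G}_{jk}$ and $\widehat{\bm f}_j$ to additive precision $\eta$; by a Hoeffding bound this costs $\Ord{1/\eta^2}$ repetitions per entry. For the over‑determined Problem~\ref{pro:overdetermined}, the least‑squares minimizer solves the normal equations $G\bm x^\ast=\bm f$, so I would assemble $\widehat{G}$ and $\widehat{\bm f}$ and solve the perturbed $M\times M$ system classically in time $\poly(M)$, returning the coefficient vector $\widehat{\bm x}$ directly. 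The under‑determined Problem~\ref{pro:underdetermined} is dual: its minimum‑norm solution is $\bm y^\ast=\Amat\,G^{-1}\bm c$, so I would estimate the same $G$, use the classically supplied $\bm c$ of Assumption~\ref{assume:vecb} as the right‑hand side, solve $\widehat{G}\,\widehat{\bm\beta}=\bm c$, and output the $M$ coefficients $\widehat{\bm\beta}$; the solution $\bm y^\ast=\sum_j\widehat\beta_j\bm a_j$ is then specified implicitly through the column circuits, as required.

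The crux, and the main obstacle, is the error analysis that fixes $\eta$ and hence the sample count. The controlling tool is the standard first‑order perturbation bound $\vabs{\widehat{\bm x}-\bm x^\ast}\lesssim \vabs{G^{-1}}\pbra{\vabs{\Delta G}\,\vabs{\bm x^\ast}+\vabs{\Delta\bm f}}$ together with the spectral estimate $\vabs{\Delta G}\le M\eta$ coming from the $M^2$ entrywise errors, valid once $\vabs{G^{-1}}\vabs{\Delta G}$ sits below a constant. Demanding that the right‑hand side be at most $\varepsilon$ forces $\eta$ to be polynomially small in $1/M$, $\varepsilon$, and the condition number $\kappa=\vabs{G}\vabs{G^{-1}}$; substituting into the $\Ord{M^2/\eta^2}$ total shot count gives a run time polynomial in $M\log N$ and $\kappa$, with the depth per shot staying $\polylog N$ since it is governed only by $d$. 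A naive accounting of this kind already yields a $1/\varepsilon^2$ dependence; the stated $1/\varepsilon^4$ is what emerges once the remaining error sources are tracked carefully, in particular the passage from a normalized to the true solution and, in the iterative variant in the spirit of randomized Kaczmarz~\cite{strohmer2009randomized}, the fact that both the number of update steps and the per‑step estimation precision scale inverse‑quadratically in $\varepsilon$. Finally, a union bound driving each of the $\Ord{M^2}$ estimates below a failure probability $\Ord{\delta/M^2}$ upgrades the guarantee to high success probability, completing the argument.
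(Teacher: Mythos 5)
Your reduction to an $M\times M$ Gram system estimated entrywise by Hadamard tests is indeed the paper's starting point, but your error analysis has a genuine gap: it silently assumes $G=\Amat^\dagger\Amat$ is invertible. Every step of your third paragraph --- the bound $\vabs{\widehat{\bm x}-\bm x^\ast}\lesssim \vabs{G^{-1}}\pbra{\vabs{\Delta G}\vabs{\bm x^\ast}+\vabs{\Delta\bm f}}$, the formula $\bm y^\ast=\Amat G^{-1}\bm c$, and the step ``solve $\widehat{G}\widehat{\bm\beta}=\bm c$'' --- requires $G^{-1}$ to exist and be controlled. Problems \ref{pro:overdetermined} and \ref{pro:underdetermined} make no full-column-rank assumption: $G$ is only positive semi-definite, the target solutions are defined via pseudo-inverses, and then your scheme breaks in two ways. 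First, an entrywise-estimated $\widehat{G}$ of a singular $G$ generically acquires spurious eigenvalues of size $\vabs{\Delta G}$, so the classical solve inverts noise in the null directions and the first-order perturbation bound gives nothing (the blow-up actually worsens as the per-entry precision improves). Second, in the under-determined case the system $G\bm\beta=\bm c$ can be inconsistent when $\bm c$ has a component outside the range of $G$, so there is no well-posed target at all. The paper's proof supplies exactly the missing machinery: it regularizes, computing $\pbra{\widehat{\Vmat}+\lambda\Imat}^{-1}\hat{\bm q}$, where Lemma \ref{lem:PerturbBounded} bounds the bias $\lambda\vabs{\Vmat^{-1}}^2\vabs{\bm q}$ that the shift introduces for consistent systems, and Lemma \ref{lem:approxCorrect} combines this with concentration so that $\vabs{\pbra{\Vmat+\lambda\Imat}^{-1}}\le 1/\lambda$ tames the noise; for Problem \ref{pro:underdetermined} it additionally replaces $\Vmat\bm\alpha=\bm c$ by the automatically consistent system $\Vmat^2\bm\alpha=\Vmat\bm c$ before shifting.

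This gap also explains why your account of the $1/\varepsilon^4$ dependence is wrong: it does not come from normalization or from a Kaczmarz-style iteration (the paper has no iterative variant). It comes from the regularization itself: correctness forces $\lambda$ to scale linearly in $\varepsilon$ (e.g.\ $\lambda = \varepsilon/\pbra{2\vabs{\Amat}^2\vabs{\Amat^{-1}}^4\vabs{\bm b}}$ in Algorithm \ref{Alg:overdetermined}), while the per-entry sample count scales as $T=\Ord{K\pbra{\eta+\vabs{\bm\alpha^\ast}+1}^2/\pbra{\lambda^2\eta^2}}$, so $\lambda\propto\varepsilon$ and $\eta\propto\varepsilon$ jointly yield $1/\varepsilon^4$. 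Under your implicit full-rank assumption the correct count would indeed be $1/\varepsilon^2$ --- that is Proposition 10 of Ref.~\cite{huang2019near}, which the paper explicitly generalizes because it ``cannot be used directly for low-rank positive semi-definite matrices'' --- so your own accounting of $1/\varepsilon^2$ versus the stated $1/\varepsilon^4$ is the symptom of the missing ingredient. Your coarser union bound $\vabs{\Delta G}\le M\eta$ is acceptable for the informal ``polynomial in $M$'' claim (the paper's Lemma \ref{lem:randommatrix} gives a sharper $\sqrt{M}$-type random-matrix bound, but that only improves the polynomial), so the concentration step is not where the proposal fails.
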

We postpone the formal statement of this theorem to Theorems \ref{thm:SecOverdetermined} and \ref{thm:SecUnderdetermined} in Sec. \ref{sec:Hybrid Alg}. In addition, we obtain the following informal theorem on factorized linear systems.

\begin{theorem}[\emph{Informal}]
Suppose we are given a rank $R$ matrix of dimension $N \times M$ in terms of the product of two matrices of dimension $N \times R$ and  $R \times M$. The first matrix can be accessed in column form via quantum circuits over $\Ord{\log N}$ qubits and the second matrix can be accessed in row form via quantum circuits over $\Ord{\log M}$ qubits. See Problem \ref{pro:lowrank} and Assumptions \ref{assume:quantummatrix}, \ref{assume:vecb} for the precise definitions.
There exists a hybrid algorithm with run time polynomial in the rank $R$, polynomial in ${\log  N}$ and ${\log M}$, and dependent on $1/\varepsilon^2$, which outputs classical data such that an  $\varepsilon$-approximate solution can be constructed with high success probability.
\label{thm:LowRank}
 \end{theorem}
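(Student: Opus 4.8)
The plan is to exploit the factorization $\Amat=\Amat_1\Amat_2$ to collapse the $N\times M$ least-squares problem onto an $R$-dimensional one, so that only $\Ord{R^2}$ scalar overlaps — each obtained from a single Hadamard test — must be acquired from the quantum device. Since $\Amat_1\in\Cbb^{N\times R}$ has full column rank $R$ and $\Amat_2\in\Cbb^{R\times M}$ has full row rank $R$, the Moore--Penrose pseudoinverse factorizes as $\Amat^{+}=\Amat_2^{+}\Amat_1^{+}$ with $\Amat_1^{+}=(\Amat_1^\dagger\Amat_1)^{-1}\Amat_1^\dagger$ and $\Amat_2^{+}=\Amat_2^\dagger(\Amat_2\Amat_2^\dagger)^{-1}$. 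Writing $G_1=\Amat_1^\dagger\Amat_1$ and $G_2=\Amat_2\Amat_2^\dagger$ (both $R\times R$ and invertible) and $\bm\beta=\Amat_1^\dagger\bm b\in\Cbb^{R}$, the minimum-norm least-squares solution becomes
\begin{equation}
\bm x^\ast=\Amat_2^\dagger\,G_2^{-1}G_1^{-1}\bm\beta=\Amat_2^\dagger\bm w,\qquad \bm w:=G_2^{-1}G_1^{-1}\bm\beta .
\end{equation}
Hence $\bm x^\ast$ is a linear combination of the $R$ rows of $\Amat_2$ with classical coefficient vector $\bm w$; outputting $\bm w$ together with the given row circuits of $\Amat_2$ is exactly the classical-data description claimed in the theorem, and matches the output format of the over- and under-determined solvers of Theorem~\ref{thm:HybridAlgLogMIntro}. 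Equivalently, this is the composition of the over-determined solver applied to the tall factor $\Amat_1$ (producing $G_1^{-1}\bm\beta\in\Cbb^R$) with the under-determined solver applied to the wide factor $\Amat_2$ on the now-classical right-hand side.

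The algorithm then has three pieces. First, estimate the entries of $G_1$ and $\bm\beta$: by Assumption~\ref{assume:quantummatrix} each column of $\Amat_1$ is prepared, up to a known normalization, by a poly-logarithmic-depth circuit on $\Ord{\log N}$ qubits, so each $(G_1)_{ij}$ and $\beta_i$ is an overlap of two such states, or of a column state with $\ket{\bm b}$ from Assumption~\ref{assume:vecb}, whose real and imaginary parts are recovered by the optimized Hadamard test. Second, estimate the entries of $G_2$ in the same way from the $\Ord{\log M}$-qubit row circuits of $\Amat_2$. Third, form the purely classical $R\times R$ linear algebra $\hat{\bm w}=\hat G_2^{-1}\hat G_1^{-1}\hat{\bm\beta}$ in time $\poly(R)$. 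Because $R$ is small, every intermediate object $\hat G_1,\hat G_2,\hat{\bm\beta},\hat{\bm w}$ is a genuine classical $\Ord{R}$- or $\Ord{R^2}$-dimensional array, so the entire quantum workload is a single sampling stage: $\Ord{R^2}$ Hadamard tests, each repeated $\Ord{1/\delta^2}$ times to reach additive precision $\delta$. This single-stage structure is what makes the $\varepsilon$-dependence $1/\varepsilon^2$.

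The main work is the error analysis, which I would carry out as a perturbation bound for the two successive inversions. Writing $\hat G_i=G_i+E_i$ and $\hat{\bm\beta}=\bm\beta+\bm e$ with $\|E_i\|,\|\bm e\|\le\delta'$ — a uniform bound following from $\delta$-accurate entries, a union bound over the $\Ord{R^2}$ tests, and median-of-means amplification of the success probability — standard resolvent/Weyl estimates show that each $\hat G_i$ remains invertible once $\delta'<1/(2\|G_i^{-1}\|)$, and that $\|\hat{\bm w}-\bm w\|$ is bounded by $\delta'$ times a polynomial in $R$, $\|G_1^{-1}\|$, $\|G_2^{-1}\|$ (equivalently the condition numbers and spectral norms of $\Amat_1,\Amat_2$) and $\|\bm\beta\|$. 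Propagating once more through $\bm x^\ast=\Amat_2^\dagger\bm w$ via $\|\Amat_2^\dagger(\hat{\bm w}-\bm w)\|\le\|\Amat_2\|\,\|\hat{\bm w}-\bm w\|$ gives the final guarantee after setting $\delta'=\varepsilon/\poly(R,\kappa_1,\kappa_2,\dots)$ and hence $\delta=\Ord{\varepsilon/\poly(R,\dots)}$. The delicate point, and the step I expect to be the true obstacle, is keeping this perturbation bound tight enough that the total number of Hadamard-test repetitions stays $\poly(R,\log N,\log M)\cdot 1/\varepsilon^2$: one must verify that the two nested inversions each contribute only multiplicatively (not super-polynomially) in the condition numbers, and that the per-entry precision $\delta$ need only scale linearly, not quadratically, with $\varepsilon$, so that the $1/\varepsilon^2$ exponent is not degraded.
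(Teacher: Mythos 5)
Your proposal is correct and follows essentially the same route as the paper's proof (Theorem \ref{thm:AppLowRank} in Appendix \ref{app:FactorizedLS}): the paper likewise estimates the Gram matrices $\Amat_1^\dagger\Amat_1$ and $\Amat_2\Amat_2^\dagger$ and the vector $\Amat_1^\dagger\bm b$ by $\Ord{R^2}$ Hadamard tests, classically computes $\hat{\bm\alpha}=(\hat{\Amat_2\Amat_2^\dagger})^{-1}(\hat{\Amat_1^\dagger\Amat_1})^{-1}\hat{\Amat_1^\dagger\bm b}$, and outputs the scaled coefficients of the rows of $\Amat_2$, phrased there as solving an over-determined system for $\Amat_1$ followed by an under-determined one for $\Amat_2$. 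The perturbation step you flag as the delicate point is handled in the paper by invoking Proposition 10 of Ref.~\cite{huang2019near} twice (valid since full rank makes both Gram matrices invertible, so no $\lambda$-regularization is needed), which yields exactly the multiplicative condition-number dependence and the $1/\varepsilon^2$ sample scaling you anticipate.
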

The algorithm of this problem, the formal statement and the proof are postponed into Appendix \ref{app:FactorizedLS}.
We also give a similar result for the following ``rank-relaxed" factorized linear system. We relax the rank of $\Amat$ and $\Amat_1$ to be less than $R$, but keep the rank of $\Amat_2$ at $R$, and access $\Amat_1, \Amat_2$ and $\bm b$ with the same assumptions as in Theorem \ref{thm:LowRank}, and obtain similar run time and correctness guarantees. 
The formal statement and proof are in Appendix \ref{app:GeneralFactorizedLS}.

Since the Hadamard test is the main subroutine in the above algorithms, we consider the problem of optimizing the depth of the quantum circuit implementing the Hadamard test.
In the second part of this paper, we first optimize the circuit of the Hadamard test for qubits having an all-to-all connectivity (two-qubit gates can be applied on any two qubits) with limited ancillas, as stated in the following theorem.

\begin{theorem}
Suppose we are given an $n$-qubit unitary $\Umat$ which has a $d$-depth circuit $\Ccal$. There exists an $n+s$-qubit and $\Ord{\log s + d\log (n/s) + d}$-depth circuit for Hadamard test of $\cbra{\ket{0^n}, \Umat}$, where $s\in[n]$.
\label{thm:sancillasIntro}
\end{theorem}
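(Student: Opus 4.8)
The plan is to reduce the Hadamard test to a single application of the controlled unitary $c\text{-}\Umat$ (conjugated by Hadamard gates on one ancilla together with a final measurement, all of $\Ord{1}$ overhead), and then to implement $c\text{-}\Umat$ with low depth by controlling the circuit $\Ccal$ layer by layer. Writing $\Ccal = L_d \cdots L_1$ as a product of $d$ depth-one layers, each $L_t$ is a set of gates acting on disjoint qubits, so $c\text{-}\Umat = (c\text{-}L_d)\cdots(c\text{-}L_1)$. The $\Ord{d}$ additive term in the target bound will come from the $d$ rounds in which the (already-controlled) gates of the layers are actually executed; the remaining budget $\Ord{\log s + d\log(n/s)}$ must pay for making the control value available wherever it is needed.

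First I would treat the bottleneck explicitly: within a single layer up to $n/2$ disjoint gates all share the one control qubit, and since a qubit can participate in only one two-qubit gate per time step, the naive controlled implementation serializes these into depth $\Ord{n}$ per layer, recovering the $\Ord{dn}$ of the standard test. The remedy is a fan-out: copy the control value onto several qubits so that many controlled gates can fire in parallel. Using the $s$ ancillas I would build a balanced binary fan-out tree (a $\CNOT$ tree) that creates $s+1$ copies of the control in $\Ord{\log s}$ depth; this tree is created once and uncomputed once at the very end, which accounts for the standalone $\Ord{\log s}$ term. Partitioning the $n$ system qubits into $s+1$ blocks of size $\Ord{n/s}$, one copy is made responsible for each block.

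The heart of the argument, and the step I expect to be the main obstacle, is to complete the control of one layer in only $\Ord{\log(n/s)}$ depth per layer rather than the $\Ord{n/s}$ that a direct batched approach yields: a single control copy cannot serially service its $\Ord{n/s}$ gates in logarithmic depth, so the construction must, within each block, broadcast the control value to all of its gates simultaneously through a further $\Ord{\log(n/s)}$-depth fan-out, apply all controlled gates in parallel, and then uncompute the broadcast so that the ancillas are returned to a clean state and reused by the next layer. The delicate point is that each block contains no spare clean ancillas — its qubits carry the very data that the layer acts on — so the intra-block fan-out and its uncomputation must be arranged reversibly and must not corrupt that data; this is where the careful, hierarchical use of the all-to-all connectivity concentrates, and it is the part of the proof that demands genuine ingenuity rather than bookkeeping. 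Summing the one-time $\Ord{\log s}$, the per-layer $\Ord{\log(n/s)}$ over $d$ layers, and the $\Ord{d}$ gate rounds yields the claimed $\Ord{\log s + d\log(n/s) + d}$ depth on $n+s$ qubits.

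Finally, I would verify asymptotic optimality at $d=1$, where the bound reads $\Ord{\log s + \log(n/s)} = \Ord{\log n}$. A matching lower bound follows from a light-cone argument: a depth-one circuit may contain $\Theta(n)$ gates all controlled by the same qubit, so the control's influence must reach $\Theta(n)$ qubits, and since each time step at most doubles the set of qubits carrying that influence, any correct implementation needs depth $\Omega(\log n)$, matching the upper bound independently of $s$.
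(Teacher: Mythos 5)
You have reproduced the paper's overall architecture correctly---reduce to Control-$\Ccal$, copy the control onto $s$ ancillas in $\Ord{\log s}$ depth (and uncompute at the end), partition the $n$ system qubits into $s$ blocks of size $\Ord{n/s}$, and control each layer blockwise---but the step you yourself flag as ``the part of the proof that demands genuine ingenuity'' is exactly the step you have not supplied, so the proposal has a genuine gap. Your picture of the missing step (broadcast the control within a block to clean ancillas, fire the gates, uncompute) cannot work as stated, since, as you observe, there are no clean ancillas inside a block; the paper's resolution is different and rests on two concrete ingredients. First (Claim \ref{claim:CnotOneCTGate}), the multi-target CNOT $\prod_{j}\CNOT_{1,j}$ on $m$ qubits is a \emph{linear reversible} map and can be implemented in depth $2\ceil{\log m}-1$ with \emph{no ancillas at all}: one temporarily ``corrupts'' the data-carrying targets by a cascade of parallel row additions and then restores them, which is legitimate precisely because the desired final map $x_j \mapsto x_j \oplus x_1$ is itself linear. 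Second (Lemma \ref{lem:OneLayerLog}), a controlled depth-one layer is reduced to this primitive: controlled single-qubit rotations are rewritten via the standard decomposition $\Control$-$\Rmat(\theta) = (\text{1-qubit gates})\cdot \CNOT \cdot (\text{1-qubit gates}) \cdot \CNOT \cdots$, so the shared control enters only through a multi-target CNOT; and the Toffolis arising from controlled CNOTs are handled by the borrowed \emph{dirty}-ancilla trick of Barenco \emph{et al.}: split the layer's CNOT gates into two halves, use the data qubits of one half as borrowed ancillas for the other half's Toffolis, and note that the shared control again touches only those borrowed qubits, through multi-target CNOTs. This yields $\Ord{\log(n/s)}$ depth per layer acting directly on data qubits, with no clean workspace needed inside a block.

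Your closing lower-bound sketch for $d=1$ (light-cone doubling) is fine in spirit and close to the paper's argument (which exhibits $\Umat=\bigotimes_j e^{i\theta_j}\Imat$ and argues that aggregating $\sum_j \theta_j$ needs depth $\Omega(\log n)$), but optimality is a side remark; the substantive deficiency is the unproven per-layer construction above, without which the claimed bound $\Ord{\log s + d\log(n/s) + d}$ does not follow.
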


NISQ devices only have restricted  connectivity of  qubits. The connections of the qubits can be represented as a graph, and two-qubit gates can only operate on two qubits which are connected in the graph.
There is a series of recent works mapping unitary circuits without constraints to quantum devices with some connectivity graphs~\cite{kissinger2020cnot,nash2020quantum,wu2019optimization,itoko2020optimization}. 
The graph for the existing quantum devices~\cite{arute2019quantum,IBMQ,ye2019propagation} is in all cases the planar graph. Hence, we consider the optimization of Hadamard test on such a graph, as stated in the following theorem.

\begin{theorem}
Suppose we are given an $n$-qubit unitary $\Umat$ which has a $d$-depth circuit $\Ccal$ and we have to re-design it such that the qubits are constrained according to an $l_1 \times l_2$ lattice, where $l_1l_2 = n$. There exists 
an $\Ord{d\pbra{l_1 + l_2}}$-depth circuit for Hadamard test of $\cbra{\ket{0^n}, \Umat}$ under the above lattice.
\label{thm:2DControlUIntro}
\end{theorem}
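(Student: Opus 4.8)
The plan is to reduce the Hadamard test to the implementation of the controlled unitary $\Control\text{-}\Umat$ on the lattice and to bound its depth. Recalling the structure of the test, it suffices to apply a Hadamard (and, for the imaginary part, an $S^\dagger$) to the single control ancilla, then $\Control\text{-}\Umat$, then a Hadamard, and finally measure; thus the entire problem is to realize $\Control\text{-}\Umat$ within depth $\Ord{d(l_1+l_2)}$ on the $l_1\times l_2$ lattice. Writing $\Ccal=L_d L_{d-1}\cdots L_1$ with each $L_t$ a layer of two-qubit gates on disjoint qubit pairs, I would use $\Control\text{-}\Ccal=\prod_{t=1}^{d}\Control\text{-}L_t$, so it is enough to realize each controlled layer in depth $\Ord{l_1+l_2}$ and concatenate. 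This mirrors the fan-out idea behind the all-to-all result of Theorem \ref{thm:sancillasIntro}, except that the $\log$ factors there are replaced by the lattice diameter.

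For a single layer I would proceed in three phases. First, since the pairs acted on by $L_t$ are in general long-range, I would route the qubits with a mesh permutation network so that the two qubits of every gate become lattice-adjacent; arbitrary permutations, hence arbitrary matchings, on an $l_1\times l_2$ mesh can be realized with SWAPs in depth $\Ord{l_1+l_2}$. Second, I would make the control value available next to each gate by a fan-out wave: propagate, and at the end uncompute, copies of the control bit across the lattice by a spanning structure whose depth equals the lattice diameter $\Ord{l_1+l_2}$. Third, with a control copy adjacent to one endpoint $a$ of each gate $g$ on a pair $(a,b)$, each controlled gate becomes a three-qubit operation supported on the path control$-a-b$, which decomposes into $\Ord{1}$ nearest-neighbor two-qubit gates; all such gates in the layer act on disjoint supports and run in parallel. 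Reversing the fan-out and the routing, one layer costs $\Ord{l_1+l_2}$, and the $d$ layers give the claimed $\Ord{d(l_1+l_2)}$ depth.

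The main obstacle is the second phase: a single control ancilla must effectively govern $\Theta(n)$ gates in one layer under nearest-neighbor constraints, and doing this without either a depth or an ancilla blow-up is the crux. I expect to resolve it by pipelining the control propagation, maintaining only a one-dimensional front of control copies (an $\Ord{l_1}$-sized line) that sweeps across the remaining $\Ord{l_2}$ direction, applying one row of controlled gates as the front passes and uncomputing behind it, so that each layer uses depth $\Ord{l_1+l_2}$ while keeping the ancilla overhead small, consistent with the $(n+1)$-qubit count quoted for this setting. Care is needed to interleave this sweep with the routing permutation so that the control front and the gate pairs meet at the right times.

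Finally, I would argue asymptotic tightness for $d=1$ by a lightcone bound. For a generic single layer the value $\bra{0}\Ccal\ket{0}$ factorizes as a product of contributions from all $\Theta(n)$ gates, so the statistics of the final ancilla measurement must depend on the action of gates on every qubit. Hence in any lattice circuit computing $\bra{0}\Ccal\ket{0}$ there must be a causal path of nearest-neighbor gates linking the ancilla to each qubit, including one at lattice distance $\Omega(l_1+l_2)$ from it; otherwise the estimate would be independent of a gate outside the ancilla's lightcone. Since each circuit layer extends a qubit's causal reach by one lattice step, the depth is $\Omega(l_1+l_2)$, matching the construction up to constants.
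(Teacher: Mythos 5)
Your skeleton (reduce the Hadamard test to Control-$\Ccal$, process $\Ccal$ layer by layer, route each layer's gate pairs to adjacency by an $\Ord{l_1+l_2}$-depth mesh permutation, fan the control out, apply local controlled gates, then uncompute) matches the paper's proof via Lemma \ref{lem:2DControlU} and Claim \ref{claim:perm2Dim}. But the step you yourself flag as ``the crux'' --- making one control qubit govern $\Theta(n)$ gates in a single layer --- is exactly where your argument has a genuine gap. Your ``one-dimensional front of control copies'' has nowhere to live: the device has only $n+1$ qubits, the single ancilla is the control itself, and the $n$ data qubits are in an unknown, generally entangled state after the first layer of $\Ccal$. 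A qubit in an unknown state cannot be turned into a copy of the control bit and later restored; CNOTing the control into a data qubit holding $x$ produces $x\oplus c$, not $c$, and a gate controlled on $x\oplus c$ is not the gate you want. So the pipelined sweep, as stated, cannot be implemented, and no amount of care in ``interleaving the sweep with the routing'' supplies the missing clean ancillas. (The alternative of physically walking the lone control across the lattice with SWAPs gives depth $\Theta(n)=\Theta(l_1l_2)$ per layer, not $\Ord{l_1+l_2}$.)

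The paper closes this gap with two concrete tools. First (Lemma \ref{lem:OneLayerLog}, Fig. \ref{cir:ToffoliEquivalent}), each controlled layer is realized with \emph{borrowed} ancillas via the Barenco double-Toffoli identity: the layer's gates are split into two halves, the qubits of one half serve as borrowed ancillas (arbitrary state, restored at the end) for controlling the other half, and the unknown initial values cancel because each borrowed qubit is used as a control twice, sandwiched between two fan-out CNOTs from the true control. Second (Lemma \ref{lem:CNOTMap2lattice}), the resulting fan-out circuits $\CNOT_{1,i_1},\dots,\CNOT_{1,i_k}$ are implemented on the lattice in depth $\Ord{l_1+l_2}$ by a spanning-tree construction that XORs values down and back up the tree so that exactly the target qubits receive the control bit while every interior tree vertex --- which is a data qubit, not a clean ancilla --- is returned to its original state; this is not the naive ``copy along a spanning structure'' you invoke. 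Your closing lower-bound remark for $d=1$ is fine in spirit and matches the paper's Theorem \ref{thm:LowerboundHadamardIntro} (proved there with the product unitary $\bigotimes_j e^{i\theta_j}\Imat$), but the upper bound, which is the actual content of the theorem, is not established without the borrowed-ancilla mechanism.
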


Theorem \ref{thm:2DControlUIntro} is obtained by mapping the circuit of Theorem \ref{thm:sancillasIntro} to the lattice. We also generalize it to any connected graph with a Hamiltonian path, as stated in the following corollary.

\begin{corollary}
Suppose we are given an $n$-qubit unitary $\Umat$ which has a $d$-depth circuit $\Ccal$ and we have to re-design it such that the qubits are constrained according to a graph $G$ which has a Hamiltonian path. There exists an $\Ord{dn}$-depth circuit for Hadamard test of $\cbra{\ket{0^n}, \Umat}$ under this graph.
\label{coro:TopoAnyGCtrlUIntro}
\end{corollary}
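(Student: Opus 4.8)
The plan is to reduce the general graph $G$ to the one-dimensional line already handled by Theorem \ref{thm:2DControlUIntro}. A Hamiltonian path on $G$ visits every qubit exactly once, so I would first relabel the qubits $1,2,\dots,n$ in the order in which the path visits them. By the definition of a Hamiltonian path, qubits $i$ and $i+1$ are joined by an edge of $G$ for every $i$, so the relabeling exposes a nearest-neighbor line as a spanning subgraph of $G$. Since any two-qubit gate that a circuit applies to an adjacent pair on this line is also admissible on $G$ (the extra edges of $G$ can only enlarge the set of permitted gates, never shrink it), it suffices to build the Hadamard-test circuit using only the line edges; it will then automatically be valid on $G$.

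The key observation is that a line of $n$ vertices is exactly a degenerate $1\times n$ lattice with $l_1=1$ and $l_2=n$. I would therefore invoke Theorem \ref{thm:2DControlUIntro} with these parameters, placing the control ancilla at one end of the path (or as an appended vertex so that the total chain still forms a line). The depth bound of that theorem specializes to $\Ord{d(l_1+l_2)}=\Ord{d(1+n)}=\Ord{dn}$, which is precisely the claimed bound.

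Alternatively, to make the reduction self-contained, I would re-trace the construction underlying Theorem \ref{thm:2DControlUIntro} directly on the line: map the $s$-ancilla all-to-all circuit of Theorem \ref{thm:sancillasIntro} onto the Hamiltonian path. On a line, both the fan-out that broadcasts the control of the Hadamard test to the neighborhood of each gate and the routing (via odd-even transposition / sorting-network SWAP layers) needed to bring the two partners of each gate of $\Ccal$ into adjacency cost $\Ord{n}$ depth; performing this for each of the $\Ord{d}$ layers of $\Ccal$, together with the final fan-in and uncomputation of the control, accumulates to $\Ord{dn}$.

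The main thing to verify carefully is that the lattice construction of Theorem \ref{thm:2DControlUIntro} remains valid in the degenerate case $l_1=1$, where the only available connections lie along a single row. Concretely, I expect the only subtlety to be the placement of the control ancilla and the confirmation that the broadcast and uncomputation of the control along the line, interleaved with the routed layers of $\Ccal$, still realizes exactly controlled-$\Umat$ while staying within $\Ord{dn}$ depth; the routing argument is standard, and the spanning-subgraph remark then makes the extension from the line to all of $G$ immediate.
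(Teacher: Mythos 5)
Your proposal is correct and takes essentially the same route as the paper, which obtains the corollary directly from Lemma \ref{lem:2DControlU} by setting $l_1 = n$, $l_2 = 1$ (equivalent to your $1\times n$ choice), giving depth $\Ord{d(l_1+l_2)} = \Ord{dn}$. Your explicit relabeling of the qubits along the Hamiltonian path and the spanning-subgraph observation merely spell out what the paper leaves implicit, so there is no gap.
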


We also give a lower bound for the quantum depth of Hadamard test in Theorem \ref{thm:LowerboundHadamardIntro}.
 By Theorem \ref{thm:LowerboundHadamardIntro} when $d = \Ord{1}$, the optimized depth of Hadamard test in Theorems \ref{thm:sancillasIntro}, \ref{thm:2DControlUIntro}, and Corollary \ref{coro:TopoAnyGCtrlUIntro} are all optimal.
\begin{theorem}
There exists an $n$-qubit unitary $\Umat\in \Cbb^{2^n\times 2^n}$ such that for any quantum device under the graph $G$ which has diameter\footnote{Diameter of the graph $G(V,E)$: the maximum distance of any two vertices in $G$, where the distance of two vertices is the minimum path connected these two vertices.} $D$, there needs at least $\Omega(\max\cbra{\log n, D})$-depth quantum circuit to generate $\bra{0^{n}}\Umat\ket{0^{n}}$ under graph $G$.
\label{thm:LowerboundHadamardIntro}
\end{theorem}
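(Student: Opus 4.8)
The plan is to establish the two lower bounds $\Omega(\log n)$ and $\Omega(D)$ separately and then combine them by exhibiting a single unitary $\Umat$ that forces both. My candidate is the GHZ-preparation unitary, i.e. an $\Umat$ with $\Umat\ket{0^n}=\frac{1}{\sqrt2}\pbra{\ket{0^n}+\ket{1^n}}$, so that $\bra{0^n}\Umat\ket{0^n}=1/\sqrt2$ is a fixed nontrivial value while $\Umat\ket{0^n}$ carries maximal long-range correlation among all $n$ qubits. The entire argument rests on a causal-cone (lightcone) principle for circuits respecting the architecture $G$: since every gate acts on at most two qubits that are adjacent in $G$, one circuit layer can enlarge the backward causal cone of a qubit by at most a factor of two, and can propagate influence across at most one edge of $G$.

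For the $\Omega(D)$ bound, let $u^*,v^*$ be two vertices realizing the diameter, $\dist(u^*,v^*)=D$, and let $\Umat$ entangle precisely these two qubits (the GHZ choice does this automatically). Any circuit reproducing the correct Hadamard-test statistics must create correlation between $u^*$ and $v^*$ out of the product state $\ket{0^n}$. I would argue that a depth-$T$ circuit can only correlate two qubits whose backward cones overlap, i.e. there is a qubit within graph-distance $T$ of both; this forces $D\le 2T$ and hence $T=\Omega(D)$. To make this independent of the particular circuit ``related to $\Umat$,'' I would invoke an indistinguishability argument: if $T$ is too small the reduced state on $\cbra{u^*,v^*}$ is a product state and cannot separate $\Umat$ from a decoupled variant with a different overlap value, contradicting correctness of the estimate.

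For the $\Omega(\log n)$ bound, I would exploit the fan-out structure of the controlled-$\Umat$ that the Hadamard test applies: the control ancilla becomes entangled with all $n$ system qubits, since for the GHZ choice the ancilla must steer every qubit between $\ket{0}$ and the globally correlated branch. Hence each of the $n$ system qubits lies in the forward causal cone of the single ancilla, and since each layer at most doubles this cone we obtain $2^T\ge n$, i.e. $T\ge \log_2 n=\Omega(\log n)$. This bound is geometry-free, so it remains binding even when $D$ is small (for instance under all-to-all connectivity, where $D=\Ord{1}$), which is precisely why the theorem states the maximum of the two quantities. Combining the two parts, the GHZ-preparation $\Umat$ simultaneously forces depth $\Omega(D)$ through the correlation of the diameter endpoints and depth $\Omega(\log n)$ through the ancilla fan-out, yielding the claimed $\Omega(\max\cbra{\log n,D})$.

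The main obstacle I anticipate is the final rigor step in each part: translating ``the estimator outputs the correct scalar $\bra{0^n}\Umat\ket{0^n}$'' into ``the circuit must physically build the corresponding correlation and fan-out,'' so as to rule out a clever shortcut circuit that never creates them. I expect to settle this with the indistinguishability/adversary argument sketched above, showing that a too-shallow circuit has measurement statistics determined by a causal region too small to separate $\Umat$ from a nearby unitary of different overlap, rather than by assuming the circuit is literally the textbook Hadamard test.
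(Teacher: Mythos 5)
Your choice of the GHZ-preparation unitary is where the argument breaks, and the obstacle you flag in your last paragraph is not a technicality but fatal to this choice. For your $\Umat$ the target quantity $\bra{0^n}\Umat\ket{0^n}=1/\sqrt{2}$ is a fixed, a priori known constant. The theorem quantifies over circuits that generate this number for a given $\Umat$, and such a circuit is allowed to depend arbitrarily on $\Umat$ (it is designed with full knowledge of it); hence a depth-zero protocol that performs no quantum operations and simply reports $1/\sqrt{2}$ already succeeds. Nothing in the task forces an estimator to prepare $\Umat\ket{0^n}$, to correlate the diameter endpoints $u^*,v^*$, or to entangle an ancilla with all $n$ system qubits --- those are properties of one particular estimator (the textbook Hadamard test), not of every circuit whose statistics yield the overlap. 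Your proposed repair, the indistinguishability argument against a ``nearby unitary of different overlap,'' cannot close this hole: the nearby unitary is a different instance, and the estimator is under no obligation to work for it; since the answer for your fixed $\Umat$ is known in advance, there is nothing a shallow circuit needs to distinguish. A lightcone argument can lower-bound the depth needed to prepare a GHZ state or to implement controlled-$\Umat$, but not the depth needed to output the scalar $1/\sqrt{2}$.

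The paper avoids exactly this trap by choosing a unitary whose overlap carries information: $\Umat=\bigotimes_{j=1}^{n}\Rmat_j$ with $\Rmat_j=e^{i\theta_j}\Imat$, so that $\bra{0^n}\Umat\ket{0^n}=e^{i\sum_{j}\theta_j}$ is a nontrivial function of $n$ phases, where the phase $\theta_j$ enters the computation only at qubit $j$. Any procedure whose output determines the overlap must therefore aggregate all $n$ phases; with two-local gates this costs $\Omega(\log n)$ depth (a fan-in/binary-tree bound), and on a graph of diameter $D$ there are two phases $\theta_j,\theta_k$ sitting at vertices realizing the diameter which can only be combined after $\Omega(D)$ layers. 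If you want to salvage your write-up, the fix is to replace the single fixed unitary by such a parameterized family (equivalently, a unitary whose defining data is unknown to the circuit designer except through local gates), so that the value to be produced genuinely varies over instances; then your causal-cone machinery has something to bite on, becoming an information-propagation bound rather than an unjustified claim about what an estimator ``must'' physically build.
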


The rest of the paper is organized as follows. Sec.~\ref{sec:Preliminaries} introduces notations, assumptions, and definitions.
Sec.~\ref{sec:Hybrid Alg} introduces the hybrid algorithms for linear systems with skewed dimensions.
Sec.~\ref{sec:CirHadamardTest} introduces the circuit of the Hadamard test and give two depth-optimal algorithms for the Hadamard test, for the cases of all-to-all connectivity and 2D nearest-neighbor connectivity, respectively.
Sec.~\ref{sec:discussion} concludes with a discussion of open problems and directions for future work.

\section{Preliminaries\label{sec:Preliminaries}}
We define notation and linear algebra basics.
The quantum state $\ket{a}:=\frac{\bm a}{\vabs{\bm a}}$ represents the  vector $\bm a$ with normalization $\vabs{\bm a}$ in $\ell_2$-norm. 
The set of integers in a range is denoted as $[n]=\{1,2,\cdots, n\}$. 
Let $\Amat^{\dagger}$ denote the conjugate transpose of matrix $\Amat$ and $\kappa(\Amat) =: \kappa$ denote the condition number of $\Amat$ which is the ratio of the largest to smallest singular value of $\Amat$.
$N(\Amat)$ is the null space of $\Amat$, and $P_{L}$ is the orthogonal projector on space $L$.
The matrix $\Amat\in \Cbb^{N\times M}$, can be written as $\Amat := \Umat \bm \Sigma \Vmat^{\dagger}$, where $\Umat\in\Cbb^{N\times N}, \Vmat\in \Cbb^{M\times M}$ are two unitary matrices, and $\bm \Sigma \in \Cbb^{N\times M}$ is a rectangular diagonal matrix with entries $\geq 0$. Define $\Amat^{-1} := \Vmat \bm\Sigma^{-1} \Umat^{\dagger}$ be the pseudo-inverse of matrix $\Amat$, where $\bm \Sigma^{-1}$ is formed by replacing every  non-zero diagonal element by its reciprocal. 
Here, $\vabs{\Amat}$ denotes the $\ell_2$-norm of $\Amat$, and it equals the maximum singular value of $\Amat$. Hence $\vabs{\Amat}=\vabs{\Amat^\dagger}$, and $\vabs{\Amat}\vabs{\Amat^{-1}}=\kappa(\Amat)$.
For any matrix $\Amat\in \Cbb^{N\times M}$ and vector $\bm b\in \Cbb^{N}$, $\vabs{\Amat \bm b}\leq \vabs{\Amat}\vabs{\bm b}$.
We call a linear system $\Amat \bm x = \bm b$ a \emph{consistent linear system} if $\Amat\Amat^{-1}\bm b = \bm b$, otherwise we call it an \emph{inconsistent linear system}.
When we use $\log N$ to represent the number of qubits, it represents the smallest integer which is greater than $\log N$ with a little abuse of symbols.

In the quantum setting for solving linear systems,
we make several assumptions for the quantum access the matrix $\Amat$. Each normalized column of the  matrix $\Amat\in \Cbb^{N\times M}$ shall be given by a $\poly (\log N)$-depth quantum circuit, and the $\ell_2$-norms of each column shall be given classically.
\begin{assumption}
For $j\in [M]$, assume knowledge of norms
$\vabs{\bm a_{j}}\geq 0$ and 
 quantum circuits $\Umat_{j} \in \Cbb^{N \times N}$ such that $\Umat_{j} \ket{ 0^{\log N}} =: \ket{ a_{j}}\in\Cbb^{N}$.
 The unitaries $\Umat_{j} $ shall have a circuit depth of at most $\poly(\log N)$.
The vectors $\ket{ a_{j}}$ and norms define the columns of a matrix 
$\Amat := \sum_{1\leq j \leq M} \vabs{\bm a_{j}} \ket{ a_j}\bra{j}\in \Cbb^{N\times M}$. Similarly, the vectors $\ket{ a_{j}}$ and norms define the rows of the matrix $\Amat^\dagger \in \Cbb^{M\times N}$.
\label{assume:quantummatrix}
\end{assumption}

Next, for the case of a linear system $\Amat \bm x = \bm b$, we assume quantum access a vector $\bm b\in \Cbb^N$ with a $\polylog N$  depth quantum circuit. For the linear system 
$\Amat^\dagger \bm y = \bm c$, we only require classical access the right-hand side. 
\begin{assumption}
We are given a unitary $\Umat_{\bm b}$ such that $\Umat_{\bm b} \ket{ 0^{\log N}} = \ket{ b}=\frac{\bm b}{\vabs{\bm b}}$, and the knowledge of the norm of $\vabs{\bm b} > 0$. 
The unitary shall have a circuit depth of at most $\polylog N$.
\label{assume:vecb}
\end{assumption}

We now define the problems investigated in this work. 
We focus on solving linear systems with the above assumptions. The first problem is associated with the over-determined case when there are less variables than constraints.

\begin{problem}
Given matrix $\Amat \in \mathbb{C}^{N\times M}$ according to Assumption \ref{assume:quantummatrix}, vector $\bm b\in \mathbb{C}^{N}$ according to Assumption \ref{assume:vecb}, and  $\varepsilon>0$, find an approximation $\hat{\bm x}\in \mathbb{C}^{M}$ of the optimal solution such that $\vabs{\Amat \hat{\bm x} - \bm{b}} -\min_{\bm x}\vabs{\Amat \bm x - \bm{b}}\leq \varepsilon$.
\label{pro:overdetermined}
\end{problem}
The second problem is associated with the under-determined case, when there are more variables than constraints. 
\begin{problem}
Given matrix $\Amat \in \mathbb{C}^{N\times M}$ according to Assumption \ref{assume:quantummatrix}, vector $\bm c\in \mathbb{C}^{M}$ given classically, and  $\varepsilon>0$, find an approximation $\hat{\bm y}\in \mathbb{C}^{N}$ of the optimal solution such that $\vabs{\Amat^\dagger  \hat{\bm y} - \bm{c}} -\min_{\bm y}\vabs{\Amat^\dagger \bm y - \bm{c}}\leq \varepsilon$. 
\label{pro:underdetermined}
\end{problem}

Both Problems \ref{pro:overdetermined} and \ref{pro:underdetermined}  can be solved in polylog time in $N$ under some reasonable assumptions. 
Next, we introduce a factorized linear system in which the matrix is given as a product of two matrices. The problem is defined as follows.
\begin{problem}[Factorized linear system]
Let $N,M \geq R$. Let matrix $\Amat \in \mathbb{C}^{N\times M}$ be a rank $R$ matrix such that 
$\Amat = \Amat_{1} \Amat_{2}$. The rank $R$ matrices
$\Amat_{1}^{\dagger} \in \Cbb^{R\times M}$ and $\Amat_{2} \in \Cbb^{R\times N}$ are given according to Assumption \ref{assume:quantummatrix}, vector $\bm{b}\in \mathbb{C}^{N}$ according to Assumption \ref{assume:vecb}. Find an approximation $\hat{\bm x}\in \mathbb{C}^{N}$ such that $\vabs{\Amat\hat{\bm x} - \bm b} - \min_{\bm x}\vabs{\Amat \bm x - \bm{b}} \leq \varepsilon$.
\label{pro:lowrank}
\end{problem}

The following consistent linear system serves for the intermediate process of solving Problems \ref{pro:overdetermined} and \ref{pro:underdetermined}. Specifically, each element of the square matrix $\Vmat$ in the following linear system is related to the expectation of a sum of Bernoulli random variables.

\begin{problem}
\label{pro:QuadProApp}
Let $\Delta >0$. Let $\Vmat\in\Cbb^{K\times K}$ be a positive semi-definite matrix and $\bm q\in \Cbb^{K}$ be a vector, where the entries are a product of a bounded complex number and a bounded real number. Formally, the matrix is defined by $V_{ij} = \xi_{ij}v_{ij}$, where $v_{ij} \in \Cbb$ with $\abs{v_{ij}}\leq 1$ and $\xi_{ij} \in \Rbb$ with $| \xi_{ij} |\leq \Delta$.
The vector is defined by
$q_{i} = \nu_{i}u_{i}$, where 
$u_{i} \in \Cbb$ with
$\abs{u_{i}}\leq 1$ and 
$\nu_{i} \in \Rbb$ with $| \nu_{i} |\leq \Delta$.
Let $\hat{v}_{ij},\hat{u}_i$ be random variables with expectation values $v_{ij} =\mathbb E[\hat v_{ij}]$ and $u_i=\mathbb E[\hat u_i]$ respectively.
The real and imaginary parts of $\hat{v}_{ij}$ and $\hat{u}_i$ have the form $2(B_1+\cdots + B_S)/S - 1$ for some particular $S$ independent and identical Bernoulli trials $B_1,\dots, B_S$.
These random variables define the random matrix $\hat{\Vmat} \in \Cbb^{K\times K}$
with elements $\hat{V}_{ij} = \xi_{ij}\hat{v}_{ij}$ and the random vector $\hat{\bm q} \in \Cbb^{K}$ with elements 
$\hat{q}_i = \nu_i \hat{u}_i$, for which $\Vmat = \mathbb E[\hat \Vmat]$ and ${\bm q} = \mathbb E[\hat{\bm q}]$, respectively. Let $\bm \alpha^\ast := \argmin_{\bm \alpha}\Vert \Vmat \bm \alpha - \bm q \Vert$.
With these definitions, the problem statement is as follows.
Given all $\xi_{ij}$  and $\nu_i$,  sampling access to  $\hat v_{ij}$ and $\hat u_i$ via the Bernoulli trials, $\eta>0$, and $\Delta\geq 0$, produce a vector $ \hat{\bm \alpha}$ such that $\vabs{\hat{\bm \alpha} - \bm \alpha^\ast}\leq \eta$ with high probability. 
\end{problem}

\section{Hybrid algorithms for skewed linear systems\label{sec:Hybrid Alg}}
In this section, we give two hybrid algorithms for Problems \ref{pro:overdetermined} and \ref{pro:underdetermined} with Assumptions \ref{assume:quantummatrix} and \ref{assume:vecb}.
We first give technical lemmas in the following subsection.

\subsection{Technical lemmas for over- and under-determined linear systems\label{subsec:Technical}}

The following lemmas show that when we slightly perturb the diagonal elements of the matrix of a consistent linear system, the difference of the solutions of these two systems is bounded. This fact is useful when proving the error bounds for the algorithms in this work.

\begin{lemma}
Let a positive semi-definite matrix $\Vmat\in \Cbb^{K\times K}$ and vector $\bm q\in \Cbb^{K}$ be such that $\Vmat^{-1}\Vmat \bm q = \bm q$. 
Also let the shifted matrix be $\Wmat := \Vmat + \lambda \Imat$, where $\lambda > 0$. 
The Euclidean distance between the pseudo-inverse solutions of these two linear systems, that is 
$\vabs{\Vmat^{-1}\bm q - \Wmat^{-1}\bm q}$, is upper-bounded by 
$\lambda \vabs{\Vmat^{-1}}^2  \vabs{\bm q}$.
\label{lem:PerturbBounded}
\end{lemma}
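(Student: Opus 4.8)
The plan is to diagonalize everything simultaneously. Since $\Vmat$ is positive semi-definite, I would write its spectral decomposition $\Vmat = \sum_i \sigma_i \ket{v_i}\bra{v_i}$ with eigenvalues $\sigma_i \geq 0$ and an orthonormal eigenbasis $\{\ket{v_i}\}$. The pseudo-inverse is then $\Vmat^{-1} = \sum_{\sigma_i > 0}\sigma_i^{-1}\ket{v_i}\bra{v_i}$, and the shifted matrix shares the same eigenvectors, $\Wmat = \sum_i (\sigma_i + \lambda)\ket{v_i}\bra{v_i}$, so $\Wmat^{-1} = \sum_i (\sigma_i + \lambda)^{-1}\ket{v_i}\bra{v_i}$ is a genuine inverse because $\lambda > 0$ makes $\Wmat$ positive definite. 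The consistency hypothesis $\Vmat^{-1}\Vmat \bm q = \bm q$ is precisely the statement that $\bm q$ lies in the range of $\Vmat$, i.e. $\bm q = \sum_{\sigma_i > 0} q_i \ket{v_i}$ with no component along eigenvectors having $\sigma_i = 0$. This is the fact I would use to discard the null-space contributions.

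With this in hand I would expand the difference directly, $\Vmat^{-1}\bm q - \Wmat^{-1}\bm q = \sum_{\sigma_i > 0}\left(\sigma_i^{-1} - (\sigma_i+\lambda)^{-1}\right)q_i\ket{v_i}$, where each scalar factor simplifies to $\lambda/(\sigma_i(\sigma_i+\lambda))$. By orthonormality the squared norm is $\sum_{\sigma_i>0}\lambda^2 |q_i|^2/(\sigma_i^2(\sigma_i+\lambda)^2)$. Since $\sigma_i + \lambda > \sigma_i \geq \sigma_{\min}$, where $\sigma_{\min}$ denotes the smallest nonzero eigenvalue, each coefficient is at most $\lambda^2/\sigma_{\min}^4 = \lambda^2 \vabs{\Vmat^{-1}}^4$, using $\vabs{\Vmat^{-1}} = 1/\sigma_{\min}$. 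Pulling this uniform bound out of the sum and using $\sum_{\sigma_i>0}|q_i|^2 = \vabs{\bm q}^2$ gives $\vabs{\Vmat^{-1}\bm q - \Wmat^{-1}\bm q}^2 \leq \lambda^2\vabs{\Vmat^{-1}}^4\vabs{\bm q}^2$, and taking square roots yields the claimed estimate.

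A coordinate-free alternative proceeds via the resolvent identity. Restricting to $R := \mathrm{range}(\Vmat)$, on which both $\Vmat$ and $\Wmat$ act invertibly and to which $\bm q$ belongs by consistency, one has $\Vmat^{-1} - \Wmat^{-1} = \Vmat^{-1}(\Wmat - \Vmat)\Wmat^{-1} = \lambda\,\Vmat^{-1}\Wmat^{-1}$, whence submultiplicativity gives $\vabs{\Vmat^{-1}\bm q - \Wmat^{-1}\bm q} \leq \lambda\vabs{\Vmat^{-1}}\,\vabs{\Wmat^{-1}\bm q}$, followed by $\vabs{\Wmat^{-1}\bm q}\leq\vabs{\Vmat^{-1}}\vabs{\bm q}$ because the smallest eigenvalue of $\Wmat$ on $R$ is $\sigma_{\min} + \lambda \geq \sigma_{\min}$. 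I expect the only real obstacle to be the bookkeeping around the pseudo-inverse: one must check that $\Wmat$ preserves both $R$ and its orthogonal complement, so that $\Wmat^{-1}$ agrees with the restricted inverse on $R$, and that the consistency condition genuinely confines $\bm q$ to $R$. Once the null space is correctly quarantined, the remaining estimate is routine.
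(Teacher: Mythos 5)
Your proposal is correct. Your first argument is essentially the paper's own proof: both diagonalize $\Vmat$, use the consistency condition $\Vmat^{-1}\Vmat\bm q=\bm q$ to confine $\bm q$ to the span of the eigenvectors with nonzero eigenvalues, and bound the difference of reciprocals $\sigma_i^{-1}-(\sigma_i+\lambda)^{-1}=\lambda/\pbra{\sigma_i(\sigma_i+\lambda)}\leq\lambda\vabs{\Vmat^{-1}}^2$; the paper merely packages this as the operator-norm bound $\vabs{\bm\Sigma^{-1}-\bm\Sigma'^{-1}}$, evaluated at the smallest nonzero eigenvalue, while you work coefficient by coefficient in the eigenbasis --- the same computation. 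Your second, resolvent-identity argument is genuinely different from the paper's and is also correct, provided you do the bookkeeping you rightly flag: $\Wmat=\Vmat+\lambda\Imat$ preserves both $R=\mathrm{range}(\Vmat)$ and $R^{\perp}$, hence the global inverse $\Wmat^{-1}$ restricts to $(\Wmat|_R)^{-1}$, and $\bm q\in R$ by consistency. Then $\Vmat^{-1}-\Wmat^{-1}=\lambda\,\Vmat^{-1}\Wmat^{-1}$ on $R$, together with $\vabs{(\Wmat|_R)^{-1}}=1/(\sigma_{\min}+\lambda)\leq\vabs{\Vmat^{-1}}$, gives the claim without writing out any spectral decomposition. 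What the resolvent route buys is generality: it never uses that $\Vmat$ and $\Wmat$ commute (share an eigenbasis), so it extends unchanged to positive semi-definite perturbations other than $\lambda\Imat$ that preserve $\mathrm{range}(\Vmat)$, with $\lambda$ replaced by the perturbation's operator norm. What the paper's route (and your first argument) buys is self-containedness and an explicit view of the exact prefactor $\lambda/\pbra{\sigma_{\min}(\sigma_{\min}+\lambda)}$, of which the stated bound $\lambda\vabs{\Vmat^{-1}}^2\vabs{\bm q}$ is a slight relaxation.
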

\begin{proof}
Let the eigen-decomposition of $\Vmat$ be $\Vmat := \Umat \bm\Sigma \Umat^{\dagger}$, where $\Umat$ is a unitary and
$\bm\Sigma :=\diag\pbra{\sigma_1,\cdots, \sigma_P, 0,\cdots, 0}$ is a diagonal matrix, where $\sigma_1\geq \sigma_2\geq \cdots \geq \sigma_P>0$, and $P\leq K$, then we have $\Wmat = \Umat \pbra{\bm\Sigma + \lambda \Imat} \Umat^{\dagger}$. 
In the matrix $\bm\Sigma + \lambda \Imat$ all diagonal elements are shifted. In contrast, let $\bm \Sigma'= \diag\pbra{\sigma_1 + \lambda, \cdots, \sigma_P+\lambda, 0,\cdots, 0} \in \Cbb^{K\times K}$ 
be a diagonal matrix such that
only the non-zero elements are shifted.
Since $\Vmat^{-1}\Vmat \bm q = \bm q$, $\bm q$ is in the space spanned by eigenvectors of $\Vmat$ with non-zero eigenvalues. Hence $\Wmat^{-1}\bm q = \Umat \bm\Sigma'^{-1}\Umat^{\dagger}\bm q$, where $\bm\Sigma'^{-1}$ is the pseudo-inverse of $\bm\Sigma'$. For the difference in solutions we obtain 
\be
  \vabs{\Vmat^{-1}\bm q - \Wmat^{-1}\bm q} &=& \vabs{\Umat\bm\Sigma^{-1}\Umat^{\dagger}\bm q - \Umat\bm\Sigma'^{-1}\Umat^{\dagger}\bm q}\\
&\leq& \vabs{\bm\Sigma^{-1} - \bm\Sigma'^{-1}}\vabs{\bm q}\\
&=& \pbra{\frac{1}{\sigma_P} - \frac{1}{\sigma_P + \lambda}}\vabs{\bm q}\\
&\leq& \lambda\vabs{\Vmat^{-1}}^2\vabs{\bm q}.
\ee
\end{proof}

Consider the solving of approximate linear systems. 
For a linear system with matrix $\Vmat\in \Cbb^{K\times K}$ and vector $\bm q \in\Cbb^K$, let
the approximate matrix be $\hat{\Vmat}$ and the approximate vector be $\hat{\bm q}$.
These approximate quantities could be obtained from the measurements outputs of a suitable quantum computation (here the Hadamard test).
The next lemma shows a concentration bound for such random matrices.

\begin{lemma} \label{lem:randommatrix}
Let $\Vmat\in\Rbb^{K\times K}$ be a real symmetric matrix. 
As in Problem \ref{pro:QuadProApp} restricted to real numbers, let the absolute value of all elements of $\Vmat$ be less than $\Delta >0$ and $\hat{\Vmat}$ be as defined. With $\delta >0$, let $S \log 2/\delta$ be the number of samples for each element of $\hat \Vmat$. Then,
$\vabs{\hat{\Vmat}-\Vmat}= \Ord{\Delta\sqrt{K/S}}$ with high probability.
\end{lemma}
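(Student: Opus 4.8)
The plan is to view $\hat\Vmat-\Vmat$ as a random real symmetric matrix whose entries are independent up to the symmetry constraint and mean zero, and to control its operator norm with the matrix Bernstein inequality applied to a sum of independent rank-$\le 2$ pieces. The Frobenius bound $\vabs{\hat\Vmat-\Vmat}\le\vabs{\hat\Vmat-\Vmat}_F$ is too lossy here (it costs an extra $\sqrt K$), so a genuine matrix-concentration argument is needed.

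First I would record the entrywise statistics. Denote by $E_{ij}:=\xi_{ij}(\hat v_{ij}-v_{ij})$ the entries of $\hat\Vmat-\Vmat$. Since $\hat v_{ij},v_{ij}\in[-1,1]$ and $|\xi_{ij}|\le\Delta$, every entry satisfies $|E_{ij}|\le 2\Delta$ almost surely. Because $\hat v_{ij}=2(B_1+\cdots+B_m)/m-1$ is an affine image of an average of $m:=S\log(2/\delta)$ i.i.d.\ Bernoulli trials, Popoviciu's variance bound gives $\text{Var}(\hat v_{ij})=4p(1-p)/m\le 1/m$, hence $\text{Var}(E_{ij})\le\Delta^2/(S\log(2/\delta))=:\sigma^2$. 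Entries indexed by distinct unordered pairs are independent, and $E_{ij}=E_{ji}$.

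Second, I would write $\hat\Vmat-\Vmat=\sum_{i\le j}\Xmat_{ij}$, where $\Xmat_{ij}$ is the symmetric matrix carrying $E_{ij}$ in positions $(i,j)$ and $(j,i)$ (and $E_{ii}$ on the diagonal when $i=j$). The $\Xmat_{ij}$ are independent and mean zero, and each off-diagonal piece is rank $2$ with eigenvalues $\pm E_{ij}$, so $\vabs{\Xmat_{ij}}=|E_{ij}|\le 2\Delta=:L$. A short computation shows each $\E[\Xmat_{ij}^2]$ is diagonal, whence the sum $\sum_{i\le j}\E[\Xmat_{ij}^2]$ is the diagonal matrix with $(i,i)$-entry $\sum_j\text{Var}(E_{ij})\le K\sigma^2$, giving the matrix variance statistic $\nu:=\vabs{\sum_{i\le j}\E[\Xmat_{ij}^2]}\le K\Delta^2/(S\log(2/\delta))$. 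Invoking matrix Bernstein,
\[
\Pr\!\left[\vabs{\hat\Vmat-\Vmat}\ge t\right]\le 2K\exp\!\left(\frac{-t^2/2}{\nu+Lt/3}\right),
\]
and setting $t=C\Delta\sqrt{K/S}$, one checks $t^2/(2\nu)=\Theta(\log(2/\delta))$ while $Lt/\nu=\Ord{\log(2/\delta)\sqrt{S/K}}$; in the undersampling regime $K\gtrsim S\log^2(2/\delta)$ the variance term dominates the denominator, the exponent is $\Theta(\log(2/\delta))$, and the $\log(2/\delta)$ factor in the sample count drives the failure probability down, yielding $\vabs{\hat\Vmat-\Vmat}=\Ord{\Delta\sqrt{K/S}}$ with high probability.

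The main obstacle is controlling the large-deviation ($L$) term in Bernstein. A naive decomposition of $\hat\Vmat-\Vmat$ across the $m$ sample indices is fatal: each single-sample matrix has $\pm\Delta$ entries and operator norm as large as $\Delta K$, so the $Lt$ term would swamp the variance term and erase the $\sqrt K$ saving. The entrywise (rather than per-sample) decomposition above is exactly the fix, keeping $L$ at scale $\Delta$. A secondary, purely bookkeeping point is the dimensional prefactor $2K$ in Bernstein: to make the failure probability genuinely negligible one should take the sample count proportional to $\log(K/\delta)$ (equivalently absorb a $\log K$ into the constant), which leaves the stated $\Ord{\Delta\sqrt{K/S}}$ scaling unchanged.
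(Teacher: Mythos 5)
Your proposal takes a genuinely different route, and its mechanics are mostly sound, but it does not prove the lemma as stated; the shortfall is precisely the point you dismiss as bookkeeping. The paper does not use matrix Bernstein at all: it invokes the sharper non-homogeneous inequality of Bandeira and van Handel (Corollary 3.12 of Ref.~\cite{bandeira2016sharp}), governed by two parameters, the row-variance statistic $\tilde{\sigma}=\max_i\sqrt{\sum_j\Ebb[X_{ij}^2]}=\Ord{\Delta\sqrt{K/S}}$ and the entrywise sup-norm $\tilde{\sigma}_{*}=\max_{ij}\vabs{X_{ij}}_{\infty}\leq\Delta/\sqrt{2S}$; note that the paper spends its $\log(2/\delta)$ oversampling on a Hoeffding bound that pins down $\tilde{\sigma}_{*}$, not on shrinking the entry variance as you do. The payoff is that in that inequality the dimensional prefactor $K$ is paired with the exponent $t^2/(c\,\tilde{\sigma}_{*}^2)$, which at the target deviation $t=\Theta\pbra{\Delta\sqrt{K/S}}$ equals $\Theta(K)$: the failure probability is $Ke^{-2K}$, with no logarithmic correction to either the sample count or the norm bound.

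Your matrix Bernstein route cannot reach this. Your computations are correct -- the rank-$\leq 2$ entrywise decomposition, the diagonal $\Ebb[\Xmat_{ij}^2]$, the variance statistic $\nu\leq K\Delta^2/(S\log(2/\delta))$, and the crucial warning against a per-sample decomposition -- but at $t=C\Delta\sqrt{K/S}$ your exponent is only $\Theta(\log(2/\delta))$, so the tail bound is $2K(\delta/2)^{\Theta(1)}$, which is vacuous for large $K$ at fixed $\delta$. This is the intrinsic $\sqrt{\log(\mathrm{dim})}$ loss of matrix Bernstein, whose typical deviation scale is $\sqrt{\nu\log K}+L\log K$; escaping it forces either your fix (per-entry samples $\propto\log(K/\delta)$, i.e., a genuine $\Theta(\log K)$ inflation of the hypothesis and of the total sample complexity) or a weakened conclusion $\Ord{\Delta\sqrt{K\log K/S}}$, unless one is content with $\delta\leq 1/\poly(K)$. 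Downstream, the paper's run times are stated in $\tOrd{\cdot}$ and would absorb such a polylogarithmic factor, so your variant would suffice for the algorithms; but as a proof of Lemma~\ref{lem:randommatrix} itself, you need either to make the strengthened sampling hypothesis explicit or to switch to the Bandeira--van Handel inequality the paper uses.
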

\begin{proof}
Recall the definition of $\Vmat$ and $\hat \Vmat$ in Problem \ref{pro:QuadProApp}.
Let $\Xmat :=\pbra{ \hat{\Vmat} - \Vmat}\in \Rbb^{K\times K}$. 
Let $\hat{v}_{ij}=2\pbra{B_1+\cdots + B_S}/S-1$, where $B_1, \dots, B_S$ are $S$ independent and identical samplings of Bernoulli trials.
Then $\hat v_{ij}$ multiplied by the scalar $\xi_{ij}$ gives a random variable $\hat{\Vmat }_{ij}$ with variance $\Ord{\Delta^2/S}$.
Hence, 
\be\Ebb[X_{ij}^{2}] = \Ebb\left[\pbra{\hat{\Vmat}_{ij} - \Vmat_{ij}}^{2}\right] = {\rm Var}(\hat{\Vmat}_{ij}) = \Ord{\Delta^2/S}.
\ee 
Using $S\log 2/\delta$ samples per entry, we also have Hoeffding's inequality
for all $i,j$ for the failure probability 
\be
\Pmat[\abs{\hat{\Vmat}_{ij} - \Vmat_{ij}} \geq t ] \leq 2 e^{- 2S t^2 \log(2/\delta) /\Delta^2}.
\ee
Set $t = \Delta/  \sqrt{2 S}$. Then
\be
\Pmat[\abs{\hat{\Vmat}_{ij} - \Vmat_{ij}} \geq \Delta / \sqrt{2S} ] &\leq& \delta.
\ee
Hence the corresponding success probability is at least $1-\delta$.

To align with the random matrix results in Ref.~\cite{bandeira2016sharp}, define 
$\tilde{\sigma}:= \max_{i} \sqrt{\sum_{j}\Ebb[X_{ij}^{2}]}$ which is bounded as
$\tilde{\sigma}=\Ord{\Delta\sqrt{K/S}}$.
Also consider
the infinity norm 
$\vabs{X_{ij}}_{\infty}$. Using the Hoeffding bound above, we can evaluate
\be
\vabs{X_{ij}}_{\infty} &=& \lim_{p\to \infty} 
\Ebb\left[\abs{X_{ij}}^{p}\right]^{1/p}\\
&=&\lim_{p\to \infty} 
\Ebb\left[\abs{\hat{\Vmat}_{ij} - \Vmat_{ij}}^{p}\right]^{1/p}\\
&\leq& 
\lim_{p\to \infty} 
\Ebb\left[\left (\frac{\Delta}{\sqrt{2S}}\right )^{p}\right]^{1/p} = \frac{\Delta}{\sqrt{2S}},
\ee 
which holds with high probability. 
The random matrix results use  the variable $\tilde\sigma_{*} := \max_{ij}\vabs{X_{ij}}_{\infty}
$, which can be bounded as  $\tilde\sigma_{*}\leq \Delta/\sqrt {2S} $ with high probability.
By Corollary 3.12 in Ref.~\cite{bandeira2016sharp}, there exists for any $0< \varepsilon \leq 1/2$ a universal constant $c_{\varepsilon}$ such that for every $t\geq 0$,
\be
\Pmat[\vabs{\Xmat}\geq (1 + \varepsilon)2 \tilde{\sigma} + t]\leq K e^{-t^{2}/(c_{\varepsilon}\tilde{\sigma}_{*}^{2})}.
\ee
Set $\varepsilon = 1/2$ and 
$t = \Delta\sqrt{{c_{\varepsilon}K/S}}$. 
Then we have
\be
\Pmat\left[\vabs{\hat{\Vmat} - \Vmat}\geq c_{1}\Delta\sqrt{K/S} \right]\leq Ke^{-2K},
\ee
for a suitable constant $c_1$. 
The total success probability 
is hence at least
$(1-\delta)(1-e^{-2 K + \log K})$.
\end{proof}

We omit the $\log 1/\delta$ dependency in the remainder of this work. 
The following lemma shows how to bound the error of the approximate solution when we only have the approximate system. This lemma is a generalization of Proposition 10 in Ref.~\cite{huang2019near}, which requires the matrix $\Vmat$ to be invertible and cannot be used directly for low-rank positive \emph{semi}-definite matrices. 

Let $\bm \alpha^{\ast}$ be the optimal solution of the original system and $\hat{\bm \alpha}$ be the optimal solution of the approximate system.
We generalize the proof of  Ref.~\cite{huang2019near} by introducing the perturbed system $\Wmat = \Vmat + \lambda\Imat$ for a small perturbation $\lambda$, and prove that the error between the approximate system after perturbation $\hat{\Wmat} = \hat{\Vmat} + \lambda \Imat$ and the perturbed system $\Wmat$ is bounded. By Lemma \ref{lem:PerturbBounded}, the error between perturbed system $\Wmat$ and original system $\Vmat$ is also bounded. Hence the total error is bounded (error between systems $\hat{\Wmat}$ and $\Vmat$).
Note that in the following result we can use $\vabs{\bm \alpha^*}\leq \vabs{\Vmat^{-1}}\vabs{\bm q}$ as an upper bound in the absence of further knowledge about $\vabs{\bm \alpha^*}$.

\begin{lemma}
Let $\Vmat\in\Cbb^{K\times K}$ be Hermitian positive semi-definite and $\bm q\in \Cbb^{K}$ such that $\Vmat^{-1}\Vmat \bm q = \bm q$.
As in Problem \ref{pro:QuadProApp}, let the absolute value of all elements of $\Vmat$ and $\bm q$ be less than $\Delta >0$, and  $\hat{\Vmat}$, $\hat{\bm q}$, and $\eta$ be as defined.
Then we can solve Problem \ref{pro:QuadProApp} with the optimal solution $ \hat{ \bm \beta} := \pbra{\hat{\Vmat} + \lambda \Imat}^{-1}\hat {\bm q}$ with high probability, if the number of samples used for obtaining each entry of $\hat{\Vmat}$ and $\hat{\bm q}$ is $\Delta^2 T$, where
$\lambda \leq \eta\big/\pbra{2\vabs{\Vmat^{-1}}^2\vabs{\bm q}}$ and
$T = \Ord{\frac{K \pbra{\eta+ \vabs{\bm \alpha^\ast} + 1 }^2}{\lambda^2\eta^2}}$.
\label{lem:approxCorrect}
\end{lemma}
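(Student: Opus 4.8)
The plan is to take the regularized solution $\hat{\bm\beta} := \pbra{\hat{\Vmat} + \lambda\Imat}^{-1}\hat{\bm q}$ as the output $\hat{\bm\alpha}$ and bound its distance to $\bm\alpha^\ast = \Vmat^{-1}\bm q$ by inserting the noiseless perturbed system $\Wmat := \Vmat + \lambda\Imat$ as an intermediary. Writing $\hat{\Wmat} := \hat{\Vmat} + \lambda\Imat$, the triangle inequality gives
\begin{equation*}
\vabs{\hat{\bm\beta} - \bm\alpha^\ast} \leq \vabs{\hat{\Wmat}^{-1}\hat{\bm q} - \Wmat^{-1}\bm q} + \vabs{\Wmat^{-1}\bm q - \Vmat^{-1}\bm q}.
\end{equation*}
The second term is precisely the quantity controlled by Lemma \ref{lem:PerturbBounded}, which bounds it by $\lambda\vabs{\Vmat^{-1}}^2\vabs{\bm q}$; the stated choice $\lambda \leq \eta\big/\pbra{2\vabs{\Vmat^{-1}}^2\vabs{\bm q}}$ makes this at most $\eta/2$. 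It then remains to show that the sampling error $\vabs{\hat{\Wmat}^{-1}\hat{\bm q} - \Wmat^{-1}\bm q}$ is also at most $\eta/2$.

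For the sampling term I would apply the resolvent identity $\hat{\Wmat}^{-1} - \Wmat^{-1} = -\hat{\Wmat}^{-1}\pbra{\hat{\Vmat} - \Vmat}\Wmat^{-1}$, valid because $\hat{\Wmat} - \Wmat = \hat{\Vmat} - \Vmat$, and decompose
\begin{equation*}
\hat{\Wmat}^{-1}\hat{\bm q} - \Wmat^{-1}\bm q = \hat{\Wmat}^{-1}\pbra{\hat{\bm q} - \bm q} - \hat{\Wmat}^{-1}\pbra{\hat{\Vmat} - \Vmat}\Wmat^{-1}\bm q,
\end{equation*}
so that
\begin{equation*}
\vabs{\hat{\Wmat}^{-1}\hat{\bm q} - \Wmat^{-1}\bm q} \leq \vabs{\hat{\Wmat}^{-1}}\pbra{\vabs{\hat{\bm q} - \bm q} + \vabs{\hat{\Vmat} - \Vmat}\vabs{\Wmat^{-1}\bm q}}.
\end{equation*}
The crucial step is controlling $\vabs{\hat{\Wmat}^{-1}}$: since $\hat{\Vmat}$ is Hermitian but need not be positive semi-definite, Weyl's inequality gives $\lambda_{\min}(\hat{\Wmat}) \geq \lambda - \vabs{\hat{\Vmat} - \Vmat}$, so on the event $\vabs{\hat{\Vmat} - \Vmat} \leq \lambda/2$ we obtain $\vabs{\hat{\Wmat}^{-1}} \leq 2/\lambda$.

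To finish I would invoke Lemma \ref{lem:randommatrix} with $S = \Delta^2 T$ samples per entry, yielding $\vabs{\hat{\Vmat} - \Vmat} = \Ord{\Delta\sqrt{K/(\Delta^2 T)}} = \Ord{\sqrt{K/T}}$ with high probability, together with the analogous vector concentration $\vabs{\hat{\bm q} - \bm q} = \Ord{\sqrt{K/T}}$. Substituting $T = \Ord{K\pbra{\eta + \vabs{\bm\alpha^\ast} + 1}^2\big/\pbra{\lambda^2\eta^2}}$ makes both errors $\Ord{\lambda\eta\big/\pbra{\eta + \vabs{\bm\alpha^\ast} + 1}}$, which for a suitable constant lies below $\lambda/2$ (validating $\vabs{\hat{\Wmat}^{-1}} \leq 2/\lambda$) and, together with $\vabs{\Wmat^{-1}\bm q} \leq \vabs{\bm\alpha^\ast} + \eta/2$ from the already-estimated perturbation term, makes every contribution in the displayed bound collapse to $\Ord{\eta}$ once the factors of $\lambda$ cancel and one uses $\pbra{\vabs{\bm\alpha^\ast} + \eta}\big/\pbra{\eta + \vabs{\bm\alpha^\ast} + 1} < 1$; choosing the implicit constant in $T$ so the sum is at most $\eta/2$ closes the argument.

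The main obstacle I anticipate is the circular interdependence between $\lambda$, the sample count, and the bound $\vabs{\hat{\Wmat}^{-1}} \leq 2/\lambda$: the regularizer $\lambda$ must be small so that the perturbation term stays below $\eta/2$, yet a small $\lambda$ inflates $\vabs{\hat{\Wmat}^{-1}}$ and thereby demands more samples to keep the sampling term in check. Threading the exact dependence of $T$ on $\lambda, \eta, K$ and $\vabs{\bm\alpha^\ast}$ so that all terms land at $\Ord{\eta}$ simultaneously, while confirming that the high-probability event $\vabs{\hat{\Vmat} - \Vmat} \leq \lambda/2$ indeed holds under exactly this $T$, is the delicate bookkeeping at the heart of the proof.
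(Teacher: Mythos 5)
Your proposal is correct and follows the same skeleton as the paper's proof: both insert the noiseless perturbed system $\Wmat^{-1}\bm q$ as an intermediary, dispatch the perturbation term via Lemma \ref{lem:PerturbBounded} and the stated choice of $\lambda$, and control the sampling term using the concentration of Lemma \ref{lem:randommatrix} with the same sample count $T$. The one genuine difference is how the sampling term $\vabs{\hat{\Wmat}^{-1}\hat{\bm q} - \Wmat^{-1}\bm q}$ is bounded. You use the explicit resolvent identity $\hat{\Wmat}^{-1}-\Wmat^{-1} = -\hat{\Wmat}^{-1}\pbra{\hat{\Vmat}-\Vmat}\Wmat^{-1}$, which forces you to bound the \emph{perturbed} inverse $\vabs{\hat{\Wmat}^{-1}}$, and you do so via Weyl's inequality on the event $\vabs{\hat{\Vmat}-\Vmat}\leq \lambda/2$. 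The paper instead writes a self-referential identity, $\hat{\bm\beta}-\bm\beta = \Wmat^{-1}\pbra{\Wmat-\hat{\Wmat}}\pbra{\hat{\bm\beta}-\bm\beta} + \Wmat^{-1}\pbra{\pbra{\Wmat-\hat{\Wmat}}\bm\beta + \hat{\bm q}-\bm q}$, and solves the resulting inequality for $\vabs{\hat{\bm\beta}-\bm\beta}$ under the condition $1-\vabs{\Wmat^{-1}}\vabs{\hat{\Wmat}-\Wmat}\geq 1/2$; this only ever needs $\vabs{\Wmat^{-1}}\leq 1/\lambda$, i.e.\ spectral information about the \emph{exact} shifted matrix. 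The two conditions are essentially identical, and both routes land on the same $T$. One caveat for your version: Weyl's inequality requires $\hat{\Vmat}$ to be Hermitian, which Problem \ref{pro:QuadProApp} does not explicitly guarantee (it holds if one estimates only the upper triangle and fills in conjugates, as the Hadamard-test construction naturally allows). If you want to avoid that assumption, replace Weyl by the singular-value perturbation bound $\sigma_{\min}\pbra{\hat{\Wmat}}\geq \sigma_{\min}\pbra{\Wmat} - \vabs{\hat{\Wmat}-\Wmat}\geq \lambda/2$, which is valid for arbitrary perturbations and yields the same $\vabs{\hat{\Wmat}^{-1}}\leq 2/\lambda$; with that substitution your argument is airtight and arguably more transparent than the paper's implicit inequality.
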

\begin{proof}
The complex numbers in $\hat {\Vmat}$ and $\hat{\bm q}$ can be  handled correctly by treating real and imaginary parts independently.
By Lemma \ref{lem:randommatrix} with $S = \lceil \Delta^2 T \rceil$, we have $\vabs{\hat{\Vmat}-\Vmat}= \Ord{\sqrt{K/T}}$ in $\ell_2$ norm with high probability.  
Similarly, with the same number of samples for each entry, we can also obtain an estimate $\hat{\bm q}$ for $\bm q$ such that $\vabs{\hat{\bm q}-\bm q}= \Ord{\sqrt{K/T}}$ in $\ell_2$ norm with high probability.

Let $\lambda >0$ and the shifted matrix be $\Wmat := \Vmat + \lambda \Imat$ with the solution
$\bm\beta := \Wmat^{-1} \bm q$. 
Also the shifted estimated matrix is $\hat{\bm \Wmat} := {\hat{\Vmat}} + \lambda \Imat$ with solution  $\hat{\bm \beta} := \hat{\Wmat}^{-1}\hat{\bm q}$, using the estimated vector $\hat {\bm q}$. Then we have $\vabs{\hat{\Wmat} - \Wmat}=\vabs{\hat{\Vmat} - \Vmat }= \Ord{\sqrt{K/T}}$. We also obtain $\vabs{\Wmat^{-1}}\leq 1/\lambda$ 
by the definition of $\Wmat$.
By Lemma \ref{lem:PerturbBounded}, the difference of exact solution and shifted exact solution is $\vabs{\bm \beta - \bm \alpha^*} \leq \lambda\vabs{\Vmat^{-1}}^2 \vabs{\bm q}$.
Using $\lambda \leq \eta /\pbra{2\vabs{\Vmat^{-1}}^2 \vabs{\bm q}}$ we obtain
$\vabs{\bm \beta - \bm \alpha^*} \leq \eta/2$.

In the next step, we prove that the distance between shifted exact solution and shifted approximate solution is bounded as $\vabs{{\bm \beta}- \hat{\bm \beta}}\leq\eta/2$.
Combining these bounds will  thus bound the difference between exact solution and shifted approximate solution as $\vabs{\bm{\alpha}^{*} - \hat{\bm \beta} }\leq \eta$.

Recall that 
$\bm \beta = \Wmat^{-1} \bm q,\hat{\bm \beta} = \hat{\Wmat}^{-1}\hat{\bm q}$ and the simple fact that $\Wmat^{-1} \Wmat = \Imat$. Hence, 
\be
 \hat{\bm \beta}- \bm\beta
&=& \Wmat^{-1}
\pbra{ \Wmat\hat{\bm \beta} - \Wmat \bm\beta} \\
&=& \Wmat^{-1} \pbra{\pbra{\Wmat - \hat{\Wmat}}\hat{\bm \beta} + 
\hat{\Wmat}\hat{\bm \beta}- \Wmat\bm\beta}\\
&=& \Wmat^{-1} \pbra{\Wmat - \hat{\Wmat}}\pbra{\hat{\bm \beta}- \bm\beta}  \\
&+& \Wmat^{-1}\pbra{\pbra{\Wmat - \hat{\Wmat}}\bm\beta + 
\hat{\bm q} - \bm q}.
\ee
We hence obtain for the distance that
\begin{align}
    \vabs{\hat{\bm \beta}- \bm\beta}&\leq
\vabs{\Wmat^{-1}}\vabs{\hat{\Wmat} - \Wmat}\vabs{\hat{\bm \beta}- \bm\beta} +\\
& \vabs{\Wmat^{-1}}\pbra{\vabs{\hat{\Wmat} - \Wmat}\vabs{\bm\beta} +\vabs{\hat{\bm q} - \bm q}}.
\end{align}
Hence,
\begin{align}
\begin{aligned}
       \vabs{\hat{\bm \beta}- \bm\beta}&\leq 
\frac{\vabs{\Wmat^{-1}}\pbra{\vabs{\hat{\Wmat} - \Wmat}\vabs{\bm\beta}+\vabs{\hat{\bm q} - \bm q}}}{{1- \vabs{\Wmat^{-1}}\vabs{\hat{\Wmat} - \Wmat}}}\\
&\leq \eta/2.
\end{aligned}
\label{eq:PartsolBound}
\end{align}
The last inequality holds when 
\be
\vabs{\Wmat^{-1}}\pbra{\vabs{\hat{\Wmat} - \Wmat}\vabs{\bm\beta}+\vabs{\hat{\bm q} - \bm q}}&\leq& \eta/4, \\ 
1- \vabs{\Wmat^{-1}}\vabs{\hat{\Wmat} - \Wmat}&\geq& 1/2.
\ee
These inequalities can be achieved as follows. 
We have
\be
  \vabs{\bm \beta} &=& \vabs{\bm \beta - \bm \alpha^{*} + \bm \alpha^{*}}
  \leq \vabs{\bm \beta - \bm \alpha^{*}} + \vabs{\bm \alpha^{*}}\\
  &\leq& \eta/2 + \vabs{\bm \alpha^{*}},
\ee
where the last inequality holds because of Lemma \ref{lem:PerturbBounded}.
In addition, take the number of samples controlling $\vabs{\hat{\Wmat} - \Wmat}$ and $\vabs{\hat{\bm q}-\bm q}$ to be 
\be
T=\frac{c K \vabs{\Wmat^{-1}}^2\pbra{\eta/2 + \vabs{\bm \alpha^\ast} + 1}^2}{\eta^2},
\ee
for a suitable constant $c$.
We can rephrase this expression as 
\be
T = \frac{c'K \pbra{\eta+ \vabs{\bm \alpha^\ast} + 1 }^2}{\lambda^2\eta^2},
\ee
with a suitable constant $c'$, using 
$\vabs{\Wmat^{-1}}\leq 1/\lambda$, as before. 
\end{proof}

\subsection{Hybrid algorithm for over-determined linear systems\label{subsec:overdetermined}}

Using the preceding lemmas, we describe our Algorithm \ref{Alg:overdetermined}, which gives a solution of Problem \ref{pro:overdetermined}. 
The time complexity is
$\Ord{ M^3\poly(\Gamma_1,\kappa,\vabs{\bm b},\log N)/\varepsilon^4}$, where $\Gamma_1 = \max_{j} \cbra{\vabs{\bm a_j}\vabs{\bm b}, \vabs{\bm a_j}^2}$, and $\bm a_j$ is the $j$-th column of matrix $\Amat$.

\SetNlSty{textbf}{}{ }
\begin{figure*}
\begin{center}
\begin{minipage}{1\linewidth}
\begin{algorithm}[H]
\label{Alg:overdetermined}
\SetKwInOut{Input}{Input}
\SetKwInOut{Output}{Output}
\Input{$\Amat\in \Cbb^{N\times M}$ that satisfies Assumption \ref{assume:quantummatrix}, and $\bm b\in \Cbb^{N}$ that satisfies Assumption \ref{assume:vecb}, error $\varepsilon > 0$.}
\Output{$\hat{\bm x}$ which is a solution of Problem \ref{pro:overdetermined}.}
Define matrix $\hat{\Wmat}\in \Cbb^{M\times M}$, and vector $\hat{\bm q}\in\Cbb^{M}$\;
Let $\Gamma_1 := \max_j\{ \vabs{\bm a_j}\vabs{\bm b}, \vabs{\bm a_j}^2\}$, 
$\lambda :=\varepsilon/\pbra{2\vabs{\Amat}^2\vabs{\Amat^{-1}}^4\vabs{\bm b}}$ and 
$T := \Ord{M \vabs{\Amat^{-1}}^4\kappa(\Amat)^4\vabs{\bm b}^2\pbra{\vabs{\Amat}(\vabs{\bm x^*} + 1)+\varepsilon}^2 \big/\varepsilon^4}$\;
Let $v_{jk} $ be the approximate value of $\braket{a_j|a_k}$ by repeatedly and independently measuring quantum circuit of Hadamard test $\cbra{ \ket{0^{\log N}}, \Umat_j^\dagger\Umat_k}$ $\Gamma_1^2 T$ times for all $j,k\in[M]$\;
Let the $(j,k)$-th element of $\hat{\Wmat}$  be $\hat{W}_{jk} := \vabs{\bm a_j} \vabs{\bm a_k} v_{jk}$ for any $j,k\in[M]$ and $j\ne k$ and  $\hat{W}_{jj} := \vabs{\bm a_j}^2 v_{jj} + \lambda$ for any $j\in[M]$\; 
Let $\hat{u}_j $ be the approximate value of $\braket{a_j|b}$ by repeatedly and independently measuring quantum circuit of Hadamard test $\cbra{\ket{0^{\log N}}, \Umat_j^\dagger\Umat_{\bm b}} \Gamma_1^2 T$ times, and let the $j$-th element of $\hat{\bm q}$ be $\hat{q}_{j} := \vabs{\bm a_j}\vabs{\bm b} \hat{u}_j $ for any $j\in[M]$\;
Calculate $\hat{\bm x} := \hat{\Wmat}^{-1}\hat{\bm q}$ with classical algorithm\;
\Return $\hat{\bm x}$\;
\caption{Hybrid quantum-classical algorithm for over-determined Linear System.}
\end{algorithm}
\end{minipage}
\end{center}
\end{figure*}
The main quantum part of this algorithm is the Hadamard test, introduced and discussed further in Sec. \ref{subsec:HadamardTest}.
The running time of this algorithm equals $\Ord{M^2 \Gamma_1^2T} + \Ord{M^3}$.
The following theorem shows that Algorithm \ref{Alg:overdetermined} indeed gives a good approximation for the optimal solution $\Amat^{-1}\bm b$.
\begin{theorem}
For the linear system of Problem \ref{pro:overdetermined} in which matrix $\Amat\in \Cbb^{N\times M}$ can be accessed as in Assumption \ref{assume:quantummatrix}, vector $\bm b\in \Cbb^{N}$ can be accessed as in Assumption \ref{assume:vecb}, and $\varepsilon>0$, 
Algorithm \ref{Alg:overdetermined} outputs a classical solution of Problem \ref{pro:overdetermined} with
$\tOrd{\frac{\Gamma_1^2 M^3 \vabs{\Amat^{-1}}^4\kappa(\Amat)^4\vabs{\bm b}^2\pbra{\vabs{\Amat}(\vabs{\bm x^*} + 1)+\varepsilon}^2}{\varepsilon^4}}$
 time, where $\Gamma_1 = \max_{j}\pbra{\vabs{\bm a_j}\vabs{\bm b}, \vabs{\bm a_j}^2}, \bm x^* =\Amat^{-1}\bm b$, 
and high success probability.
\label{thm:SecOverdetermined}
\end{theorem}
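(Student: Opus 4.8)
The plan is to recognize that Algorithm \ref{Alg:overdetermined} simply solves the regularized normal equations: the exact least-squares minimizer is $\bm x^\ast = \Amat^{-1}\bm b = \Vmat^{-1}\bm q$, where $\Vmat := \Amat^\dagger\Amat$ is the Gram matrix and $\bm q := \Amat^\dagger\bm b$. Their entries are $V_{jk} = \vabs{\bm a_j}\vabs{\bm a_k}\braket{a_j|a_k}$ and $q_j = \vabs{\bm a_j}\vabs{\bm b}\braket{a_j|b}$, precisely the quantities the Hadamard tests in the algorithm estimate up to the known classical norms. I would therefore cast the task as an instance of Problem \ref{pro:QuadProApp} with $K=M$, $\xi_{jk}=\vabs{\bm a_j}\vabs{\bm a_k}$, $v_{jk}=\braket{a_j|a_k}$, $\nu_j=\vabs{\bm a_j}\vabs{\bm b}$, $u_j=\braket{a_j|b}$, and bound $\Delta=\Gamma_1$; here $\abs{v_{jk}}\le 1$ and $\abs{u_j}\le 1$ since they are overlaps of normalized states, $\Gamma_1$ dominates both $\vabs{\bm a_j}\vabs{\bm a_k}$ and $\vabs{\bm a_j}\vabs{\bm b}$, and the $\pm1$-valued Hadamard-test outcomes (Sec.~\ref{subsec:HadamardTest}) place $\hat v_{jk},\hat u_j$ in the Bernoulli form $2(B_1+\cdots+B_S)/S-1$ required there.

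First I would verify the consistency hypothesis $\Vmat^{-1}\Vmat\bm q=\bm q$ needed to invoke Lemma \ref{lem:approxCorrect}. Since $\Vmat$ is Hermitian, $\Vmat^{-1}\Vmat$ is the orthogonal projector onto $\mathrm{range}(\Vmat)$, so it suffices that $\bm q=\Amat^\dagger\bm b\in\mathrm{range}(\Vmat)$. This follows from $\mathrm{range}(\Amat^\dagger\Amat)=\mathrm{range}(\Amat^\dagger)$, equivalently $N(\Amat^\dagger\Amat)=N(\Amat)$, which holds for every matrix. As $\Vmat=\Amat^\dagger\Amat$ is also positive semi-definite, all hypotheses of Lemma \ref{lem:approxCorrect} hold. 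Applying it with $\hat{\bm\beta}=(\hat\Vmat+\lambda\Imat)^{-1}\hat{\bm q}=\hat{\bm x}$ and $\bm\alpha^\ast=\Vmat^{-1}\bm q=\bm x^\ast$ yields $\vabs{\bm x^\ast-\hat{\bm x}}\le\eta$ with high probability, provided $\lambda\le\eta\big/\pbra{2\vabs{\Vmat^{-1}}^2\vabs{\bm q}}$ and the per-entry sample count is $\Delta^2 T$ with $T=\Ord{M\pbra{\eta+\vabs{\bm x^\ast}+1}^2\big/(\lambda^2\eta^2)}$.

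Next I would convert this solution-error guarantee into the residual guarantee of Problem \ref{pro:overdetermined}. By the triangle inequality and sub-multiplicativity, $\vabs{\Amat\hat{\bm x}-\bm b}-\min_{\bm x}\vabs{\Amat\bm x-\bm b}=\vabs{\Amat\hat{\bm x}-\bm b}-\vabs{\Amat\bm x^\ast-\bm b}\le\vabs{\Amat(\hat{\bm x}-\bm x^\ast)}\le\vabs{\Amat}\vabs{\hat{\bm x}-\bm x^\ast}\le\vabs{\Amat}\eta$, so it suffices to take $\eta=\varepsilon/\vabs{\Amat}$. It then remains to substitute $\vabs{\Vmat^{-1}}=\vabs{(\Amat^\dagger\Amat)^{-1}}=\vabs{\Amat^{-1}}^2$ and $\vabs{\bm q}=\vabs{\Amat^\dagger\bm b}\le\vabs{\Amat}\vabs{\bm b}$ into the two displayed bounds. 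Using the upper bound on $\vabs{\bm q}$ (which only decreases $\lambda$, preserving the inequality) turns the constraint into $\lambda=\varepsilon\big/\pbra{2\vabs{\Amat}^2\vabs{\Amat^{-1}}^4\vabs{\bm b}}$, exactly the algorithm's choice; plugging $\eta,\lambda$ into $T$ and using $\kappa=\vabs{\Amat}\vabs{\Amat^{-1}}$ reproduces the stated $T$. Finally, the cost $\Ord{M^2\Gamma_1^2 T}+\Ord{M^3}$, from a Hadamard test with $\Gamma_1^2T$ shots for each of the $\Ord{M^2}$ overlaps followed by one classical $M\times M$ solve, gives the claimed $\tOrd{\cdot}$ running time.

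The main obstacle is the correctness of the reduction rather than the arithmetic: one must confirm that the estimated overlaps and norms genuinely reconstruct $\Amat^\dagger\Amat$ and $\Amat^\dagger\bm b$, that $\bm q\in\mathrm{range}(\Vmat)$ so that the pseudo-inverse solution coincides with the true least-squares minimizer $\Amat^{-1}\bm b$ and Lemma \ref{lem:approxCorrect} applies despite $\Vmat$ being only positive \emph{semi}-definite, and that the Hadamard-test estimators satisfy the exact Bernoulli structure assumed in Problem \ref{pro:QuadProApp}. Once these structural facts are settled, the error and runtime bounds follow by the parameter substitutions above.
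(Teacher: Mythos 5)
Your proposal is correct and follows essentially the same route as the paper's proof: the same reduction to the normal equations $\Vmat=\Amat^\dagger\Amat$, $\bm q=\Amat^\dagger\bm b$ cast as Problem \ref{pro:QuadProApp} with $\Delta=\Gamma_1$, the same invocation of Lemma \ref{lem:approxCorrect} with $\eta=\varepsilon/\vabs{\Amat}$, the same choice of $\lambda$ and $T$, and the same measurement count $\Ord{M^2\Gamma_1^2 T}+\Ord{M^3}$. Your two small refinements — explicitly verifying the consistency hypothesis $\bm q\in\mathrm{range}(\Vmat)$ via $N(\Amat^\dagger\Amat)=N(\Amat)$ (which the paper leaves implicit), and ending with a direct triangle inequality rather than the paper's projection decomposition $\bm b-\Amat\bm x^*=P_{N(\Vmat)}\bm b$ — tighten the exposition without changing the argument.
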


\begin{proof}
As before, $\Amat = \sum_{j = 1}^{M}\|\bm a_j\| \ket{a_j}\bra{j}$, where $\ket{a_j}\in \Cbb^{N}$. By Assumption \ref{assume:quantummatrix}, there exist efficient unitary matrices $\Umat_{j}\in\Cbb^{N\times N}$ such that $\Umat_{j} \ket{ 0^{\log N}} = \ket{ a_{j}}
$ for any $j\in[M]$, and $\vabs{\bm a_{j}}$ is given classically. By Assumption \ref{assume:vecb}, $\ket{b}$ can be generated efficiently by the quantum circuit and $\vabs{\bm b}$ is given classically.
Since
\be\argmin_{\bm x} \vabs{\Amat \bm x - \bm b}
=\argmin_{\bm x}\vabs{\Amat^{\dagger}\Amat \bm x- \Amat^{\dagger}\bm b},
\ee
the optimized solution $\bm x^{*} = \pbra{\Amat^{\dagger}\Amat}^{-1}\Amat^{\dagger}\bm b$ can be calculated in $\Ord{M^3}$ time with a classical algorithm if $\Amat^{\dagger}\Amat$ and $\Amat^{\dagger}\bm b$ are given in advance.
We have simply
$\Vmat := \Amat^{\dagger}\Amat =\sum_{j,k}\|\bm a_j\|\|\bm a_k\|\braket{a_j|a_k}\ket{j}\bra{k}$
and $ \bm q := \Amat^{\dagger}\bm{b} = \sum_{j = 1}^{M}\|\bm a_j\|\vabs{\bm b} \ket{j}\braket{a_j|b}$. 
The matrix $\Vmat$ and vector $\bm q$ are consistent with the definitions in Problem \ref{pro:QuadProApp} by letting
$\xi_{jk} = \vabs{\bm a_j}\vabs{\bm a_k}$, $v_{jk} = \braket{ a_j| a_k}$, $\nu_{j} = \vabs{\bm a_j}\vabs{\bm b}$, $u_{j} = \braket{ a_j|b}$, and $K = M$. We define the size parameter $\Gamma_1 := \max_{j}\{\vabs{\bm a_j}\vabs{\bm b}, \vabs{\bm a_j}^2\}$.

We obtain the approximate matrix  $\hat \Vmat$ of $ \Vmat$ and approximate vector $\hat{\bm q}$ of $\bm q$ via the Hadamard test. 
Specifically for any $j,k\in [N]$, the matrix element $\hat{V}_{jk} = \xi_{jk} \hat{v}_{jk}$, where $\hat{v}_{jk} \in \Cbb$ and its real part and imaginary part are the expectations of 
independent runs of the Hadamard tests related to $\cbra{\ket{0^{\log N}},\Umat_j^\dagger\Umat_k}$.
The output distributions of the measurements of the Hadamard tests follows Bernoulli distributions, with expectation values
$\pbra{\real\cbra{\braket{ a_j| a_k}} + 1}/2$, $\pbra{\imag\cbra{\braket{ a_j| a_k}} + 1}/2$,  $\pbra{\real\cbra{\braket{a_j|b}} + 1}/2$, and $\pbra{\imag\cbra{\braket{a_j|b}} + 1}/2$. Hence, we obtain  Bernoulli trials with corresponding expectations $\real\cbra{\braket{ a_j| a_k}}$,
$\imag\cbra{\braket{ a_j| a_k}}$, $\real\cbra{\braket{a_j|b}}$, and $ \imag\cbra{\braket{a_j|b}} $.

It remains to show that the error is bounded for the approximate solution $\hat{\bm x} = \hat {\Wmat}^{-1} \hat{\bm q}$ where $\hat{\Wmat} = \hat \Vmat + \lambda \Imat$ is the shifted matrix of $\hat{\Vmat}$.
We employ Lemma \ref{lem:approxCorrect} to obtain
\be\vabs{\hat{\bm x} - \bm x^*}\leq\frac{\varepsilon}{\vabs{\Amat}},
\ee
which fixes the shifting parameter $\lambda$ and the required 
number of single Hadamard measurements $T$, as follows.
We substitute $\eta \to \varepsilon/\vabs{\Amat}$, 
$\Delta \to \Gamma_1$, and $\Vmat\to\Amat^{\dagger}\Amat$
in Lemma \ref{lem:approxCorrect}.
The lemma hence requires that \be\lambda \leq \frac{\varepsilon/\vabs{\Amat}}{2\vabs{\pbra{\Amat^{\dagger}\Amat}^{-1}}^2 \vabs{\Amat^\dagger\bm b}} \ee 
and 
\be
T = \Ord{\frac{\vabs{\Amat}^2}{\varepsilon^2}\frac{M \pbra{
\vabs{\bm x^*} + 1 + \varepsilon/\vabs{\Amat}
}^2 }{\lambda^2}}.
\ee
Let \be\lambda = \frac{\varepsilon}{2\vabs{\Amat}^2\vabs{\Amat^{-1}}^4 \vabs{\bm b}}\leq \frac{\varepsilon/\vabs{\Amat}}{2\vabs{\pbra{\Amat^{\dagger}\Amat}^{-1}}^2 \vabs{\Amat^\dagger\bm b}},\ee
then $T$ can be bounded as
$ \Ord{\frac{M\vabs{\Amat^{-1}}^4\kappa^4\vabs{\bm b}^2\pbra{\vabs{\Amat}(\vabs{\bm x^*} + 1)+\varepsilon}^2 }{\varepsilon^4}}.$
Each Hadamard test for $\hat{V}_{jk}$ is repeated $\Gamma_1^2  T$ times, hence $\hat{\Vmat}$ is obtained by $M^2 \Gamma_1^2  T$ independent measurements.
Similarly, we obtain $\hat{\bm q}$ with at most the same number of measurements.
Algorithm \ref{Alg:overdetermined} uses these settings to achieve correctness.

The final step is to show the correctness criterion in Problem \ref{pro:overdetermined}. We can project $\bm b$ into the eigenvector space with non-zero eigenvalues of $\Vmat$ and null space of $\Vmat$, \emph{i.e.}, $\bm b = \Vmat^{-1}\Vmat \bm b + P_{N(\Vmat)}\bm b$, and $\bm b - \Amat \bm x^* = P_{N(\Vmat)}\bm b$. Hence,
\be
&&\vabs{\Amat \hat{\bm x} - {\bm b}}
- \vabs{\Amat{\bm x^{*}} - \bm b}\\
&=&\vabs{\Amat\hat{\bm x} - \Amat\bm x^{*} 
- P_{N(\Vmat)}\bm b} -\vabs{P_{N(\Vmat)}\bm b}\\
& \leq &\vabs{\Amat}\vabs{\hat{\bm x} - \bm x^{*}} + \vabs{P_{N(\Vmat)}\bm b} - \vabs{P_{N(\Vmat)}\bm b}\\
&=& \vabs{\Amat}\vabs{\hat{\bm x} - \bm x^{*}} \\
&\leq& \varepsilon.
\ee
\end{proof}

\subsection{Hybrid algorithm for under-determined linear systems\label{subsec:underdetermined}}

With slightly more effort than the previous algorithm we give an algorithm to solve Problem \ref{pro:underdetermined}.
The time complexity is
$\Ord{ M^3\poly(\Gamma_2,\kappa,\vabs{\bm b},\log N)/\varepsilon^{4}}$, where $\sqrt{\Gamma_2}$ is the maximum two norm of the columns of matrix $\Amat$. The challenge here is that  we do not want to  classically write down the high-dimensional vector $\bm y$. To overcome this restriction, we represent solution $\bm y$ as a linear combination of several known vectors, and optimize the coefficients of these vectors with a similar method as Algorithm \ref{Alg:overdetermined}.
The following simple lemma states that the solutions of Problem \ref{pro:underdetermined} can be restricted to the space spanned by $\cbra{\ket{a_1}, \cdots, \ket{a_M}}$, where $\ket{a_j}$ is in proportion to the $j$-th column of matrix $\Amat$.
\begin{lemma}
Given matrix 
$\Amat = \sum_{1\leq j \leq M} \vabs{\bm a_{j}} \ket{ a_j}\bra{j}$,
where $\ket{a_j}\in \Cbb^N$ for any $j\in[M]$, and vector $\bm c \in \Cbb^M$,
then there exists a solution $\bm y^\ast \in \Cbb^N$ in the space spanned by $\cbra{\ket{a_1}, \cdots, \ket{a_M}}$ such that $\bm y^\ast = \argmin_{\bm y}\vabs{\Amat^\dagger  \bm y - \bm c}$.
\label{lem:decompX}
\end{lemma}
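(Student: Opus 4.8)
The plan is to use the orthogonal decomposition of $\Cbb^N$ relative to the subspace $L := \mathrm{span}\,\cbra{\ket{a_1},\ldots,\ket{a_M}}$, and to observe that the residual $\vabs{\Amat^\dagger \bm y - \bm c}$ depends on $\bm y$ only through its component inside $L$. Once this is established, projecting any minimizer onto $L$ yields a minimizer that already lies in $L$, which is exactly the claim.

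First I would write $\Amat^\dagger = \sum_{j=1}^{M}\vabs{\bm a_j}\ket{j}\bra{a_j}$, the conjugate transpose of $\Amat = \sum_{j}\vabs{\bm a_j}\ket{a_j}\bra{j}$. From this expression it is immediate that $\Amat^\dagger$ annihilates $L^\perp$: if $\bm y'\in L^\perp$ then $\braket{a_j|\bm y'}=0$ for every $j\in[M]$, so $\Amat^\dagger \bm y' = \sum_{j}\vabs{\bm a_j}\ket{j}\braket{a_j|\bm y'}=0$. Equivalently, $L^\perp = N(\Amat^\dagger)$ and $L = \mathrm{range}(\Amat)$, the standard fact that the orthogonal complement of the column span is the kernel of the adjoint.

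Next I would decompose an arbitrary $\bm y\in\Cbb^N$ as $\bm y = P_L\bm y + (\Imat - P_L)\bm y$, where $(\Imat - P_L)\bm y \in L^\perp$. Applying $\Amat^\dagger$ and using the previous step kills the second summand, so $\Amat^\dagger \bm y = \Amat^\dagger P_L\bm y$ and hence $\vabs{\Amat^\dagger \bm y - \bm c} = \vabs{\Amat^\dagger P_L\bm y - \bm c}$. Thus the objective is invariant under replacing $\bm y$ by its projection $P_L\bm y\in L$.

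To conclude, I would note that the least-squares objective attains its minimum, since it is the distance from $\bm c$ to the finite-dimensional (hence closed) subspace $\mathrm{range}(\Amat^\dagger)\subseteq\Cbb^M$. Taking any global minimizer $\bar{\bm y}$ and setting $\bm y^\ast := P_L\bar{\bm y}$, the invariance just shown gives that $\bm y^\ast$ achieves the same minimal value, so $\bm y^\ast = \argmin_{\bm y}\vabs{\Amat^\dagger \bm y - \bm c}$ with $\bm y^\ast\in L = \mathrm{span}\,\cbra{\ket{a_1},\ldots,\ket{a_M}}$, as required. There is no genuinely hard step here; the only point needing care is the identification $L^\perp = N(\Amat^\dagger)$, i.e.\ verifying that $\Amat^\dagger$ vanishes on the orthogonal complement of the column span, after which the result follows from the orthogonal splitting.
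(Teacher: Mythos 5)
Your proof is correct and follows essentially the same route as the paper's: decompose $\bm y$ orthogonally with respect to the span $L$ of $\cbra{\ket{a_1},\ldots,\ket{a_M}}$, observe that $\Amat^\dagger$ annihilates the component in $L^\perp$ since $\braket{a_j|\bm y'}=0$ there, and conclude that projecting any minimizer onto $L$ preserves the objective value. Your additional remark that the minimum is attained (distance to a closed finite-dimensional subspace) is a small point of extra care that the paper leaves implicit, but it does not constitute a different argument.
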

\begin{proof}
Let $P$ be the space spanned by $\{\ket{a_1}, \cdots, \ket{a_M}\}$, $P_{\perp}$ be the orthogonal space of $P$.
We can express any solution $\bm y$ as
$\bm y = c_1\bm y^{P} + c_2\bm y^{P_{\perp}}$, for some constants $c_1$ and $c_2$,  where $\bm y^{P}, \bm y^{P_{\perp}}$ are the projections of $\bm y$ into spaces $P$ and $P_{\perp}$ respectively. 
Since $\Amat^\dagger \bm y^{P_{\perp}} =  \sum_{1\leq j \leq M} \vabs{\bm a_{j}} \ket{ j}\bra{ a_{j}}\bm y^{P_{\perp}} = \bm 0$, $\Amat^\dagger \bm y = \Amat^\dagger \bm y^{P}$. Thus we can restrict the solution $\bm y^\ast$ into space $P$.
\end{proof}

In the algorithm below, instead of outputting the $N$ dimensional vector $\bm y$, we output the coefficients of the (non-orthogonal) basis $\cbra{\ket{a_1}, \cdots, \ket{a_M}}$ to construct $\bm y$. The other difference to Algorithm \ref{Alg:overdetermined} is that we have to make the linear system consistent. Consistency is achieved by considering $\hat{\Vmat}^2$ and $\hat{\Vmat} \bm c$ instead of $\hat{\Vmat}$ and $\bm c$.

The following theorem states that Algorithm \ref{Alg:underdetermined} indeed gives a good approximation $\hat{\bm y} := \sum_{1\leq j\leq M}\hat{\alpha}_j\vabs{\bm{a_j}} \ket{ a_{j}}$ for the optimal solution $(\Amat^\dagger)^{-1}\bm c$.
\begin{theorem}
For the linear system of Problem \ref{pro:underdetermined} in which matrix $\Amat^\dagger = \sum_{j = 1}^{M} \vabs{\bm a_j}\ket{j}\bra{a_j}\in \Cbb^{M\times N}$ can be accessed as in Assumption \ref{assume:quantummatrix}, vector $\bm c\in \Cbb^{M}$, and $\varepsilon >0$,
Algorithm \ref{Alg:underdetermined} outputs a vector $\bm s=\pbra{s_1,\cdots, s_M}$, such that $\sum_{j = 1}^M s_j \ket{a_j}$ is a solution of Problem \ref{pro:underdetermined} with high probability, with run time
$\tOrd{\frac{\Gamma_2^2 M^3 \kappa(\Amat)^{12}\vabs{\Amat^{-1}}^4\vabs{\bm c}^2\pbra{\vabs{\bm \Amat}^2\vabs{\bm \alpha^\ast} + \varepsilon + \vabs{\bm c}}^2}{\varepsilon^4}+\frac{\Gamma_2^2 M^3}{\vabs{\Amat}^4}
},$
where $\Gamma_2 :=  \max_{j} \vabs{\bm a_j}^2$.
\label{thm:SecUnderdetermined}
\end{theorem}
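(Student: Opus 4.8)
The plan is to mirror the proof of Theorem~\ref{thm:SecOverdetermined}, reducing Problem~\ref{pro:underdetermined} to Lemma~\ref{lem:approxCorrect}, but with two changes dictated by the under-determined geometry. First I would use Lemma~\ref{lem:decompX} to restrict the minimizer to the form $\bm y = \Amat\bm\alpha$ with $\bm\alpha\in\Cbb^M$; then $\Amat^\dagger\bm y = \Vmat\bm\alpha$ with $\Vmat := \Amat^\dagger\Amat$, and the goal becomes minimizing $\vabs{\Vmat\bm\alpha - \bm c}$. The obstruction is that $\Vmat\bm\alpha = \bm c$ need not be consistent, so Lemma~\ref{lem:approxCorrect}, which requires $\Vmat^{-1}\Vmat\bm q = \bm q$, does not apply directly. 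I would repair this by instead solving $\Vmat^2\bm\alpha = \Vmat\bm c$: since $\Vmat$ is positive semi-definite, $\mathrm{range}(\Vmat^2) = \mathrm{range}(\Vmat)\ni\Vmat\bm c$, so this system is consistent, and its pseudo-inverse solution $\Vmat^{-2}\Vmat\bm c = \Vmat^{-1}\bm c$ still equals $\bm\alpha^\ast := \argmin_{\bm\alpha}\vabs{\Vmat\bm\alpha - \bm c}$.

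Second, I would supply the approximations $\hat{\Vmat}^2$ and $\hat{\Vmat}\bm c$ built from the same Hadamard-test data as before: estimate $\hat{\Vmat}$ entrywise from the tests for $\cbra{\ket{0^{\log N}},\Umat_j^\dagger\Umat_k}$, then form $\hat{\Vmat}^2$ and $\hat{\Vmat}\bm c$ classically. The crux is to push the concentration bound of Lemma~\ref{lem:randommatrix} through these products. From $\hat{\Vmat}^2 - \Vmat^2 = \hat{\Vmat}(\hat{\Vmat} - \Vmat) + (\hat{\Vmat} - \Vmat)\Vmat$ I get $\vabs{\hat{\Vmat}^2 - \Vmat^2}\leq\pbra{\vabs{\hat{\Vmat}}+\vabs{\Vmat}}\vabs{\hat{\Vmat} - \Vmat} = \Ord{\vabs{\Amat}^2\vabs{\hat{\Vmat} - \Vmat}}$, valid once the sample count guarantees $\vabs{\hat{\Vmat} - \Vmat}\lesssim\vabs{\Vmat} = \vabs{\Amat}^2$, and similarly $\vabs{\hat{\Vmat}\bm c - \Vmat\bm c}\leq\vabs{\hat{\Vmat} - \Vmat}\vabs{\bm c}$. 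Since everything in the proof of Lemma~\ref{lem:approxCorrect} after the appeal to Lemma~\ref{lem:randommatrix} is deterministic in these operator-norm deviations, these two inequalities are exactly what is needed to re-run that argument for the pair $\pbra{\Vmat^2,\Vmat\bm c}$.

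Third, I would track the parameters. Setting $\eta := \varepsilon/\vabs{\Amat}^2$, Lemma~\ref{lem:PerturbBounded} forces $\lambda = \Ord{\eta\big/\pbra{\vabs{(\Vmat^2)^{-1}}^2\vabs{\Vmat\bm c}}}$; using $\vabs{(\Vmat^2)^{-1}} = \vabs{\Amat^{-1}}^4$ and $\vabs{\Vmat\bm c}\leq\vabs{\Amat}^2\vabs{\bm c}$ this is $\lambda = \Theta\pbra{\varepsilon\big/(\vabs{\Amat^{-1}}^8\vabs{\Amat}^4\vabs{\bm c})}$. Feeding the squared-matrix deviation (with its extra $\vabs{\Amat}^2$) into the sample-count step of Lemma~\ref{lem:approxCorrect} gives $T = \Theta\pbra{M\vabs{\Amat^{-1}}^{16}\vabs{\Amat}^{12}\vabs{\bm c}^2\pbra{\vabs{\Amat}^2\vabs{\bm\alpha^\ast}+\varepsilon+\vabs{\bm c}}^2/\varepsilon^4}$. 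Multiplying by the $M^2\Gamma_2^2$ per-matrix samples and rewriting $\vabs{\Amat^{-1}}^{16}\vabs{\Amat}^{12} = \kappa^{12}\vabs{\Amat^{-1}}^4$ reproduces the first run-time term; the second term $\Gamma_2^2 M^3/\vabs{\Amat}^4$ is just the minimum number of samples needed to certify $\vabs{\hat{\Vmat} - \Vmat}\lesssim\vabs{\Amat}^2$ above, and the classical $\Ord{M^3}$ cost of forming $\hat{\Vmat}^2$ and solving the shifted system is absorbed.

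Finally I would convert the vector estimate into the objective guarantee. Taking $\hat{\bm y} = \Amat\hat{\bm\alpha} = \sum_{j}\hat{\alpha}_j\vabs{\bm a_j}\ket{a_j}$ so that $s_j = \hat{\alpha}_j\vabs{\bm a_j}$, the same null-space decomposition $\bm c = \Vmat\bm\alpha^\ast + P_{N(\Vmat)}\bm c$ as in Theorem~\ref{thm:SecOverdetermined} yields $\vabs{\Amat^\dagger\hat{\bm y} - \bm c} - \min_{\bm y}\vabs{\Amat^\dagger\bm y - \bm c}\leq\vabs{\Vmat}\vabs{\hat{\bm\alpha} - \bm\alpha^\ast}\leq\vabs{\Amat}^2\eta = \varepsilon$ with high probability. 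I expect the main obstacle to be the second and third steps: because Lemma~\ref{lem:approxCorrect} is stated for a matrix drawn directly from the Bernoulli model of Problem~\ref{pro:QuadProApp}, one must carefully justify reusing it for the derived quantities $\hat{\Vmat}^2$ and $\hat{\Vmat}\bm c$ and correctly carry the extra factors of $\vabs{\Amat}$ and $\vabs{\Amat^{-1}}$ introduced by the squaring, which is exactly what lifts the condition-number dependence from $\kappa^4$ to $\kappa^{12}$ and creates the additional $\Gamma_2^2 M^3/\vabs{\Amat}^4$ term.
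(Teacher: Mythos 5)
Your proposal is correct and follows essentially the same route as the paper's own proof: restricting via Lemma~\ref{lem:decompX}, repairing consistency by passing to $\Vmat^2\bm\alpha = \Vmat\bm c$, forming $\hat{\Vmat}^2$ and $\hat{\Vmat}\bm c$ from the entrywise Hadamard-test estimate and bounding $\vabs{\hat{\Vmat}^2-\Vmat^2}$ by the same telescoping argument, then re-running Lemma~\ref{lem:approxCorrect} with $\eta=\varepsilon/\vabs{\Amat}^2$ and the identical choices of $\lambda$ and $T$. Your parameter bookkeeping (including the $\kappa^{12}$ rewriting and the origin of the $\Gamma_2^2 M^3/\vabs{\Amat}^4$ term from the requirement $T\geq M/\vabs{\Vmat}^2$) matches the paper exactly.
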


\begin{proof}
By Lemma \ref{lem:decompX}, there exits a solution in the space spanned by $\cbra{\ket{a_1}, \cdots, \ket{a_M}}$. 
Let $\bm y := \sum_{k = 1}^M \alpha_k\vabs{\bm{a_k}} \ket{ a_k}$ for some coefficients $\alpha_1, \cdots, \alpha_M \in \Cbb$, which can be written as $ \bm y = \Amat \bm \alpha$ with $\bm \alpha := (\alpha_1,\cdots, \alpha_M)^T$. With $\Vmat := \Amat^{\dagger}\Amat$, we have
 $\Amat^\dagger \bm y = \Vmat \bm \alpha$.
Note that $\Vmat$ matches with the definition of $\Vmat$ in Problem \ref{pro:QuadProApp}.
Therefore, a possible avenue to give the optimized $\bm y$ for Problem \ref{pro:underdetermined} is to find $\bm \alpha$ such that
\begin{align}
    \argmin_{\bm \alpha} 
   \vabs{ \Vmat\bm \alpha -\bm c}.
\label{eq:optPro}
\end{align}  
The optimized solution of Equation \eqref{eq:optPro} equals $\Vmat^{-1}\bm c$, which is also the solution of $\argmin_{\bm \alpha} 
   \vabs{ \Vmat^2\bm \alpha -\Vmat\bm c}$. Hence we would like to optimize
   \begin{align}
       \argmin_{\bm \alpha} 
   \vabs{ \Vmat^2\bm \alpha -\Vmat\bm c}.
   \label{eq:optProConsistent}
   \end{align}
Note that the system $\Vmat^2\bm \alpha = \Vmat\bm c$ is consistent since $\Vmat^2\Vmat^{-2} \Vmat\bm c =\Vmat \bm c$.  Let the shifted matrix be $\Qmat = \Vmat^2 + \lambda \Imat$, where $\lambda >0$ will be specified below.

Algorithm \ref{Alg:underdetermined} constructs a matrix $\hat{\Vmat}$ by Hadamard tests and multiplying by the norms similar to the construction of Algorithm \ref{Alg:overdetermined}.
The algorithm then shifts this matrix as $\hat{\Qmat} := \hat{\Vmat}^2 + \lambda \Imat$. 
The vector $\bm c\in \Cbb^{M}$ can be accessed classically and hence there is no need to approximate it.
The algorithm then solves $\hat{\bm \alpha} =\argmin_{\bm\alpha} \vabs{\hat{\Qmat}\bm\alpha - \hat{\Vmat} \bm c}$ via a classical algorithm to obtain an approximate solution $\hat{\bm \alpha}$ of Equation \eqref{eq:optProConsistent}.
It remains to show that the error is again bounded. 
 
Using the Hadamard tests, 
$\vabs{\hat{\Vmat} - \Vmat} \leq \Ord{\sqrt{M/T}}$ by $\Ord{\Gamma_2^2 M^2 T}$ independent measurements  
with the same analysis as in the proof of Lemma \ref{lem:approxCorrect}. Hence,
\begin{align}
      \vabs{\hat{\Qmat} - \Qmat} 
  &= \vabs{\hat{\Vmat}^2 - \Vmat\hat{\Vmat}
  +\Vmat\hat{\Vmat} - \Vmat^2}\\
 &\leq {\vabs{\hat{\Vmat}}} \vabs{\hat{\Vmat} - \Vmat} + \vabs{\Vmat} \vabs{\hat{\Vmat} - \Vmat}\\
 &\leq {\vabs{\hat{\Vmat} - \Vmat}} \vabs{\hat{\Vmat} - \Vmat} +  2\vabs{\Vmat}\vabs{\hat{\Vmat} - \Vmat}\\
&\leq c\pbra{\sqrt{M/T}  +  \vabs{\Vmat}}\sqrt{M/T}\\
&\leq c'\vabs{\Vmat}\sqrt{M/T},
\end{align}
for suitable constants $c, c',$ and $T\geq M/\vabs{\Vmat}^2$. We proceed analogously to the proof of Lemma \ref{lem:approxCorrect}, where we replace the vectors $\bm q$ and $\hat{\bm q}$ therein with the right-hand sides of the present problem $\Vmat \bm c$ and $\hat{\Vmat}\bm c$, respectively, and set $\eta \to \varepsilon/\vabs{\Vmat}$.
Therefore,  we have
$\vabs{\hat{\bm \alpha} - \bm \alpha^*}\leq \varepsilon/\vabs{\Vmat}$, where $\bm \alpha^{*}$ is the optimized solution of Equation \eqref{eq:optProConsistent},
when the perturbation
\be
\lambda =\frac{\varepsilon}{2\vabs{\Amat^{-1}}^8\vabs{\Amat}^4\vabs{\bm c}} \leq \frac{\eta}{2\vabs{\Vmat^{-2}}^2\vabs{\bm q}},
\ee
and
\begin{widetext}
\be
T&=&\frac{c_1 M \pbra{\vabs{\Vmat}\pbra{\vabs{\bm \alpha^\ast} + \eta}+ \vabs{\bm c}}^2}{\lambda^2 \eta^2} + \frac{M}{\vabs{\Vmat}^2}\\
&=&\frac{c_2 M\pbra{\vabs{\Vmat}\vabs{\bm \alpha^\ast} + \varepsilon + \vabs{\bm c}}^2}{\varepsilon^2/\vabs{\Amat}^4}\times \frac{4\vabs{\Amat^{-1}}^{16}\vabs{\Amat}^8\vabs{\bm c}^2}{\varepsilon^2}+ \frac{M}{\vabs{\Amat}^4}\\
&=& \frac{c_3 M \kappa(\Amat)^{12}\vabs{\Amat^{-1}}^4\vabs{\bm c}^2\pbra{\vabs{\bm \Amat}^2\vabs{\bm \alpha^\ast} + \varepsilon + \vabs{\bm c}}^2}{\varepsilon^4}+ \frac{M}{\vabs{\Amat}^4},
\ee
\end{widetext}
for some large constants $c_1,c_2,c_3$, where the first term comes from Equation \eqref{eq:PartsolBound} in Lemma \ref{lem:approxCorrect}.
With similar analysis to the proof of Theorem \ref{thm:SecOverdetermined}, we have $\vabs{\Vmat \hat{\bm \alpha} - \bm c} - \min_{\bm \alpha}\vabs{\Vmat \bm \alpha - \bm c}\leq \varepsilon$.
Hence $\hat{\bm y} = \sum_{i = 1}^M \hat{\alpha}_i \vabs{\bm a_i} \ket{a_i}$ is a solution of Problem \ref{pro:underdetermined}.
\end{proof}

\begin{figure*}
\begin{center}
\begin{algorithm}[H]
\label{Alg:underdetermined}
\SetKwInOut{Input}{Input}
\SetKwInOut{Output}{Output}
\Input{Matrix $\Amat^\dagger=\sum_{1\leq j\leq M}\|\bm a_j\|\ket{j}\bra{a_j} \in \Cbb^{M\times N}$ that satisfies Assumption \ref{assume:quantummatrix}, vector $\bm c\in \Cbb^{M}$ and $\varepsilon>0$.}
\Output{Vector $\bm s\in \Cbb^M$, where $\hat{\bm y} = \sum_i s_i \ket{a_i}$ is a solution of Problem \ref{pro:underdetermined}.}
Define matrix $\hat{\Vmat}\in \Cbb^{M\times M}$\;
Let $\Gamma_2 :=  \max_{j} \vabs{\bm a_j}^2$, $\lambda :=  \frac{\varepsilon}{2\vabs{\Amat^{-1}}^8\vabs{\Amat}^{4}\vabs{\bm b}}$ and $T := \Ord{\frac{M \kappa(\Amat)^{12}\vabs{\Amat^{-1}}^4\vabs{\bm c}^2\pbra{\vabs{\bm \Amat}^2\vabs{\bm \alpha^\ast} + \varepsilon + \vabs{\bm c}}^2}{\varepsilon^4}}+ \frac{M}{\vabs{\Amat}^4}$\;
Let $\hat{v}_{j,k} $ be the approximate value of $\braket{a_j|a_k}$ by repeatedly measuring quantum circuit of Hadamard test $\cbra{\ket{0^{\log N}}, \Umat_j^\dagger \Umat_k}$ $\Gamma_2^2 T$ times\;
 $\hat{V}_{j,k} := \vabs{\bm a_j} \vabs{\bm a_k} \hat{v}_{j,k}$ for any $j,k\in[M]$\;
Let matrix $\hat{\Qmat}:=\hat{\Vmat}^2 + \lambda \Imat$\;
Calculate $\hat{\bm \alpha} := \hat{\Qmat}^{-1}\hat{\Vmat}{\bm c}$ with classical algorithm\;
Calculate $\bm s =\pbra{\hat{\alpha}_1\vabs{\bm a_1}, \cdots, \hat{\alpha}_M \vabs{\bm a_m}}$\;
\Return $\bm s$\;
\caption{Hybrid quantum-classical algorithm for under-determined linear system.}
\end{algorithm}
\end{center}
\end{figure*}

Note that the number of measurements $T$ can be simplified to $\Ord{M\kappa(\Amat)^{16}\pbra{\vabs{\bm \alpha^\ast} + 1}^2/\varepsilon^4}$ when $\vabs{\bm c}=\Ord{1}$,
$\vabs{\Amat} = \Theta(1)$ and $\varepsilon = \Ord{1}$.
The same simplifications are used in the comparison Table \ref{tab:ComparisonAlg}.

We now discuss some ways of using the output of Algorithm \ref{Alg:underdetermined} and refer to Appendix \ref{app:ApplicationErrorBound} for more details.
Let $\hat{\bm \alpha}$ satisfy $\vabs{\hat{\bm \alpha} - \bm \alpha}\leq \eta$, and $s_i = \alpha_i \vabs{\bm a_i}$, where $\eta$  depends on how long we run Algorithm \ref{Alg:underdetermined}. Take $\eta$ to be small enough such that it compounds with errors arising from the additional quantum circuits and measurements to a final error $\varepsilon$, see Appendix \ref{app:ApplicationErrorBound}.
There are several applications with the hybrid form $\cbra{s_i, \ket{a_i}:i\in[M]}$ of our output $\hat{\bm y}$. 
For example, it can be applied to evaluate the inner product with another quantum state.
Given proper $\eta$, a vector $\ket{v}\in\Cbb^{N}$, and the $\polylog N$-depth quantum circuit $\Umat_{v}$ such that $\ket{v} = \Umat_{v}\ket{0^{\log N}}$. We can approximate the inner product $\bra{v}\bm y = \sum_{i = 1}^M s_i \braket{v|a_i}$ with Hadamard tests of $\cbra{\ket{0^{\log N}}, \Umat_v^{\dagger} \Umat_j}$ for all of $j\in[M]$, with $\Ord{M\vabs{\Amat}_F^2 \vabs{\bm \alpha}^2/\varepsilon^2}$ independent measurements, and additive error $\varepsilon$ in total. 

The hybrid state can also be applied to 
measure Hermitian operators. 
Given proper $\eta$, $\Ord{\polylog N}$-depth quantum circuits of $\Hmat_k$ for $k\in[K_H]$ and $K_H = \Ord{\polylog N}$. Let the Hermitian operator be given as $\Hmat = \sum_{k =1}^{K_H} \gamma_k \Hmat_k$, where $0<\gamma_k<\Delta_H$ are coefficients.
The expectation $\bm y^{\dagger}\Hmat \bm y$ of $\Hmat$ can be approximated with $\tOrd{\frac{M^2 \Delta_H^2\vabs{\Amat}_F^4\vabs{\bm \alpha}^4}{\varepsilon^2}}$ independent measurements and $\varepsilon$ additive error.

Quantum systems are disturbed by interaction with an environment, leading to loss of coherence which especially affects quantum circuits with high depth. Hence, it is a worthwhile goal to make the above algorithms more practical for near-term quantum devices by reducing and optimizing the circuit depth.
To this end, we consider the circuit optimization of the Hadamard test in the next section.

\section{Optimization of the circuit of the Hadamard test\label{sec:CirHadamardTest}}

This section introduces the techniques for circuit optimization and applies them to the optimization of the depth of the Hadamard test. There are several approaches to construct quantum computers, including ion traps~\cite{figgatt2019parallel,wright2019benchmarking} and superconducting systems ~\cite{arute2019quantum, IBMQ, ye2019propagation}. Both approaches have their advantages and disadvantages. We propose two circuit optimization algorithms which are suitable for these two structures, respectively. The first algorithm is suitable for ion trap quantum computers, in which all of qubits are connected. The second algorithm is suitable for present-day superconducting quantum devices, in which the connection of the qubits can be represented as a lattice.

Sec.~\ref{subsec:HadamardTest} gives an overview of the Hadamard test and some basic concepts about quantum circuits, Sec. \ref{subsec:CompleteOpt} gives an optimization algorithm to optimize the depth of the Hadamard test under a complete graph, in which a CNOT gate can operate on any two qubits. Sec. \ref{subsec:latticeOpt} gives an algorithm to optimize the depth of the Hadamard test under a graph, here the 2D lattice, and generalizes it to any connected graph with a Hamiltonian path. For convenience, we denote $n :=\ceil{\log N}$ and set the number of qubits to be $n$.

\subsection{Some notation and review of Hadamard test\label{subsec:HadamardTest}}

\textbf{Definition of Gates:}
$\CNOT_{ij}$ means CNOT gate with control qubit $i$ and target qubit $j$. An $n$-qubit CNOT circuit is any $n$-qubit quantum circuit consisting only of CNOT gates.
Toffoli$_{ijk}$ means Toffoli gate with control qubits $i, j$ and target qubit $k$. We use $\Rmat_j(\theta)$ to represent the single-qubit rotation gate with parameter $\theta$ on qubit $j$. Here parameter $\theta$ contains the information of angle as well as the axis of rotation with a little abuse of notation. 

\textbf{Depth of a circuit:}
The depth of a circuit is the maximal length of a path from any initial gate to any of final gates in the circuit, where the gate set contains single and two-qubit gates (and not three-qubits gates, for example). A $d$-depth circuit $\Ccal$ is constructed by $d$ one layer circuits: $\Ccal = \Ccal^{(d)}\cdots \Ccal^{(1)}$, where $\Ccal^{(i)}$ is the $i$-th layer of circuit.

\textbf{Quantum system under a graph:}
For an $n$-qubit quantum system, each qubit is represented as a vertex $v \in V$ in the graph $G(V,E)$ with edge set $E$, where $|V|=n$. The quantum gates  can only operate on single qubits or two qubits which are connected in $G$, and we say this $n$-qubit quantum system is under the graph $G$ (or constrained to graph $G$).

\textbf{Permutation $\bm \sigma$ on graph $G(V,E)$:}
a vertex $v_i\in V$ is labeled to $\sigma_i\in [|V|]$, where $\bm \sigma = \pbra{\sigma_1, \cdots, \sigma_{|V|}}$ is a permutation ($\sigma_i \ne \sigma_j$ for any $i\ne j$).

\textbf{Snakelike order:}
For a lattice graph $G(V,E)$, the snakelike order of $G$ is to label the top left corner to $1$, and then proceed to the rightmost in increasing order, and then continue in the same fashion for the next row. In other words, the odd rows are labelled in an increasing order, the even rows are labelled in a decreasing order, while the columns are arranged in the increasing order. Figure \ref{fig:Snakelikelattice} shows the snakelike order on $4\times 4$ lattice.

\begin{figure}
    \centering
    \includegraphics[width=0.2\textwidth]{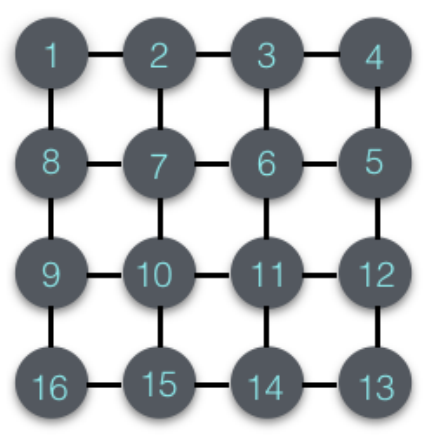}
    \caption{An example for snakelike order on $4\times 4$ lattice.}
    \label{fig:Snakelikelattice}
\end{figure}

\textbf{Hadamard test:} Hadamard test of $\cbra{\ket{0}, \Umat}$
 is to calculate the value $\bra{0}\Umat\ket{0}$ by performing quantum circuits, given $n$-qubit quantum state $\ket{0}$ and unitary $\Umat$.
Figure \ref{fig:CirHadamard} depicts the circuit of the Hadamard test. It is easy to check that when the first qubit is finally measured in the computational basis, the probability to obtain the `0' outcome $\Pr(0) =\pbra{1 + \real{\bra{0}\Umat\ket{0}}}/2$ for $j = 0$, and $\Pr(0) = \pbra{1 + \imag{\bra{0}\Umat\ket{0}}}/2$ for $j = 1$ in Figure \ref{fig:CirHadamard}.
Independently perform this process $\Ord{1/\varepsilon^2}$ times obtains $\bra{0}\Umat\ket{0}$ with additive error $\varepsilon$ and high success probability by Hoeffding's inequality.

However, given a $d$-depth circuit $\Ccal$ of unitary $\Umat$, the naive method to perform the Control-$\Ccal$ operation requires $\Ord{dn}$-depth if we do not restrict the connectivity of any two qubits, and requires $\Ord{dn^2}$-depth for any connected graph by Refs.~\cite{kissinger2020cnot,nash2020quantum,wu2019optimization}, which is costly for large qubit systems, especially when $n \gg d$.  Hence in the next two subsections, we focus on the depth optimization of Control-$\Ccal$ circuit for a given $d$-depth circuit $\Ccal$ of unitary $\Umat$.

\begin{figure}[h]
    \centering
    \[  \Qcircuit @C=1em @R=.7em {
  \lstick{\ket{0}}   &\qw & \gate{H}  &\ctrl{1} & \gate{S^{j}} & \gate{H}  & \meter\\
   \lstick{\ket{0}}  & \ustick{n} \qw &{/}\qw & \gate{\Umat} &\qw & \qw & \qw
    }\]
    \caption{Circuit for Hadamard test, $\Pr(0) =\pbra{1 + \real{\bra{0}\Umat\ket{0}}}/2$ when $j = 0$, and $\Pr(0) = \pbra{1 + \imag{\bra{0}\Umat\ket{0 }}}/2$ when $j = 1$. Here $H$ is Hadamard gate and $S$ is phase gate. }
    \label{fig:CirHadamard}
\end{figure}

\subsection{Depth optimization of Hadamard test under complete graph\label{subsec:CompleteOpt}}

The depth of Hadamard test of $\{\ket{0}, \Umat\}$ is of the same order as the depth of the Control-$\Ccal$ circuit --- the core part of Hadamard test. Recall that $\Ccal$ is the quantum circuit implementation of $\Umat$ with depth $d$. 
Here we introduce an optimized Control-$\Ccal$ circuit in the fully-connected setting. We prove that the optimized depth of Control-$\Ccal$ equals $\Ord{\log s + d\pbra{\log \frac{n}{s} + 1}}$ with $s$ ancillas. The main result is Lemma \ref{lem:sancillas} which immediately leads to 
Theorem \ref{thm:sancillasIntro}.

We also prove that asymptotically $\Omega( \log n)$ quantum depth are required to compute $\bra{0^{n}}\Ccal\ket{0^{n}}$ regardless of the number of ancillas and the connectivity of the qubits, see Theorem \ref{thm:LowerboundHadamardIntro}. Hence,  for $d=\Ord{\log n}$ and $s = \Theta(n)$, the optimized depth given by our algorithm is asymptotically tight with $s$ ancillas. 
The following claim and lemma are precursors to the circuit optimization algorithm of Lemma \ref{lem:sancillas}.
\begin{claim}
Given an $(n-1)$-depth CNOT circuit constructed by $\CNOT_{1,j}$ for $2\leq j \leq n$, there exists an equivalent CNOT circuit with depth $2\ceil{\log n} - 1$ and without ancillas.
\label{claim:CnotOneCTGate}
\end{claim}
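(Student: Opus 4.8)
The plan is to read the given circuit as an $\Fbb_2$-linear map and exhibit a low-depth factorization of it. All the gates $\CNOT_{1,j}$ share the control qubit $1$ and have distinct targets, so they pairwise commute and the circuit realizes the \emph{fan-out} map $L\colon (x_1,\dots,x_n)\mapsto (x_1,\, x_2\oplus x_1,\dots,\, x_n\oplus x_1)$ over $\Fbb_2$; written out naively it has depth $n-1$ precisely because every gate reuses qubit $1$. Since a CNOT circuit is determined by its $\Fbb_2$-linear action, it suffices to reproduce $L$ by some depth-$(2\ceil{\log n}-1)$ CNOT circuit. I would first record the elementary but crucial symmetry of CNOT circuits: a single gate $\CNOT_{i,j}$ is the transvection $x\mapsto x+e_j e_i^{\top}x$, so reversing the order of the layers of a CNOT circuit and swapping the control and target of every gate produces a circuit of the \emph{same} depth implementing the transposed map; within one layer the gates act on disjoint qubits, so this operation sends a valid depth-one layer to a valid depth-one layer. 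Consequently it is enough to build a depth-$(2\ceil{\log n}-1)$ circuit for the transpose $L^{\top}$, namely ``write the total parity $x_1\oplus\cdots\oplus x_n$ into qubit $1$ and leave all other qubits unchanged,'' and then transpose it back.

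For $L^{\top}$ I would use a balanced binary reduction tree followed by a mirrored restoration tree. In the \emph{up-sweep} I pair the qubits, XOR the second of each pair into the first, then recurse on the representatives; after $\ceil{\log n}$ layers qubit $1$ holds the full parity, while $\Ord{\log n}$ internal representative qubits have been overwritten by partial parities. In the \emph{down-sweep} I restore exactly these corrupted representatives, processing the tree from the coarse levels down to the fine ones: at each level a corrupted representative is un-combined by re-applying the very gate that created it, which is legitimate because its ``parent'' in the tree still carries precisely the partial parity that must be cancelled. One checks that every level of the down-sweep touches pairwise-disjoint qubits, so the restoration costs $\ceil{\log n}-1$ layers --- one fewer than the up-sweep, since the top combine writes into qubit $1$, which we keep rather than restore. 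Adding the two phases gives depth $\ceil{\log n}+(\ceil{\log n}-1)=2\ceil{\log n}-1$, and since only the $n$ data qubits are used, no ancillas are needed.

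The step I expect to be the main obstacle is the precise scheduling of the down-sweep. Two conditions must hold simultaneously: each restoration layer must consist of CNOTs on disjoint qubit pairs (so that it genuinely is one layer), and each source qubit must still hold the required partial parity at the instant it is used, which forces restoring coarser levels before finer ones. I would verify both by induction on the tree level, tracking for each qubit the $\Fbb_2$-combination of inputs it currently stores. The small case $n=8$ (up-sweep of depth $3$, down-sweep of depth $2$) already exhibits the conflict one must avoid --- using a qubit as a cancellation source in the same layer in which it is itself being restored --- and shows how the coarse-to-fine ordering resolves it. Finally, when $n$ is not a power of two I would run the identical argument on an incomplete binary tree over the $n$ leaves; the number of levels is still $\ceil{\log n}$ up and $\ceil{\log n}-1$ down, so the stated bound is unchanged.
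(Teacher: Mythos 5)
Your proposal is correct and takes essentially the same route as the paper's proof in Appendix~E: both rest on the transpose duality between the fan-out map and the parity map (reverse the layer order and swap control/target of every CNOT, which preserves depth), and both realize the parity map by the same binary-tree scheme of $\ceil{\log n}$ combining layers plus $\ceil{\log n}-1$ restoration layers, giving $2\ceil{\log n}-1$ in total. The paper phrases the construction as parallel row additions taking $\Imat$ to the target matrix and invokes the transpose only as a verification device, but the resulting circuit coincides with yours up to reversal and transposition.
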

This CNOT circuit can be obtained easily by the optimized CNOT circuit $\CNOT_{j,n}$ for $1\leq j \leq n$ in Ref.~\cite{moore2001parallel}. For the completeness of this manuscript, we also give the proof of this claim in Appendix \ref{app:copy}. The optimized depth also matches the lower bound $\Omega(\log n)$~\cite{jiang2020optimal}.

\begin{lemma}
For any $n$-qubit and one-depth circuit $\Ccal$,
the depth of Control-$\Ccal$ can be paralleled to $12\ceil{\log n} +9$ without ancillas, using CNOT and single-qubit gates.
\label{lem:OneLayerLog}
\end{lemma}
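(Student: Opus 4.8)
The plan is to reduce $\Control$-$\Ccal$ to a constant number of ``fan-out'' operations of the form $\prod_{i\in S}\CNOT_{c,i}$, where $c$ denotes the control qubit of the Hadamard test and $S$ is a subset of the $n$ data qubits, each of which is cheap by Claim \ref{claim:CnotOneCTGate}, interleaved with a constant number of single-qubit and disjoint-two-qubit layers. Since $\Ccal$ is one layer, its gates act on pairwise-disjoint qubits, so $\Control\text{-}\Ccal = \prod_{g}\Control\text{-}g$ is a product of controlled single- and two-qubit gates whose targets are disjoint but which all share the single control $c$. This shared control is the only obstruction to running them in $\Ord{1}$ depth, and removing it is exactly what the fan-out trick accomplishes.

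First I would handle the single-qubit gates. For a single-qubit $G$ I use the standard $ABC$ decomposition $G = e^{i\alpha}\,A\,X\,B\,X\,C$ with $ABC=\Imat$, so that
\begin{equation}
\Control\text{-}G = P_c(\alpha)\cdot A_t\cdot \CNOT_{c,t}\cdot B_t\cdot \CNOT_{c,t}\cdot C_t,
\end{equation}
where $t$ is the target and $P_c(\alpha)=\diag(1,e^{i\alpha})$ acts on $c$. Applying this simultaneously to all single-qubit gates of the layer, the factors $A_t,B_t,C_t$ on distinct targets assemble into three depth-$1$ layers, the various $P_c(\alpha)$ collapse into a single phase gate on $c$, and the two rounds of $\CNOT_{c,t}$ assemble into two fan-outs $\prod_t \CNOT_{c,t}$. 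By Claim \ref{claim:CnotOneCTGate} each such fan-out admits an ancilla-free equivalent of depth $2\ceil{\log n}-1$ (valid for arbitrary target states, e.g.\ by conjugating the fan-out by Hadamards into a parity/fan-in circuit that is computed and uncomputed along a binary tree), so the single-qubit part costs $2(2\ceil{\log n}-1)+\Ord{1}$.

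The harder part is the two-qubit gates, which is where I expect the real work. For a two-qubit $U$ on a pair $(a,b)$ I use the KAK (magic-basis) decomposition $U = e^{i\phi}(k_1\otimes k_2)\,N(\theta_x,\theta_y,\theta_z)\,(k_3\otimes k_4)$ with $N = \exp\!\pbra{i(\theta_x X_aX_b+\theta_y Y_aY_b+\theta_z Z_aZ_b)}$ and single-qubit $k_1,\dots,k_4$. Controlling the outer single-qubit factors reuses the $ABC$ identity above, while controlling $N$ factor-by-factor uses $\Control\text{-}\exp(i\theta Z_aZ_b)=\CNOT_{a,b}\cdot(\Control\text{-}\Rmat_b(\theta))\cdot\CNOT_{a,b}$ (and the $XX,YY$ analogues obtained by local basis changes), where the conjugating $\CNOT_{a,b}$ need not be controlled since they cancel when $c=0$, and $\Control\text{-}\Rmat_b(\theta)$ is again two $\CNOT_{c,b}$ gates plus single-qubit gates. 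Two things must then be verified: that all $\CNOT_{a,b}$-type gates internal to the pairs never touch $c$ and, the pairs being disjoint, parallelize into $\Ord{1}$ depth; and that every use of $c$ across \emph{all} gates of the layer can be grouped into a bounded number $r\le 6$ of fan-out rounds $\prod_{i\in S_r}\CNOT_{c,i}$ by aligning the controlled-rotation steps of the different gates. The main obstacle is precisely this bundling bookkeeping---showing that a generic controlled two-qubit gate needs at most these few $c$-rounds, and that rounds arising from different disjoint gates merge into a single fan-out.

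Finally I would tally the depth. There are at most $6$ fan-out rounds, each of depth $2\ceil{\log n}-1$ by Claim \ref{claim:CnotOneCTGate}, giving $12\ceil{\log n}-6$; the intervening layers of single-qubit gates and disjoint $\CNOT_{a,b}$ gates, together with the single collapsed phase gate on $c$, contribute only a constant, and careful accounting of these constants yields the claimed bound $12\ceil{\log n}+9$. No qubit outside the $n$ data qubits and the control $c$ is ever used, so the construction is ancilla-free.
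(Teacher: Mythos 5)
Your overall architecture---reduce Control-$\Ccal$ to a bounded number of fan-outs $\prod_{i\in S}\CNOT_{c,i}$, compress each to depth $2\ceil{\log n}-1$ via Claim \ref{claim:CnotOneCTGate}, and merge rounds across the disjoint gates of the layer---is sound, and your single-qubit part coincides with the paper's. The genuine gap is precisely the step you yourself flag as ``bundling bookkeeping'': with the decomposition you actually propose, a generic controlled two-qubit gate consumes $10$ fan-out rounds, not $6$. Controlling $(k_3\otimes k_4)$ costs $2$ rounds, controlling $(k_1\otimes k_2)$ costs $2$ rounds, and controlling $N$ \emph{factor-by-factor} costs $2$ rounds for each of the three commuting factors; these last $3\times 2$ rounds cannot be merged with one another, because the controlled rotations of distinct factors are separated by uncontrolled local basis changes (e.g.\ $H\otimes H$ for the $XX$ factor) and by the conjugating $\CNOT_{a,b}$ gates, which do not commute with $\CNOT_{c,b}$. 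Ten rounds give depth $10\pbra{2\ceil{\log n}-1}+\Ord{1}\approx 20\ceil{\log n}$, overshooting $12\ceil{\log n}+9$, so the proposal as written does not prove the stated bound.

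The statement is rescued by a specialization you never make: the paper proves the lemma for one-depth circuits over the gate set of $\CNOT$s and single-qubit gates, writing $\Ccal=\bigotimes_{k}\CNOT_{i_k,j_k}\bigotimes_{p}\Rmat_p(\theta)$. Since $\CNOT$ is locally equivalent to $\exp\pbra{i\frac{\pi}{4}Z_aZ_b}$, i.e.\ has a \emph{single} nontrivial canonical angle, your per-gate count drops to $2+2+2=6$, the rotation gates' $2$ rounds align with these, and your tally then does give $12\ceil{\log n}+\Ord{1}$. Alternatively, genuinely arbitrary two-qubit gates can be handled within the bound, but only by a different decomposition: diagonalize $U=VEV^{\dagger}$ so that Control-$U=(\Imat\otimes V)\pbra{\text{Control-}E}(\Imat\otimes V^{\dagger})$ with $V$ uncontrolled and $E$ diagonal; a controlled diagonal two-qubit gate needs just $4$ fan-out rounds. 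For contrast, the paper's own proof avoids all of this bookkeeping: each controlled $\CNOT$ is a Toffoli sharing the control $c$, the layer's $\CNOT$s are split into two halves, and each half's Toffolis are implemented by the borrowed-ancilla identity of Fig.~\ref{cir:ToffoliEquivalent} \cite{barenco1995elementary} using the \emph{other} half's idle qubits as borrowed ancillas, so that $c$ enters only through $4$ fan-outs (plus $2$ for the rotation layer) while all Toffolis run in parallel at constant depth. You should either restrict your argument to $\CNOT$ layers, making the single-angle observation explicit, or replace the factor-by-factor KAK control with one of these decompositions; the round count as proposed does not close.
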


We postpone the proof of this lemma to Appendix \ref{app:onelayercontrol}. The idea of the proof is optimizing the control circuit of $\Ccal$ where $\Ccal$ is one layer of single-qubit gates $\Rmat(\theta)$ and one layer of CNOT gates respectively.
By this lemma, given $\Ord{1}$-depth circuit $\Ccal$, $\Ord{\log n}$-depth is enough to construct Control-$\Ccal$, which also matches the lower bound, since for some $\Ord{1}$-depth $\Ccal$, the depth of Control-$\Ccal$ is $\Omega\pbra{\log n}$. The following lemma generalizes the result for one-depth circuits of Lemma \ref{lem:OneLayerLog} to $d$-depth circuits.

\begin{lemma}
For any $n$-qubit and $d$-depth circuit $\Ccal$, the depth of Control-$\Ccal$ can be paralleled to $2\ceil{\log s} + 12d\ceil{\log \frac{n}{s}} +9d $ with $s-1$ ancillas where $s\in[n]$, using CNOT and single-qubit gates.
\label{lem:sancillas}
\end{lemma}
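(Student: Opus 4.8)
The plan is to express Control-$\Ccal$ as the product of $d$ controlled one-layer circuits and to avoid re-fanning-out the control bit from scratch in every layer: instead I would distribute the control value once across the $s-1$ ancillas, so that each layer needs only a short local fan-out of length $\log(n/s)$ rather than $\log n$. Writing $\Ccal=\Ccal^{(d)}\cdots\Ccal^{(1)}$, we have $\text{Control-}\Ccal=\prod_{i=d}^{1}\text{Control-}\Ccal^{(i)}$, and Lemma \ref{lem:OneLayerLog} already controls a single layer on $m$ qubits in depth $12\ceil{\log m}+9$, the only $\log$-scaling step being the fan-out of the control to the data qubits.

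First I would copy the control qubit into the $s-1$ ancillas, all initialized in $\ket{0}$, by a binary tree of CNOT gates of depth $\ceil{\log s}$; because the targets start in $\ket{0}$ this is a plain fan-out, cheaper than the general circuit of Claim \ref{claim:CnotOneCTGate}. This yields $s$ identical copies of the control value. Crucially, these copies are used only as controls in what follows and are never flipped, so they persist unchanged through all $d$ layers and can be uncopied in one final binary tree of depth $\ceil{\log s}$, restoring the ancillas to $\ket{0}$.

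Next, for each layer I would partition the $n$ data qubits into $s$ blocks of size at most $\ceil{n/s}$ and attach one of the $s$ control copies to each block. Within every block in parallel I would fan that block's copy out to the block's qubits (depth $\ceil{\log(n/s)}$), apply the controlled gates, and reverse the fan-out. A gate lying inside a block is controlled by its block's copy exactly as in Lemma \ref{lem:OneLayerLog}; a gate straddling two blocks is controlled once, by its designated block's copy, and becomes a three-qubit Toffoli whose two non-control legs sit in different blocks --- harmless under the assumed all-to-all connectivity and still constant depth. Since the original gates of the layer are disjoint and the copies are distinct, all these controlled gates act in parallel, so the per-layer cost is exactly that of Lemma \ref{lem:OneLayerLog} with $n$ replaced by $n/s$, namely $12\ceil{\log(n/s)}+9$. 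Summing gives $2\ceil{\log s}+d\pbra{12\ceil{\log(n/s)}+9}=2\ceil{\log s}+12d\ceil{\log(n/s)}+9d$.

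I expect the main obstacle to be making the per-layer reduction to Lemma \ref{lem:OneLayerLog} airtight: one must verify that the only part of that lemma whose depth grows with the number of qubits is the control fan-out, so that substituting the block fan-out of length $\ceil{\log(n/s)}$ preserves the constants $12$ and $9$, and one must check that the cross-block gates really can be absorbed into this budget rather than forcing a second round of fan-out. A secondary bookkeeping point is confirming that the transient within-layer fan-outs leave the $s$ persistent copies exactly restored after each layer, so that the single uncopy at the end suffices and no ancilla is left entangled with the data.
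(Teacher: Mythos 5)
Your proposal is correct and takes essentially the same approach as the paper: copy the control onto the $s-1$ ancillas via a CNOT tree of depth $\ceil{\log s}$, apply Lemma \ref{lem:OneLayerLog} in parallel to $s$ groups within each layer (each group controlled by its own copy, giving depth $12\ceil{\log(n/s)}+9$ per layer), and uncompute the copies at the end. The only cosmetic difference is that the paper partitions each layer's \emph{gates} (rather than the data qubits) into $s$ sets occupying at most $\ceil{n/s}$ qubits each, which sidesteps your cross-block-gate concern entirely.
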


\begin{proof}
Let the $d$-depth circuit be  $\Ccal := \Ccal^{(d)}\Ccal^{(d-1)}\cdots \Ccal^{(1)}$, where $\Ccal^{(j)}$ is one layer of the circuit.
 Let qubits $2,\cdots, s$ be ancillas, and initialized to $\ket{0}$, where $s\in [n]$, as shown in Fig. \ref{fig:ctrlOneCnot}. Let $\ket{\psi_0} = \pbra{\alpha\ket{0} + \beta\ket{1}}\ket{0^{s-1}}\ket{\phi}$ be the initial state, where $\alpha, \beta \in \Cbb$, and $|\alpha|^2 + |\beta|^2 = 1$.
 We generate transformation Control-$\Ccal$ by the following process:
\begin{itemize}
    \item [(1)] Copy step: $\ket{\psi_0}\rightarrow \ket{\psi_1}:= \pbra{\alpha\ket{0^s} + \beta \ket{1^s}} \ket{\phi}$.
    (Via the copy circuit~\cite{moore2001parallel}.)
 \item [(2)] Control of $\Ccal$: $\ket{\psi_1}\rightarrow \ket{\psi_2}:= \alpha \ket{0^s}\ket{\phi} + \beta \ket{1^s}\Ccal \ket{\phi}$. (For each depth $\Ccal^{(j)}$, 
we divide the gates into $s$ sets such that each set $S_i$ occupies at most $\ceil{n/s}$ qubits. Use the $i$-th qubit to control the gates in the set $S_i$ for $i\in [s]$.)
 \item [(3)] Uncompute step: $\ket{\psi_{2}} \rightarrow \alpha\ket{0}\ket{0^{s-1}}\ket{\phi} + \beta \ket{1}\ket{0^{s-1}}\Ccal\ket{\phi}$. (Via the inverse operations of the Copy step.)
\end{itemize}

By Moore \emph{et al.}~\cite{moore2001parallel}, the depth of the Steps (1) and (3) are both $\ceil{\log s}$.
By Lemma \ref{lem:OneLayerLog}, we can optimize the control of the $n/s$ targets to $12\ceil{\log \frac{n}{s}} + 9$-depth for each $\Ccal^{(j)}$ in Step (2). Thus the total depth of this process equals $2\ceil{\log s} + d\pbra{12\ceil{\log \frac{n}{s}} + 9}$ in the worst case.
\end{proof}

\begin{figure*}
\begin{center}
\hspace{1em}
\begin{tabular}{c}
\vspace{-.6em}\\
\begin{pgfpicture}{0em}{0em}{0em}{0em}
\color{cyan}
\pgfsetdash{{0.2cm}{0.2cm}{0.2cm}{0.2cm}}{0cm}
\pgfxyline(1,0.1)(3.5,0.1)
\pgfxyline(3.5,0.1)(3.5,-2.0)
\pgfxyline(3.5,-2.0)(1,-2.0)
\pgfxyline(1,-2.0)(1,0.1)
\color{cyan}
\pgfsetdash{{0.2cm}{0.2cm}{0.2cm}{0.2cm}}{0cm}
\pgfxyline(8.5,0.1)(11,0.1)
\pgfxyline(11,0.1)(11,-2.0)
\pgfxyline(11,-2.0)(8.5,-2.0)
\pgfxyline(8.5,-2.0)(8.5,0.1)
\color{darkgray}
\pgfrect[fill]{\pgfpoint{10.8em}{0.2em}}{\pgfpoint{5em}{-10em}}
\color{black}
\pgfsetdash{{0.1cm}{0.2cm}{0.2cm}{0.1cm}}{0cm}
\pgfxyline(4.1,0.1)(6.1,0.1)
\pgfxyline(6.1,0.1)(6.1,-3.8)
\pgfxyline(6.1,-3.8)(4.1,-3.8)
\pgfxyline(4.1,-3.8)(4.1,0.1)
\color{red}
\pgfsetdash{{0.1cm}{0.2cm}{0.2cm}{0.1cm}}{0cm}
\pgfxyline(4,-2)(8,-2)
\pgfxyline(8,-2)(8,-5)
\pgfxyline(8,-5)(4,-5)
\pgfxyline(4,-5)(4,-2)
\end{pgfpicture}
\Qcircuit @C=1em @R=0.7em{
&\qw
&\ctrl{2} & \ctrl{1} & \qw 
& \qw 
& \ctrl{4}  & \ctrl{5} & \qw
& \qw 
 &\qw 
& \ctrl{1} & \ctrl{2} & \qw \\
& \qw
&\qw & \targ & \qw 
& \qw 
& \qw & \qw & \qw
& \qw
& \qw 
&\targ &\qw & \qw \\
& \qw
&\targ & \ctrl{1} & \cds{1}{\cdots}
& \qw  
& \qw  & \qw & \ctrl{5}
& \qw 
& \cds{1}{\cdots}
 & \ctrl{1} & \targ & \qw \\
& \qw
&\qw & \targ &\qw
&\qw 
& \qw &\qw & \qw
& \qw 
& \qw
 &\targ & \qw & \qw \\
& \qw
&\qw & \qw  & \qw 
&\qw 
&\gate{\Rmat(\theta_1)} & \qw & \qw
& \qw 
& \qw 
&\qw & \qw & \qw \\
 & \qw
&\qw & \qw  & \qw 
&\qw 
&\qw & \ctrl{1} & \qw 
& \qw 
& \qw 
&\qw & \qw & \qw \\
& \qw
&\qw & \qw  & \qw 
&\qw 
&\qw & \targ & \qw 
& \qw 
& \qw 
&\qw & \qw & \qw \\
&\qw 
& \qw  & \qw  &\qw
&\qw 
&\qw & \qw & \gate{\Rmat(\theta_2)}
& \qw 
& \qw 
&\qw & \qw & \qw 
\inputgroupv{5}{8}{.8em}{.8em}{\ket{0}^{\otimes n}}
\inputgroupv{1}{4}{.8em}{.8em}{\ket{0}^{\otimes s}}
}
\vspace{1.2em}\hspace{1.2em}
\\
\end{tabular}
\caption{Circuit for $\Control$-$\Ccal$ with $s$ ancillas, in which the circuit of the left-top dashed box perform the copy operation, and the right-top box perform the uncompute operation, the red dashed box is the circuit $\Ccal$, and the box in gray is to target gates with one control qubit.}
\label{fig:ctrlOneCnot}
\end{center}
\end{figure*}
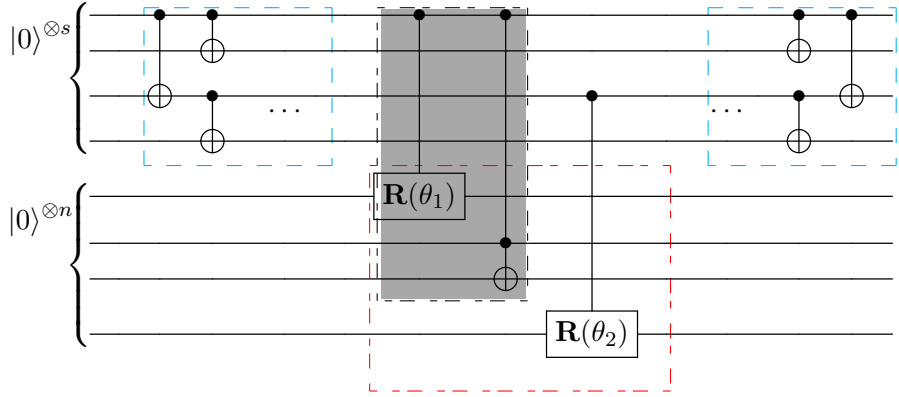

Lemma \ref{lem:sancillas} indicates that the depth of the Hadamard test of $\{\ket{0},\Umat\}$ can be optimized to $\Ord{\log s + d(\log n/s + 1)}$ with $(s + n)$ qubits, where $s \in [n]$ and $d$ is the depth of circuit representing $\Umat$. Notice that when $s = n$, the depth can be optimized to $\Ord{d + \log n}$, which matches the lower bound for some particular circuits. 

However, in real superconducting quantum device, not any two qubits are connected~\cite{arute2019quantum, boixo2018characterizing,IBMQ, gong2019genuine}, the connectivity of physical qubits are constrained to a graph. In next subsection, we will give the algorithm to map the circuit of Hadamard test to physical device with lattice structure.

\subsection{Depth optimization of Hadamard test under lattice graph\label{subsec:latticeOpt}}

Here we consider a special graph --- a two-dimensional lattice. Such a graph commonly occurs in physical quantum computing implementations~\cite{arute2019quantum,boixo2018characterizing,IBMQ}.  Lemma \ref{lem:2DControlU} shows our optimization result.
Before discussing it, we give a claim and several related lemmas to serve for it.

\begin{claim}
On a lattice of size $l_1 \times l_2$ where one can only perform the SWAP operation between the neighbors of the lattice, there exists an $\Ord{l_1+l_2}$ parallel time algorithm, which can generate any permutation ${\bm \sigma}=\pbra{\sigma_1, \cdots, \sigma_n}$ from the snakelike order.
\label{claim:perm2Dim}
\end{claim}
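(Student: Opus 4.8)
The plan is to recognize the statement as a classical \emph{permutation routing} problem on a two-dimensional mesh and to realize it by the standard three-pass (column--row--column) scheme. The snakelike order merely fixes an initial placement of the $n$ tokens $1,\dots,n$ on the cells of the lattice; producing the permutation $\bm\sigma$ then amounts to moving each token from its snakelike cell to the cell prescribed by $\bm\sigma$ using only nearest-neighbor SWAPs. Thus it suffices to show that an arbitrary permutation of the $l_1\times l_2$ cells can be performed in $\Ord{l_1+l_2}$ parallel SWAP steps, and the snakelike labeling plays no special role beyond being the starting configuration.

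First I would set up the three phases. Write each cell as $(i,j)$ with $i\in[l_1]$ the row index and $j\in[l_2]$ the column index, and let each token carry a target column $\gamma$ and target row $\rho$. The routing consists of: Phase~1, an independent permutation of tokens inside each column; Phase~2, an independent permutation inside each row; Phase~3, an independent permutation inside each column again. Since Phase~3 can only correct rows, after Phase~2 every token must already sit in its target column; and since Phase~2 can only move tokens within a row, after Phase~1 each row must contain exactly one token for every target column. Establishing that a Phase~1 achieving this ``balanced rows'' condition exists is the heart of the argument.

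For that I would use an edge-coloring argument. Form a bipartite multigraph $H$ whose left vertices are the $l_2$ source columns and whose right vertices are the $l_2$ target columns, placing one edge between source column $a$ and target column $b$ for every token currently in column $a$ whose target column is $b$. Because the configuration is a permutation, each source column holds exactly $l_1$ tokens and each target column receives exactly $l_1$ tokens, so $H$ is $l_1$-regular. By K\"onig's edge-coloring theorem (equivalently, repeated application of Hall's marriage theorem) $H$ decomposes into $l_1$ perfect matchings; coloring the matchings $1,\dots,l_1$ assigns to each token a distinct row within its source column, which defines the Phase~1 rearrangement. Because each color class is a perfect matching, after Phase~1 every row contains precisely one token destined for each target column, exactly the balanced condition required. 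Phase~2 then sends each token to its target column, after which each column holds precisely the tokens destined for it, and Phase~3 sorts them into their target rows.

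Finally I would bound the depth. Each phase is a collection of \emph{independent} one-dimensional permutation-routing tasks on lines of length at most $\max\{l_1,l_2\}$, executed simultaneously on all columns (Phases~1,3) or all rows (Phase~2). A permutation of a line of $m$ cells is realizable by nearest-neighbor transpositions in $m$ parallel rounds via odd--even transposition sorting (sort each line by target index), so Phases~1 and~3 cost $\Ord{l_1}$ SWAP-depth each and Phase~2 costs $\Ord{l_2}$, giving a total of $\Ord{l_1+l_2}$ as claimed. The main obstacle is the combinatorial Phase~1 existence step; the regularity of $H$ together with K\"onig's theorem resolves it cleanly, and the remaining depth bound is routine once one invokes linear-time one-dimensional routing.
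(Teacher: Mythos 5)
Your proof is correct, but it takes a genuinely different route from the paper. The paper disposes of this claim essentially by citation: it observes that generating an arbitrary permutation from the snakelike order is exactly the inverse of the parallel sorting problem on an $l_1\times l_2$ mesh, which is solved in $\Ord{l_1+l_2}$ parallel steps by the recursive sorting algorithm in Sec.~9.4 of Parhami's parallel-processing textbook; since that sorting procedure is built from nearest-neighbor compare--exchange (SWAP) operations, running its schedule in reverse realizes any desired permutation in the same depth. You instead give a self-contained offline permutation-routing argument: the classical three-phase (column--row--column) scheme, with the key Phase~1 existence step settled by noting that the source-column/target-column bipartite multigraph is $l_1$-regular and hence, by K\"onig's edge-coloring theorem, decomposes into $l_1$ perfect matchings; each phase is then a collection of independent one-dimensional permutations on disjoint lines, realized in linear depth by odd--even transposition sort keyed by destination. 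Both arguments are sound and give the same $\Ord{l_1+l_2}$ bound. What the paper's approach buys is brevity (a single reference does the work); what yours buys is self-containedness and transparency --- it exposes exactly which offline computation (the matching decomposition) is needed to compile the SWAP schedule, and it makes explicit that the snakelike order plays no special role beyond fixing the initial placement, so the identical argument routes between any two placements of the tokens.
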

The permutation problem in Claim \ref{claim:perm2Dim} is exactly the inverse process of parallel sorting problem on $l_1\times l_2$ 2D lattice in Sec 9.4 of Ref.~\cite{parhami2006introduction}. It can be solved in $\Ord{l_1+l_2}$ parallel time using a recursive  sorting algorithm. Hence, the permutation problem in Claim \ref{claim:perm2Dim} needs $\Ord{l_1 + l_2}$-depth of the SWAP operations.

The following lemma states how to map one layer of CNOT gates to a physical device where the connectivity of qubits is constrained to an $l_1\times l_2$ lattice.
\begin{lemma}
Suppose we are given an $n$-qubit and one-depth CNOT circuit $\Ccal$ and we have to re-design it such that the qubits are constrained according to an $l_1 \times l_2$ lattice, where $l_1l_2 = n$. There exists an equivalent $\Ord{l_1 + l_2}$-depth CNOT circuit under the above lattice. Furthermore, this complexity bound is asymptotically tight. 
\label{lem:OneLayerCNOTTopo}
\end{lemma}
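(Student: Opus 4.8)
The plan is to treat a one-depth CNOT circuit as a partial matching and reduce the entire task to permutation routing on the lattice, for which Claim \ref{claim:perm2Dim} already supplies an $O(l_1+l_2)$-depth SWAP network. Since $\Ccal$ has depth one, its CNOT gates act on pairwise-disjoint qubits; hence $\Ccal$ determines a partial matching $M$ on the $n$ qubits (each edge carrying a control/target orientation), with at most $\lfloor n/2 \rfloor$ edges. The key observation is that lattice adjacency is symmetric, so it suffices to rearrange the qubits so that the two endpoints of every edge of $M$ simultaneously land on neighboring lattice sites.

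First I would fix such a target arrangement using a domino tiling of the $l_1 \times l_2$ grid. Every such grid admits a tiling by dominoes covering all cells except at most one (the exception occurring only when $n$ is odd, in which case there is necessarily an unmatched qubit). Assigning each matched pair of $M$ to a distinct domino and the unmatched qubits to the remaining single cells defines a permutation $\pi$ of the qubits for which every edge of $M$ occupies two adjacent lattice sites. By Claim \ref{claim:perm2Dim}, $\pi$ (and likewise $\pi^{-1}$, obtained by reversing the layer order) can be realized by a SWAP network of depth $O(l_1+l_2)$, and since each nearest-neighbor SWAP is three CNOTs, this stage is itself a CNOT circuit.

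The re-designed circuit then has three stages: route with $\pi$ in depth $O(l_1+l_2)$; apply all CNOTs of $M$ in parallel on the now-adjacent pairs in depth $O(1)$, with orientations preserved; and route back with $\pi^{-1}$ in depth $O(l_1+l_2)$ so that qubit $j$ returns to its original lattice position and the overall unitary equals $\Ccal$ with respect to the fixed labeling. Summing yields total depth $O(l_1+l_2)$, with every gate a CNOT. For asymptotic tightness I would exhibit a single saturating instance: the lone gate $\CNOT_{i,j}$ with $i,j$ placed at opposite corners of the lattice, at graph distance $(l_1-1)+(l_2-1)=\Theta(l_1+l_2)$. A standard light-cone (Lieb--Robinson-type) argument shows that correlating two sites at distance $D$ with nearest-neighbor two-qubit gates requires depth $\Omega(D)$: after $t$ layers the support of a single-qubit operator spreads by at most $t$ lattice steps, so a gate coupling the two corners forces $t=\Omega(l_1+l_2)$, matching the upper bound.

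The main obstacle I anticipate is the permutation-construction step: one must verify that a single permutation places all matched pairs adjacent at once (via the domino tiling together with disjointness of $M$), and that Claim \ref{claim:perm2Dim} may be invoked for an arbitrary target permutation rather than only a special one. By contrast, the route-back bookkeeping, the reduction of SWAP to CNOTs, and the light-cone lower bound are all routine.
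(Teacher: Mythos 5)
Your proposal is correct and follows essentially the same route as the paper: route the qubits by a permutation so that every matched control/target pair becomes lattice-adjacent, apply all CNOTs in one layer, route back with the inverse permutation, and invoke Claim \ref{claim:perm2Dim} (with SWAP $=$ three CNOTs) for the $\Ord{l_1+l_2}$ routing depth, plus a distant-CNOT lower bound. Your domino-tiling construction of the target arrangement is just a more explicit version of the paper's implicit pairing of consecutive positions along the snakelike Hamiltonian path, and your light-cone argument is a more rigorous phrasing of the paper's one-line tightness claim.
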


\begin{proof}
Observe that we can represent a one-depth and $n$-qubit CNOT circuit as
\be
\Ccal = \bigotimes_{i = 1}^{r} \CNOT_{s_i,t_i},
\ee
where $s_i, t_i\in[n], r\leq n/2$, and any qubit $q\in[n]$ appears at most once in the multiset $\{s_1,s_2,\ldots, s_r\} \cup \{t_1,t_2,\ldots, t_r\}$. 

We first label the qubits in the $l_1\times l_2$ lattice in the snakelike order. Next, permute all of qubits to order $\sigma$ such that CNOT gates arise only in neighbors of the new order $\sigma$.  Next, apply CNOT gates in one layer since all of gates are operated on neighbors of qubits. 
In the end, restore the original order of qubits by performing the inverse process of the first step --- permuting qubits. A simple example of this step is described in Fig~\ref{fig:toyeg}.

By Claim \ref{claim:perm2Dim}, we can generate any permutations with $\Ord{l_1+l_2}$-depth of SWAP gates. Since one SWAP gate can be constructed by 3 CNOT gates, any permutations can also be generated by $\Ord{l_1+l_2}$-depth of CNOT gates. Meanwhile, the inverse process of the permutation has the same depth, thus the total depth to map the one layer of CNOT gates to $l_1 \times l_2$ lattice is $\Ord{l_1+l_2}$.

For the lower bound, observe that it needs at least depth $l_1 + l_2$ to map a CNOT gate operating on two qubits corresponding to vertices with distance $l_1+l_2$ (such as left-top and right-bottom vertices) to the CNOT circuit under this $l_1 \times l_2$-lattice.
\end{proof}

\begin{figure}
  \centering
  \begin{tabular}[b]{c}
    \includegraphics[width=.3\linewidth]{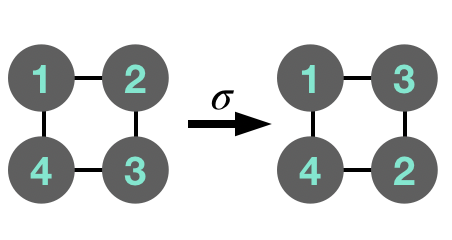} \\
    \small (a)
  \end{tabular} \qquad
  \begin{tabular}[b]{c}
     \Qcircuit @C=1em @R=.7em {
  \lstick{1}  & \qw & \ctrl{2} & \qw & \qw\\
  \lstick{2} & \qw &  \qw & \ctrl{2}  & \qw \\
  \lstick{3} & \qw &  \targ   & \qw & \qw  \\
  \lstick{4} & \qw &  \qw & \targ   & \qw
  }
     \\
    \small (b)
  \end{tabular}
  \qquad
  \begin{tabular}[b]{c}
  \Qcircuit @C=1em @R=.7em {
  \lstick{1}  & \qw & \ctrl{1} & \qw & \qw\\
  \lstick{3} & \qw &  \targ & \qw  & \qw \\
  \lstick{2} & \qw &  \qw   & \ctrl{1} & \qw  \\
  \lstick{4} & \qw &  \qw & \targ   & \qw
  }
       \\
       \small (c)
  \end{tabular}
  \caption{(a) $2 \times 2$ lattice in snake order and a permutation $\sigma = (1,3,2,4)$. (b) A depth-1, 4 qubit CNOT circuit. (c) The same circuit as in (b) with qubits 2 and 3 swapped.}
  \label{fig:toyeg}
\end{figure}

The following lemma states how to map a particular  CNOT circuit with one common control qubit and $k<n$ targets to a lattice. The size of this particular kind of circuits can be optimized to $\Ord{n}$ with the optimization technique in Ref.~\cite{nash2020quantum}.
One can easily parallelize the optimization process of Ref. \cite{nash2020quantum} to $\Ord{l_1 + l_2}$-depth on $l_1 \times l_2$ lattice. For completeness, we give a simpler algorithm in Appendix \ref{app:CtrlCNOTLattice} which has the same parallel depth.

\begin{lemma}
Given a CNOT circuit constructed by $\CNOT_{1,i_1},\cdots, \CNOT_{1,i_k}$, where $1< i_1 <\cdots < i_k \leq n, k<n$, and an $l_1\times l_2$ lattice as the constrained graph of the quantum system, there exists an equivalent $\Ord{l_1+l_2}$-depth CNOT circuit under the above lattice.
\label{lem:CNOTMap2lattice}
\end{lemma}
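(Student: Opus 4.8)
The plan is to read the circuit $\prod_{j=1}^{k}\CNOT_{1,i_j}$ as a single-control \emph{fanout}: the linear map over $\Fbb_2$ that adds the value $x_1$ of qubit $1$ into each target $i_1,\dots,i_k$ and fixes every other qubit. All these gates commute, so the only real constraint is connectivity --- the value $x_1$ must be delivered next to each target and each intermediate qubit must be left undisturbed. Since the farthest lattice vertex sits at distance $(l_1-1)+(l_2-1)$ from qubit $1$, a light-cone argument already forces depth $\Omega(l_1+l_2)$, which is exactly the bound to match from above. First I would fix a spanning tree $\Tcal$ of the $l_1\times l_2$ lattice rooted at qubit $1$ and containing all target vertices; as the lattice has radius $\Ord{l_1+l_2}$, the height of $\Tcal$ is $\Ord{l_1+l_2}$, and its \emph{non-target} internal vertices will serve as conduits (Steiner points) rather than as clean ancillas, which is essential since the lemma allows no ancillas.

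Next I would realize this single-row addition \emph{in place} along $\Tcal$. The elementary gadget is a ``push--apply--clean'' pattern: on a chain $1\,\text{--}\,a\,\text{--}\,b$ with conduit $a$ and target $b$, the sequence $\CNOT_{a,b},\ \CNOT_{1,a},\ \CNOT_{a,b},\ \CNOT_{1,a}$ sends $x_b\mapsto x_b\oplus x_1$ while restoring $x_a$ and fixing $x_1$, as one checks by direct telescoping. Iterating this along each root-to-target branch propagates $x_1$ outward edge by edge, applies the intended CNOT at the target, and then reverses the propagation to uncompute every intermediate value, so that all conduit vertices return to their inputs with no ancilla consumed. This is precisely the Steiner-tree single-row-operation primitive behind the CNOT-synthesis results of Refs.~\cite{nash2020quantum,kissinger2020cnot}, here specialized to one common control; the correctness follows from the commuting, telescoping structure of the CNOT layers.

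The remaining work, and the main obstacle, is to \emph{parallelize} the tree gadget so its depth is $\Ord{\mathrm{height}(\Tcal)}=\Ord{l_1+l_2}$ rather than the naive $\Ord{|\Tcal|}=\Ord{n}$. Here I would use two facts: the tree edges are themselves lattice edges, so they are already nearest-neighbor and need no routing; and the lattice has maximum degree $4$, so within each BFS level the participating edges form a bounded-degree graph that decomposes into $\Ord{1}$ matchings, each executable as a single layer. Scheduling the forward sweep in increasing level order and the cleanup sweep in decreasing level order, and pipelining the two sweeps, should keep the total depth proportional to the height; the delicate point is verifying that the level-by-level schedule leaves exactly $x_1$ added at the targets and cancels all accumulated conduit terms, i.e. that the per-branch gadget composes consistently across shared sub-paths of $\Tcal$. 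Where an alternative, routing-based realization is preferred, the $\Ord{l_1+l_2}$-depth permutation of Claim~\ref{claim:perm2Dim} together with the one-layer mapping of Lemma~\ref{lem:OneLayerCNOTTopo} provides a fallback with the same asymptotic depth. Either way the resulting circuit has depth $\Ord{l_1+l_2}$ under the lattice, matching the $\Omega(l_1+l_2)$ light-cone lower bound and hence asymptotically tight.
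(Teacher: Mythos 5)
Your main route is, in essence, the paper's own: prune a spanning tree of the lattice to a Steiner tree $T$ rooted at qubit $1$ whose leaves are targets, and deliver $x_1$ through dirty conduit qubits with the four-gate ``push--apply--clean'' pattern (your sequence $\CNOT_{a,b},\CNOT_{1,a},\CNOT_{a,b},\CNOT_{1,a}$ is exactly the primitive the paper's sweeps implement). The problem is that the step you yourself flag as ``the delicate point'' --- showing the gadgets compose across shared sub-paths, and across targets that are internal vertices on the path to other targets, in total depth $\Ord{l_1+l_2}$ rather than $\Ord{k(l_1+l_2)}$ --- is the actual content of the lemma, and your proposal does not resolve it. Pipelining BFS levels and splitting each level into $\Ord{1}$ matchings is not sufficient: after a level-by-level forward sweep, every vertex holds the XOR of all values on its root path, and the cleanup must leave each element of $S=\{1,i_1,\dots,i_k\}$ with exactly $x_1$ added while restoring every conduit, even when one target sits on the path to another. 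The paper resolves this with an explicit two-phase schedule: decompose $T$ into maximal subtrees whose roots and leaves lie in $S$ and whose internal vertices are conduits; in phase one, run the sweep-and-repeat procedure on disjoint subtrees in parallel, ordered so that descendant-rooted subtrees go first (each leaf then picks up its subtree root's value); in phase two, run the reversed order (omitting the subtree rooted at qubit $1$) so that $x_1$ cascades down the chain of subtrees. That ordering argument is what simultaneously gives correctness and depth $\Ord{d(T)}=\Ord{l_1+l_2}$; without it, or an equivalently verified schedule, the proof is incomplete.

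The fallback you offer does not close this gap. Lemma \ref{lem:OneLayerCNOTTopo} applies only to depth-one CNOT circuits, i.e.\ gates on disjoint qubit pairs, whereas $\CNOT_{1,i_1}\cdots\CNOT_{1,i_k}$ has a common control and is depth $k$ as written. The natural repair --- first parallelize the fanout into $\Ord{\log n}$ layers of disjoint CNOTs via Claim \ref{claim:CnotOneCTGate}, then route each layer with Claim \ref{claim:perm2Dim} and Lemma \ref{lem:OneLayerCNOTTopo} --- costs $\Ord{(l_1+l_2)\log n}$ depth, a $\log n$ factor worse than the claim. (A minor point: the $\Omega(l_1+l_2)$ light-cone bound you invoke holds only when some target is far from qubit $1$; it is a worst-case remark, not something the upper-bound statement of the lemma requires.)
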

The proof of Lemma \ref{lem:CNOTMap2lattice} is in Appendix \ref{app:CtrlCNOTLattice}.
Lemma \ref{lem:CNOTMap2lattice} also matches the lower bound, since it needs $\Omega\pbra{l_1 + l_2}$-depth to implement the CNOT gate on two qubit with largest distance $l_1 + l_2$. 

\begin{lemma}
Suppose we are given an $n$-qubit and $d$-depth circuit $\Ccal$ and we have to re-design it such that the qubits are constrained according to an $l_1 \times l_2$ lattice, where $l_1l_2 = n$. There exists an $\Ord{d\pbra{l_1 + l_2}}$-depth circuit for Control-$\Ccal$ under the above lattice.
\label{lem:2DControlU}
\end{lemma}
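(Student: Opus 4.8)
The plan is to take the ancilla-free complete-graph construction of Control-$\Ccal$ and push it through the lattice-mapping lemmas of this subsection one elementary block at a time. Writing $\Ccal = \Ccal^{(d)}\cdots \Ccal^{(1)}$, we have Control-$\Ccal = \prod_{j=1}^{d}$ Control-$\Ccal^{(j)}$, so it suffices to realize each single-layer factor Control-$\Ccal^{(j)}$ on the $l_1\times l_2$ lattice in depth $\Ord{l_1+l_2}$; summing over the $d$ layers then yields the claimed $\Ord{d\pbra{l_1+l_2}}$. First I would place the $n$ system qubits on the lattice in the snakelike labeling (Claim \ref{claim:perm2Dim}) and keep the single control qubit $c$ on a site adjacent to a corner, so that the whole construction uses $n+1$ qubits, consistent with the $s=1$ case of Lemma \ref{lem:sancillas}.

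The next step is to recall, from Lemma \ref{lem:OneLayerLog} (equivalently Lemma \ref{lem:sancillas} with $s=1$), that each Control-$\Ccal^{(j)}$ is built, with no ancillas and using only CNOT and single-qubit gates, from a constant number of elementary blocks, each of exactly one of three types: (i) a depth-one layer of single-qubit gates; (ii) a single layer of disjoint CNOT gates; and (iii) a fanout $\prod_i \CNOT_{c,i}$ copying the common control $c$ onto several targets (these fanouts are precisely what let the $r\le n/2$ controlled gates of the layer be executed in parallel before being uncomputed). Each block maps to the lattice cheaply: type (i) is $\Ord{1}$ and needs no routing; type (ii) maps in depth $\Ord{l_1+l_2}$ by Lemma \ref{lem:OneLayerCNOTTopo}; and type (iii), being a common-control CNOT circuit, maps in depth $\Ord{l_1+l_2}$ by Lemma \ref{lem:CNOTMap2lattice}, after relabelling so that $c$ plays the role of ``qubit $1$''.

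Because each of Lemmas \ref{lem:OneLayerCNOTTopo} and \ref{lem:CNOTMap2lattice} realizes its block by first permuting the qubits into a convenient order, applying one layer of neighbor gates, and then inverting that permutation, every block begins and ends in the snakelike order; consecutive blocks therefore compose with no extra routing cost. A constant number of $\Ord{l_1+l_2}$-depth blocks gives $\Ord{l_1+l_2}$ per layer $j$, and hence $\Ord{d\pbra{l_1+l_2}}$ overall. The step I expect to be the crux is the structural one: confirming that the ancilla-free complete-graph construction of Control-$\Ccal^{(j)}$ genuinely decomposes into $\Ord{1}$ blocks of only these three types --- in particular that controlling a layer of CNOTs reduces to single-common-control fanouts together with local CNOT and single-qubit layers, rather than to three-qubit Toffoli gates that would fall outside Lemmas \ref{lem:OneLayerCNOTTopo}--\ref{lem:CNOTMap2lattice} --- together with verifying that the successive permutations introduced by the two mapping lemmas cancel cleanly so that the per-block costs simply add without accumulating extra permutation depth.
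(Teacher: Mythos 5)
Your overall architecture is exactly the paper's: write $\Ccal=\Ccal^{(d)}\cdots\Ccal^{(1)}$, realize each Control-$\Ccal^{(j)}$ on the lattice by mapping the elementary blocks of the complete-graph construction through Claim \ref{claim:perm2Dim} and Lemmas \ref{lem:OneLayerCNOTTopo} and \ref{lem:CNOTMap2lattice}, and note that blocks compose at no extra cost because each mapping returns the qubits to their original labeling. However, the structural claim you single out as the crux fails as you state it. Controlling a layer of $s$ disjoint CNOTs gives $s$ Toffoli gates sharing one control, and the ancilla-free parallelization in Lemma \ref{lem:OneLayerLog} removes the congestion at that control via the borrowed-ancilla identity of Fig.~\ref{cir:ToffoliEquivalent}: the common-control Toffolis in the set $S_1$ become a layer of \emph{disjoint} Toffoli gates (on triples formed with qubits borrowed from $S_2$), then a fanout from the control, then the same Toffoli layer again, then the fanout again. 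So layers of genuine three-qubit Toffoli gates do appear in the decomposition, and they fall outside Lemmas \ref{lem:OneLayerCNOTTopo} and \ref{lem:CNOTMap2lattice} --- precisely the possibility you flagged and did not rule out.

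This is the one place where the paper's proof does work that your proposal is missing: it treats ``one layer of disjoint Toffoli gates'' as a separate case. It permutes the qubits (depth $\Ord{l_1+l_2}$ by Claim \ref{claim:perm2Dim}) so that the three qubits of each Toffoli are pairwise within distance $2$ on the lattice, decomposes each Toffoli into $6$ layers of CNOT gates interleaved with single-qubit gates \cite{shende2008cnot}, and implements each such CNOT with at most $4$ lattice CNOTs, giving constant depth after the permutation, hence $\Ord{l_1+l_2}$ per Toffoli layer. An equally valid repair inside your framework: since the Toffolis within a layer act on disjoint triples, first decompose them into constantly many layers of disjoint CNOTs plus single-qubit gates, and map each of those layers by Lemma \ref{lem:OneLayerCNOTTopo}. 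Either way, each Control-$\Ccal^{(j)}$ becomes $\Ord{1}$ blocks of depth $\Ord{l_1+l_2}$ each, and your count $\Ord{d\pbra{l_1+l_2}}$ closes. One further point worth making explicit: your argument only works if the fanouts are kept as whole blocks and mapped by Lemma \ref{lem:CNOTMap2lattice}; if you instead took the fully compiled circuit of Lemma \ref{lem:OneLayerLog}, in which each fanout has already been expanded by Claim \ref{claim:CnotOneCTGate} into $\Theta(\log n)$ layers of disjoint CNOTs, then mapping those layers one at a time would cost $\Theta\pbra{\log n \cdot (l_1+l_2)}$ per fanout, which is too much.
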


\begin{proof}
Suppose $\Ccal = \Ccal^{(d)}\cdots \Ccal^{(1)}$, where $\Ccal^{(i)}$ is the $i$-th layer of $\Ccal$.
We decompose each Control-$\Ccal^{(i)}$ for $i\in [d]$ into CNOT gates and single-qubit gates via the method of Lemma \ref{lem:OneLayerLog}. 
Since here a CNOT (Toffoli) gate can operate on two (three) qubits only when they are neighbors on the lattice, we need to map the following two cases of circuits to the lattice:
\begin{itemize}
    \item One layer of Toffoli gates;
    \item The CNOT circuit which is constructed by $\CNOT_{1,i_1},\cdots, \CNOT_{1,i_k}$, where $1< i_1 <\cdots < i_k \leq n$.
\end{itemize}

For one layer of Toffoli gates, we permute the qubits to let all of Toffoli gates operate on neighbors of qubits.
For one Toffoli gate, we firstly decompose it into 6 layers of CNOT gates \cite{shende2008cnot} joint with single-qubit gates in alternatively. The decomposed CNOT gate can be implemented by at most 4 CNOT gates on the lattice 
since the distance\footnote{Distance: The number of edges for the minimum path of these two vertices in the lattice.} between these two qubits is at most 2.
Hence after permutation, we can implement one layer of Toffoli gates with constant depth of CNOT gates joint with constant depth of single-qubit gates (12 layers of CNOT gates), and since permutation operation and its inverse process both can be implemented in $\Ord{l_1+l_2}$-depth by Lemma \ref{claim:perm2Dim}, the implementation of one layer of Toffoli gates on $l_1 \times l_2$ lattice needs $\Ord{l_1+l_2}$-depth in total.

For CNOT circuit which is constructed by $\CNOT_{1,i_1},\cdots, \CNOT_{1,i_k}$, where $1< i_1 <\cdots < i_k \leq n$, it can be implemented on $l_1\times l_2$ lattice with depth $\Ord{l_1+l_2}$ by Lemma \ref{lem:CNOTMap2lattice}.

Thus mapping Control-$\Ccal$ for depth $d$ circuit $\Ccal$ to $l_1\times l_2$ lattice in which $l_1 l_2 = n$ needs $\Ord{d\pbra{l_1 + l_2}}$-depth. 
\end{proof}

\begin{corollary}
Suppose we are given an $n$-qubit and $d$-depth circuit $\Ccal$ and we have to re-design it such that the qubits are constrained according to a graph $G$ which has a Hamiltonian path. There exists an $\Ord{dn}$-depth circuit for Control-$\Ccal$ under the above graph.
 \label{coro:TopoAnyGCtrlU}
\end{corollary}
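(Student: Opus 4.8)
The plan is to obtain Corollary \ref{coro:TopoAnyGCtrlU} as an immediate specialization of the lattice result in Lemma \ref{lem:2DControlU}, by recognizing a Hamiltonian path as a degenerate $1 \times n$ lattice. First I would use the Hamiltonian path of $G$ to relabel the $n$ qubits as $1, 2, \ldots, n$ in the order in which the path visits the vertices, so that for every $i \in [n-1]$ the qubits labelled $i$ and $i+1$ are joined by an edge of $G$. Restricting attention to only these path edges, the connectivity available to us is exactly a line, i.e. a $1 \times n$ lattice with $l_1 = 1$ and $l_2 = n$. Since every path edge belongs to $G$, any two-qubit gate admissible on the line is also admissible on $G$; extra edges of $G$ can only help, so the upper bound we derive for the line transfers to $G$.

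With this identification in hand, I would invoke Lemma \ref{lem:2DControlU} with $l_1 = 1$ and $l_2 = n$. Recall that its proof decomposes each layer $\Control$-$\Ccal^{(i)}$ (via Lemma \ref{lem:OneLayerLog}) into single-qubit gates, one layer of Toffoli gates, and a fan-out CNOT circuit with a single common control, and then maps each of these onto the lattice in $\Ord{l_1 + l_2}$ depth. Specializing $l_1 = 1,\, l_2 = n$, every ingredient survives on the line: the permutation routine of Claim \ref{claim:perm2Dim} runs in $\Ord{1 + n} = \Ord{n}$ depth, the Toffoli-layer mapping needs $\Ord{n}$ depth after permuting each Toffoli's three qubits to consecutive positions (so that its two endpoints are at distance $2$, exactly as in the lattice argument), and the fan-out mapping of Lemma \ref{lem:CNOTMap2lattice} also costs $\Ord{n}$. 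Summing over the $d$ layers gives total depth $\Ord{d(l_1 + l_2)} = \Ord{d(1+n)} = \Ord{dn}$, as claimed.

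The one point requiring care, which I expect to be the only real obstacle, is confirming that each auxiliary routine remains valid in the degenerate case $l_1 = 1$ rather than tacitly assuming $l_1, l_2 \geq 2$. The permutation step is the crucial one: sorting on a line is the classical odd-even transposition sort, which realizes any permutation of $n$ elements in $n$ parallel nearest-neighbor swap rounds, and each swap is implementable by three CNOT gates on a path edge. This matches the $\Ord{l_1 + l_2}$ bound of Claim \ref{claim:perm2Dim} at $l_1 = 1$, and both Lemma \ref{lem:OneLayerCNOTTopo} and Lemma \ref{lem:CNOTMap2lattice} were already stated for arbitrary $l_1 \times l_2$ lattices, so no step collapses. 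I would therefore present the corollary simply as \emph{relabel along the Hamiltonian path, then apply Lemma \ref{lem:2DControlU} with $l_1 = 1$, $l_2 = n$}, remarking that the resulting $\Ord{dn}$ depth is exactly the degenerate line specialization of the lattice bound.
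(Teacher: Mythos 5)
Your proposal is correct and takes essentially the same route as the paper: the paper's entire proof is that the corollary ``can be obtained directly by Lemma \ref{lem:2DControlU} by letting $l_1=n, l_2=1$,'' which is exactly your relabel-along-the-Hamiltonian-path-and-specialize-the-lattice argument (up to the orientation of the degenerate lattice). Your extra check that Claim \ref{claim:perm2Dim}, Lemma \ref{lem:OneLayerCNOTTopo}, and Lemma \ref{lem:CNOTMap2lattice} survive the degenerate case $l_1=1$ is a careful elaboration the paper leaves implicit.
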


The corollary can be obtained directly by Lemma \ref{lem:2DControlU} by letting $l_1=n, l_2=1$.
By Lemma \ref{lem:2DControlU} (Corollary \ref{coro:TopoAnyGCtrlU}), we can immediately obtain Theorem \ref{thm:2DControlUIntro} (Corollary \ref{coro:TopoAnyGCtrlUIntro}).

In the following, we prove that there exists some unitary $\Umat$, the Hadamard test of $\cbra{\ket{0}, \Umat}$ needs at least $\Omega\pbra{\max\cbra{\log n, D}}$-depth of quantum circuits under a graph with diameter $D$.
\begin{proof}[Proof of Theorem \ref{thm:LowerboundHadamardIntro}]
For the specific unitary $\Umat = \otimes_{j = 1}^{n}\Rmat_j$, where $\Rmat_j = e^{i\theta_{j}}\Imat$, then $\bra{ 0^{n}}\Umat\ket{ 0^{n}} = \bra{ 0^{n}}\otimes_{j = 1}^{n}e^{i\theta_j}\Imat\ket{ 0^{n}} = e^{i\sum_{j=1}^{n}\theta_{j}}$. Since there needs $\Omega(\log n)$-depth to compute $\theta_1 +\cdots + \theta_n$ for complete graph in which any two qubits are connected. And for any constrained graph with diameter $D$, it needs at least $\Omega\pbra{D}$-depth to generate $e^{i\sum_{j=1}^{n}\theta_{j}}$, since there exists two qubit $j,k$, we need at least $D$-depth to compute the addition of $\theta_j + \theta_k$.
Thus we need at least $\Omega(\max\cbra{\log n, D})$-depth to generate $\bra{ 0^{n} }\Umat\ket{0^{n}}$.
\end{proof}

\section{Discussion\label{sec:discussion}}

In this work we have investigated the solving of skewed linear systems with near-term hybrid algorithms.
We have assumed quantum access to the normalized columns (or rows, respectively) of the corresponding matrix with efficiently implementable quantum circuits and classically given norms.
We have proposed two algorithms for two different kinds of systems --- over-determined and under-determined. The time complexity of our algorithms are polynomial in the smaller dimension of matrix and logarithmic in the larger dimension. Hence we can solve certain skewed linear systems efficiently in the larger dimension. With similar techniques, we can also solve low-rank factorized linear systems efficiently. See Table \ref{tab:ComparisonAlg} for a comparison to other linear systems solvers with different data input models. 

Our hybrid algorithms require the Hadamard test on the quantum computer as a subroutine. Since quantum systems tend to decohere in a short amount of time, we have also considered the optimization of the depth of Hadamard test $\cbra{\ket{0}, \Umat}$.
The first case we have considered is when two-qubit gates can operate on any two qubits, the second case is when there is a lattice underlying the connectivity of the qubits, similar to recent near-term quantum devices.

We leave several open problems based on this work.
\begin{itemize}
    \item Does there exist any hybrid near-term linear system algorithm with  time complexity that is linear or sub-linear in $1/\varepsilon$?
    \item More generally, can the time complexity of this work be improved in some or all of the parameters?
    \item Are our circuit optimization algorithms tight beyond one-depth circuits? Given a $d$-depth circuit $\Ccal$, is there any better lower bound than Theorem \ref{thm:LowerboundHadamardIntro} for the depth of quantum circuit to compute 
    $\bra{0} \Ccal \ket{0}$?
\end{itemize}

\begin{table*}
\begin{center}
\begin{tabular}{c|c|c}
\hline
Reference & Time complexity & Input model\\
\hline
SV09 \cite{strohmer2009randomized}& $\Ord{M \vabs{\Amat}_F^2\vabs{\Amat^{-1}}^2\log (\varepsilon_0/\varepsilon)}$& Cons., full col. rank, sampling access\\
CLW18 \cite{chia2018quantum}& $\tOrd{\vabs{\Amat}_F^{2\omega}\kappa^{8\omega + 4}\vabs{\bm x^*}^{2\omega}/{\varepsilon^{2\omega}}}$& Length-square sampling \cite{tang2019quantum}\\
GLT18 \cite{gilyen2018quantum}& $\tOrd{\vabs{\Amat}_F^6\kappa^{16}R^6\vabs{\bm x^*}^6/{\varepsilon^6}}$& Length-square sampling \cite{tang2019quantum}\\
Problem \ref{pro:overdetermined} (Algorithm \ref{Alg:overdetermined}) &$\tOrd{\Gamma^2 M^3\kappa^8\pbra{\vabs{\bm x^*}^2 + 1}/\varepsilon^4}$ & Quantum circuits access\\
Problem \ref{pro:underdetermined} (Algorithm \ref{Alg:underdetermined})& $\tOrd{\Gamma^2 M^3\kappa^{16}\pbra{\vabs{\bm \alpha^*}^2+1}/\varepsilon^4}$ 
&Quantum circuits access\\
Problem \ref{pro:lowrank}  &$\tOrd{\Gamma'^2R^3\pbra{
\kappa(\Amat_1)\kappa(\Amat_2)}^6\kappa^2/\varepsilon^2}$&Quantum circuits access\\
\hline
\end{tabular}
\end{center}
\caption{The comparison between our algorithms and the existing algorithms for solving linear system $\Amat \bm x = \bm b$ for $\Amat\in \Cbb^{N\times M}$ with $\vabs{\Amat\hat{\bm x} -\bm b } - \vabs{\Amat\bm x^* - \bm b}\leq \varepsilon$, where $\hat{\bm x}$ and $\bm x^*$ are the approximate solution and optimal solution respectively. 
Here, $\varepsilon_0\leq \|\bm x^*\|, \omega <2.373$ is the matrix multiplication exponent, $\kappa$ is the condition number of $\Amat$, and $R$ is the rank of $\Amat$.
In addition, $\bm a_i$ is the $i$-th column of matrix $\Amat$,  and $\Gamma=\max_i\cbra{\vabs{\bm a_i},\vabs{\bm a_i}^2}$. 
For Problem \ref{pro:underdetermined}, note that we are solving  $\Amat^\dagger \bm y = \bm c$ with 
$\bm c \in \Cbb^{M}$. Here, 
$\bm y^*$ is the optimal solution and $\bm \alpha^\ast$ is defined via 
$\bm y^*=\sum_i \alpha_i^* \bm a_i$. For Problem \ref{pro:lowrank}, we need the extra assumption that $\Amat = \Amat_1\Amat_2$, $\Amat_1 \in \Cbb^{N\times R},\Amat_2\in \Cbb^{R\times M}$ are two rank $R$ matrices, 
$\Gamma'$ is obtained by changing $\bm a_i$ into the $i$-th columns of $\Amat_1$ and $\Amat_2^{\dagger}$ in the definition of $\Gamma$.
For all problems considered in the table, we assume $\vabs{\Amat} = \Tht{1}, \vabs{\bm b} = \Ord{1},\vabs{\bm c}=\Ord{1}, \vabs{\Amat_1}= \Ord{1}$, $\vabs{\Amat_2} = \Ord{1}$, and $\varepsilon = \Ord{1}$.} 
\label{tab:ComparisonAlg}
\end{table*}

\section*{Acknowledgement}
We would like to thank Yassine Hamoudi, Miklos Santha and Armando Bellante for helpful discussions.
M.R., L.Z., and P.R. acknowledge support from Singapore’s Ministry of Education and National Research Foundation.
B.W. would like to thank CQT (NUS) for supporting a visit where this work began.

\begin{appendix}

\section{Comparison of the input model\label{app:CompareInput}}

Consider the matrix 
\begin{equation}
    \tilde{\Amat} = \sum_{i\leq K} \alpha_{i}\tilde{\Umat_{i}},
    \label{eq:ELCUrep}
\end{equation}
where $ \tilde{\Amat} \in \Cbb^{\tilde{N}\times \tilde{N}}, \alpha_{i}\in \Rbb$, $ K = \polylog \tilde{N}$, and each $\tilde{\Umat}_{j}$ has depth $\polylog \tilde{N}$. 
We say $\Amat$ has an \emph{efficient linear combination of unitaries} ({ELCU}) representation if it has the above representation.
In the following, we assume to be given an $N\times M$ matrix $\Amat$, with $M =2^m$ and $N=2^n$ and where each column is normalized and can be accessed by a quantum circuit efficiently. The cost function of the linear system $\argmin_{\bm x} \vabs{\Amat \bm x - \bm b}$ can be converted into a new cost function $\argmin_{\bm x} \vabs{\bra{0^{ m}}\tilde{\Amat}\ket{0^{ m}}\bm x - \frac{2}{M}\bm b}$, where $\tilde{\Amat} \in \Cbb^{\tilde{N}\times \tilde{N}}$ and $\tilde{N}=MN$, has an ELCU representation.

Let $\Umat_i$ be the unitaries preparing the columns of $\Amat\in \Cbb^{N\times M}$, where $i\in[M]$. From the outset the column norms of $A$ shall be $1$ for simplicity.
Given $\Umat_{\rm comb} = \sum_{i=1}^M \ket i \bra i \otimes \Umat_i$.
Also consider the copy operation for basis vectors $\Umat_{\rm copy} \ket i \ket {0^n} = \ket i \ket i$. Note that the second register can be larger than the first.
The following sequence of unitaries generates a block encoding of $\Amat$:
\begin{widetext}
\be
\tilde{\Umat}_{A1} &:=& (H^{\otimes m} \otimes \Imat) \Umat_{\rm comb} (\Imat- 2 \sum_{i=1}^M \ket i \bra i \otimes \ket {0^n} \bra {0^n}) \Umat_{\rm copy}^\dagger (H^{\otimes  m} \otimes \Imat) \\
&=& 
(H^{\otimes  m} \otimes \Imat)( \Umat_{\rm comb} \Umat_{\rm copy}^\dagger - 2 \sum_{i=1}^M \ket i \bra i \otimes \Umat_i \ket{0^n} \bra i)(H^{\otimes  m} \otimes \Imat) \\
&=&
(H^{\otimes  m} \otimes \Imat)( \Umat_{\rm comb} \Umat_{\rm copy}^\dagger - 2 \sum_{i=1}^M \ket i \bra i \otimes \ket{ a_i} \bra i)(H^{\otimes  m} \otimes \Imat) \\
&=&
(H^{\otimes  m} \otimes \Imat) \Umat_{\rm comb} \Umat_{\rm copy}^\dagger(H^{\otimes  m} \otimes \Imat) - \frac{2}{M} \ket 0^m  \bra 0^m \sum_{i=1}^M  \ket{ a_i} \bra i + P_\perp.
\ee
\end{widetext}

Here, $P_\perp $ is an operator that is zero in the $\ket {0^m} \bra {0^m}$ block. 
Note that the first term is the unitary 
\be
\tilde{\Umat}_{A2} := (H^{\otimes m} \otimes \Imat) \Umat_{\rm comb} \Umat_{\rm copy}^\dagger(H^{\otimes m} \otimes \Imat).
\ee 
Define the linear system with the matrix
\be
\tilde \Amat &=& \tilde{\Umat}_{A2} - \tilde{\Umat}_{A1}.
\ee
Naively we minimize $\Vert \tilde \Amat \ket 0  \bm x - \tilde {\bm b}\Vert $ using some $\tilde {\bm b}$ in the enhanced space. However, we can just minimize a different cost function
$\Vert \bra {0^m} \tilde \Amat \ket {0^m}   \bm x -  \frac{2}{M}\bm b \Vert $, where we use the original $\bm b$.
Noting that if $ {\bm x^\ast}$ is the solution to $A  {\bm x^\ast} = \bm b$, then 
\be
\bra {0^m}  \tilde \Amat \ket {0^m}   {\bm x^\ast} =  \bra {0^m}  \left ( \begin{array}{c}
    \frac{2}{M} \Amat  {\bm x^\ast}  \\
     P_\perp \ket {0^m}   {\bm x^\ast}
\end{array}\right) =  
    \frac{2}{M} \bm b 
.
\ee
We have used that $\bra 0 P_\perp \ket 0 = 0$. We have shown that our data input model can be formally translated into a small linear combination of unitaries. However the ability  to perform $\Umat_{\rm comb}$ can be considered to go beyond the NISQ setting. Note that Ref.~\cite{bravo2019variational} performs an analogous association of sparse matrix oracle access with the ECLU representation. 

Also note that the ELCU model can be transformed into the column model. However this comes at the price of many columns, which will not be solved efficiently by our algorithms. Future work may study sampling algorithms for this setting \cite{strohmer2009randomized}. 
The following shows that for matrix $\tilde{\Amat}$ given by Equation \eqref{eq:ELCUrep}, each column can be accessed with $\polylog N$-size quantum circuits (but we have $N$ of them).
 Let the $j$-th column of $\tilde \Amat$ be $\tilde \Amat_j$. Then
 \be
  \tilde \Amat_j &=& \pbra{\sum_{i = 1}^K \alpha_i \tilde \Umat_i} \ket{j} = \sum_{i = 1}^K \alpha_i \pbra{\tilde \Umat_i \ket{j}} \\
  &=&  \sum_{i = 1}^K \alpha_i\pbra{ \tilde \Umat_{i}^{(j)} \ket{0^n}},
\ee
where $\tilde \Umat_i^{(j)}\ket{0^n} = \tilde \Umat_i \ket{j} = \tilde \Umat_i C_j \ket{0^n}$, and $C_j$ is one layer of Pauli-$X$ gates to map $\ket{0^n}$ to $\ket{j}$.
Let 
\be\Umat_j := \sum_{i=1}^K \ket{i}\bra{i} \otimes  \tilde \Umat_i^{(j)},\ee
and similar to Ref.~\cite{berry2015simulating}, let
\begin{align}
  \Umat_{\rm comb}^{(j)} &:= \pbra{\Umat_{\rm prep}^\dagger (\alpha)\otimes \Imat} \Umat_j \pbra{\Umat_{\rm prep}(\alpha)\otimes \Imat}\\
&= \ket{0^{ \log K }}\bra{0^{ \log K }} \sum_{i=1}^{K} \alpha_i \tilde\Umat_i^{(j)} + \ket{\perp}\bra{\perp}.
\end{align}
Here, we have used 
\be
\Umat_{\rm prep}(\alpha) \ket {0^n} = \sum_{i=1}^K \sqrt{\alpha_i} \ket i,
\ee
which in general takes depth $\tOrd{K}$ to implement.
Then $\tilde \Amat_j = \bra{0^{ \log K }} \Umat_{\rm comb}^{(j)}\ket{0^{ \log K }} \ket{0^{n}}$.
Since there exist poly$(\log N)$-size quantum circuits for $\Umat_j$ and $\Umat_{\rm prep}(\alpha)$, there also exists a poly$(\log N)$-size quantum circuit for $\tilde \Amat_j$.
We emphasize again that the problem of this mapping is that $j \in [N]$ and $N$ is large in general.

\section{Algorithm for factorized linear system\label{app:FactorizedLS}}

Let $\Amat\in \Cbb^{N\times M}$ be a rank $R$ matrix. Let 
$\Amat_{1} :=\sum_{1\leq j\leq R} \vabs{\bm u_j}\ket{u_j}\bra{j}\in \Cbb^{N\times R}$ and 
$\Amat_{2} :=\sum_{1\leq j\leq R} \vabs{\bm v_j}\ket{j}\bra{v_j}\in \Cbb^{R\times M}$ be two rank $R$ matrices, such that $\Amat_{1} \Amat_{2} = \Amat$.
The matrices $\Amat_{1}$ and $\Amat_{2}$ are defined by the quantum states $\ket{u_j}$ and $\ket{v_j}$ and corresponding norms, respectively, analogous to Assumption \ref{assume:quantummatrix}.
In addition,  $\bm{b}\in \mathbb{C}^{N}$ and $\ket{b} := \bm b/\vabs{\bm b}$ is defined analogous to Assumption \ref{assume:vecb}. Let $\Gamma = \max_{1\leq j\leq R}\cbra{\vabs{\bm u_j}^2, \vabs{\bm u_j}\vabs{\bm b}, \vabs{\bm v_j}^2}$. The following theorem states that when the rank $R= \Ord{\polylog\pbra{M,N}}$, we can solve the linear system $\argmin_x \vabs{\Amat \bm x - \bm b}$ in a run time dependent on $\polylog \pbra{M,N}$. 
\begin{theorem}
For matrix $\Amat$ and vector $\bm b$ as mentioned above, there exists an $$
\Ord{R^{3}\Gamma^2\pbra{\kappa(\Amat_1)\kappa(\Amat_2)}^8\poly\pbra{\log M,\log N, \vabs{\bm b}}/\varepsilon^2}
$$
time hybrid algorithm, which outputs
a classical vector $\bm s = \pbra{s_1, \cdots, s_R}$, such that the approximate solution
$\bm \hat{\bm x} = \sum_{1\leq j\leq R} s_j \ket{v_j}$ satisfies $ \vabs{\Amat\hat{\bm x} - \bm b} - \min_x \vabs{\Amat \bm x - \bm b}\leq \varepsilon $ with high success probability.
\label{thm:AppLowRank}
 \end{theorem}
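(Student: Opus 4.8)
The plan is to exploit the factorization $\Amat=\Amat_1\Amat_2$ to collapse the problem onto two $R\times R$ systems whose entries are inner products estimable by the Hadamard test, so that no explicit dependence on the large dimensions $N,M$ remains. First I would observe that, since $\Amat_1\in\Cbb^{N\times R}$ has full column rank and $\Amat_2\in\Cbb^{R\times M}$ has full row rank, the reverse-order law for pseudo-inverses applies and
$\Amat^{-1}=\Amat_2^{-1}\Amat_1^{-1}=\Amat_2^\dagger\Bmat_2^{-1}\Bmat_1^{-1}\Amat_1^\dagger$, where I set the Gram matrices $\Bmat_1:=\Amat_1^\dagger\Amat_1$ and $\Bmat_2:=\Amat_2\Amat_2^\dagger$, both genuinely invertible $R\times R$ matrices with $\kappa(\Bmat_1)=\kappa(\Amat_1)^2$ and $\kappa(\Bmat_2)=\kappa(\Amat_2)^2$. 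Writing $\bm p:=\Amat_1^\dagger\bm b$, the optimal solution is $\bm x^\ast=\Amat_2^\dagger\bm\gamma^\ast$ with $\bm\gamma^\ast:=\Bmat_2^{-1}\Bmat_1^{-1}\bm p$, which already has the claimed output form $\bm x^\ast=\sum_j\vabs{\bm v_j}\gamma^\ast_j\ket{v_j}$, so $s_j=\vabs{\bm v_j}\gamma^\ast_j$. Because the row space of $\Amat$ coincides with that of $\Amat_2$, this representation automatically captures the minimum-norm least-squares solution.

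Next I would specify the estimation. The entries $(\Bmat_1)_{jk}=\vabs{\bm u_j}\vabs{\bm u_k}\braket{u_j|u_k}$, $(\Bmat_2)_{jk}=\vabs{\bm v_j}\vabs{\bm v_k}\braket{v_j|v_k}$, and $p_j=\vabs{\bm u_j}\vabs{\bm b}\braket{u_j|b}$ are each a classically known norm (bounded by $\Gamma$) times an inner product produced by the Hadamard tests $\{\ket{0^{\log N}},\Umat_{u_j}^\dagger\Umat_{u_k}\}$, $\{\ket{0^{\log M}},\Umat_{v_j}^\dagger\Umat_{v_k}\}$ and $\{\ket{0^{\log N}},\Umat_{u_j}^\dagger\Umat_{\bm b}\}$, where $\Umat_{u_j},\Umat_{v_j},\Umat_{\bm b}$ prepare $\ket{u_j},\ket{v_j},\ket{b}$. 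This matches the data model of Problem \ref{pro:QuadProApp} with $K=R$ and $\Delta=\Gamma$, so by Lemma \ref{lem:randommatrix} (applied to real and imaginary parts separately) using $\Gamma^2 T$ samples per entry yields $\hat\Bmat_1,\hat\Bmat_2,\hat{\bm p}$ with $\vabs{\hat\Bmat_i-\Bmat_i}=\Ord{\sqrt{R/T}}$ and $\vabs{\hat{\bm p}-\bm p}=\Ord{\sqrt{R/T}}$ with high probability.

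The error analysis is then a two-stage perturbation argument mirroring the over- and under-determined proofs, but simplified by invertibility. Setting $\hat{\bm z}:=\hat\Bmat_1^{-1}\hat{\bm p}$ to approximate $\bm z^\ast=\Bmat_1^{-1}\bm p$, and then $\hat{\bm\gamma}:=\hat\Bmat_2^{-1}\hat{\bm z}$ to approximate $\bm\gamma^\ast=\Bmat_2^{-1}\bm z^\ast$. Because $\Bmat_1,\Bmat_2$ are invertible (unlike the low-rank $\Vmat$ of Section \ref{sec:Hybrid Alg}), every such system is automatically consistent, so the squaring device of the under-determined case is unnecessary and no $\varepsilon$-dependent shift is required; instead I would use the invertible-matrix bound underlying Lemma \ref{lem:approxCorrect} (Proposition 10 of Ref.~\cite{huang2019near}). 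Provided $\vabs{\hat\Bmat_i-\Bmat_i}\le 1/(2\vabs{\Bmat_i^{-1}})$ so that $\hat\Bmat_i$ stays invertible, each stage contributes an error of order $\vabs{\Bmat_i^{-1}}^2\vabs{\hat\Bmat_i-\Bmat_i}+\vabs{\Bmat_i^{-1}}\vabs{\hat{\bm p}-\bm p}=\Ord{\mathrm{poly}(\kappa)\sqrt{R/T}}$, which is exactly why the sample cost scales as $1/\varepsilon^2$ rather than $1/\varepsilon^4$. Finally I would convert the coefficient error into an objective-value bound via $\vabs{\Amat\hat{\bm x}-\bm b}-\min_x\vabs{\Amat\bm x-\bm b}\le\vabs{\Amat}\vabs{\Amat_2}\vabs{\hat{\bm\gamma}-\bm\gamma^\ast}$, using $\hat{\bm x}-\bm x^\ast=\Amat_2^\dagger(\hat{\bm\gamma}-\bm\gamma^\ast)$ and $\vabs{\Amat_2^\dagger}=\vabs{\Amat_2}$, and choose $T$ so this is at most $\varepsilon$, as in the proof of Theorem \ref{thm:SecOverdetermined}. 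Collecting the $R^2$ entries each estimated with $\Gamma^2T$ Hadamard samples plus the $\Ord{R^3}$ classical inversions gives the stated run time, with the $(\kappa(\Amat_1)\kappa(\Amat_2))^8$ factor emerging from the squared perturbation bounds $\vabs{\Bmat_i^{-1}}\sim\kappa(\Amat_i)^2/\vabs{\Amat_i}^2$ compounded across the two chained inversions and the final conversion.

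The main obstacle I anticipate is the bookkeeping of the compounded errors through the two chained inversions: the intermediate quantity $\hat{\bm z}$ is itself only approximate yet serves as the right-hand side of the second system, so I must verify that its error, propagated through $\Bmat_2^{-1}$, together with the norm bound $\vabs{\bm z^\ast}\le\vabs{\Bmat_1^{-1}}\vabs{\bm p}$, does not inflate the condition-number dependence beyond $(\kappa(\Amat_1)\kappa(\Amat_2))^8$. Keeping the $\varepsilon$-dependence at $1/\varepsilon^2$ hinges on the genuine invertibility of $\Bmat_1,\Bmat_2$ permitting a direct inversion without the $\varepsilon$-scaled regularization of Lemma \ref{lem:approxCorrect}, and this reliance on invertibility is the point that most needs to be checked carefully.
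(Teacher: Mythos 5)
Your proposal is correct and follows essentially the same route as the paper's proof: the paper also reduces to the two invertible $R\times R$ Gram systems $\Amat_1^\dagger\Amat_1$ and $\Amat_2\Amat_2^\dagger$ (estimated entrywise by Hadamard tests), chains the two classical inversions $\hat{\bm y}=\hat{\Vmat}^{-1}\hat{\bm q}$ and $\hat{\bm\alpha}=\hat{\Qmat}^{-1}\hat{\bm y}$, invokes the invertible-case bound (Proposition 10 of Ref.~\cite{huang2019near}) to avoid the $\lambda$-shift and obtain the $1/\varepsilon^2$ scaling, and propagates the stage-one error $\varepsilon_1=\sqrt{R/T_2}$ into stage two exactly as you anticipate. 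The only cosmetic difference is that you motivate the construction via the reverse-order law $\Amat^{-1}=\Amat_2^\dagger\Bmat_2^{-1}\Bmat_1^{-1}\Amat_1^\dagger$, whereas the paper phrases it as two sequential least-squares subproblems $\min_{\bm y}\vabs{\Amat_1\bm y-\bm b}$ and $\min_{\bm x}\vabs{\Amat_2\bm x-\hat{\bm y}}$.
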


\begin{proof}
Let $\bm x^{\ast} := \argmin_{\bm x}\vabs{\Amat \bm x - \bm b}$, and $\bm y^{\ast} := \Amat_{2} \bm x^{\ast}$, then $\|\Amat \bm x^{\ast} - \bm b\| = \|\Amat_{1}\bm y^{\ast} - \bm b\|$. In the following we output $\bm s= (s_{1}, \cdots, s_{R})$ such that $\vabs{\Amat \sum_{1\leq j \leq R} s_{j}\ket{v_{j}}-\bm b} -\vabs{\Amat \bm x^{\ast} - \bm b}\leq \varepsilon$ by solving the following two linear system problems:
\begin{itemize}
    \item [(a)] Output an approximate solution $\hat{\bm y}$ such that
    $\vabs{\Amat_1 \hat{\bm y} - \bm b} - \vabs{\Amat_1 \bm y^{\ast} - \bm b}\leq \varepsilon_1$;
    \item[(b)] Solve linear system $\min_{\bm x}\vabs{\Amat_{2} \bm x - \bm y}$, and output the coefficients $\bm s = \pbra{s_{1}, \cdots, s_{R}}$ of the approximate solution $\hat{\bm x} = \sum_{1\leq j \leq R} s_j \ket{v_j}$ such that $\vabs{\hat{\bm x} - \bm x^{\ast}}\leq \varepsilon/\vabs{\Amat}$.
\end{itemize}
Here, $\varepsilon_{1}$ depends on Problem (b), and we will give its explicit value later. Problems (a) and (b) can be solved with our Algorithms \ref{Alg:overdetermined} and \ref{Alg:underdetermined}, respectively. Nevertheless, we can simplify this process since matrices $\Amat_1, \Amat_2$ are both full rank.

Let $\Gamma_{1}=\max\cbra{\vabs{\bm u_j}^2, \vabs{\bm u_j}\vabs{\bm b}}, \Gamma_2 = \max_{j}\vabs{\bm v_j}^2$.
The algorithm is as follows. First, generate approximate matrix $\hat{\Vmat}$ of matrix $\Vmat=\Amat_{1}^{\dagger}\Amat_1$, and approximate vector $\hat{\bm q}$ of vector $\bm q = \Amat_1 \bm b$ by Hadamard tests (similar to Algorithms \ref{Alg:overdetermined} and \ref{Alg:underdetermined}) with $\Gamma_{1}^{2}R^{2}T_{1}$ independent quantum measurements. In the same way, generate approximate materice $\hat{\Qmat}$ of matrix $\Qmat=\Amat_2\Amat_2^{\dagger}$ with $\Gamma_{2}R^{2} T_{2}$ independent quantum measurements.
 
Next, calculate $\hat{\bm y} = \hat{\Vmat}^{-1} \hat{\bm q}$ classically and output $\hat{\bm y}$ as a solution of Problem (a). Then by Proposition 10 of Huang, Bharti and Rebentrost~\cite{huang2019near}, we have $\vabs{\hat{\bm y} - \bm y^{\ast}}\leq \varepsilon_{1}$ when $T_{1} = \Ord{R \vabs{\Vmat^{-1}}^{2}\pbra{1 + \vabs{\bm y^{\ast}}^{2}}/\varepsilon_{1}^2}$.
Finally, calculate $\hat{\bm \alpha} = \hat{\Qmat}^{-1}\hat{\bm y}$ classically, and output the vector  $\pbra{\hat{\alpha}_{1}\vabs{\bm v_{1}}, \cdots, \hat{\alpha}_{R}\vabs{\bm v_{R}}}$ as a solution of Problem (b). 

To bound the error of the output, we require $\vabs{\hat{\bm y} - \bm y^{\ast}}\leq \Ord{\sqrt{R/T_{2}}}$, and 
\be T_{2} = \Ord{R \vabs{\Amat}^{2}\vabs{\Qmat^{-1}}^{2}\pbra{1 + \vabs{\bm\alpha^{*}}^{2}}/\varepsilon^2}\ee
by Huang, Bharti and Rebentrost~\cite{huang2019near}. Let $\varepsilon_{1} = \sqrt{R/T_{2}}$, which gives us 
$$ T_{1} = \frac{R\vabs{\Amat}^{2}\vabs{\Vmat^{-1}}^{2}\vabs{\Qmat^{-1}}^{2}\pbra{1 + \vabs{\bm y^{\ast}}^{2}}\pbra{1 + \vabs{\bm \alpha^{\ast}}^{2}}}{\varepsilon^2},$$
and thus the total number of quantum measurements is bounded by 
\begin{widetext}
\be
\Gamma^{2} R^{2}\pbra{T_{1} + T_{2}} &=& \Ord{\frac{\Gamma^{2} R^{3}\vabs{\Amat}^{2}\vabs{\Qmat^{-1}}^{2}\pbra{\vabs{\Vmat^{-1}}^{2}\pbra{1 + \vabs{\bm y^{\ast}}^{2}} + 1}\pbra{1 + \vabs{\bm \alpha^{\ast}}^{2}}}{\varepsilon^2}}\\
&=& \Ord{\frac{\Gamma^{2} R^{3}\vabs{\Amat}^{2}\vabs{\Amat_{2}^{-1}}^{4}\pbra{\vabs{\Amat_{1}^{-1}}^{4}\pbra{1 + \vabs{\bm y^{\ast}}^{2}} + 1}\pbra{1 + \vabs{\bm \alpha^{\ast}}^{2}}}{\varepsilon^2}},
\ee
\end{widetext}
where $\Gamma = \max\cbra{\Gamma_{1}, \Gamma_{2}}$, which can be simplified to 
$\Ord{\frac{\Gamma^2 R^3  \kappa\pbra{\Amat_1}^6 \kappa(\Amat_2)^6\kappa\pbra{\Amat}^2}{\varepsilon^2}}$
with the assumption that $\vabs{\bm b}=\Ord{1}, \vabs{\Amat_1} = \Ord{1}, \vabs{\Amat_2} = \Ord{1}$.
\end{proof}

\section{Algorithm for ``rank-relaxed" factorized linear system\label{app:GeneralFactorizedLS}}

Let $\Amat\in \Cbb^{N\times M}$, 
$\Amat_{1} :=\sum_{1\leq j\leq R} \vabs{\bm u_j}\ket{u_j}\bra{j}\in \Cbb^{N\times R}$, and 
$\Amat_{2} :=\sum_{1\leq j\leq R} \vabs{\bm v_j}\ket{j}\bra{v_j}\in \Cbb^{R\times M}$ with $\rank\pbra{\Amat_2} = R$, such that $\Amat_{1} \Amat_{2} = \Amat$. Let also $\bm{b}\in \mathbb{C}^{N}$. The vectors $\ket{u_j}, \ket{v_j}$ and $\ket{b} := \bm b/\vabs{\bm b}$ can be quantum accessed analogously to Assumptions \ref{assume:quantummatrix} and \ref{assume:vecb}. Let the size bound $\Gamma = \max_{1\leq j\leq R}\cbra{\vabs{\bm u_j}^2, \vabs{\bm u_j}\vabs{\bm b}, \vabs{\bm v_j}^2}$. The following theorem states that we can solve the linear system $\argmin_x \vabs{\Amat \bm x - \bm b}^2$ in a run time dependent on $\polylog \pbra{M,N}$.
This algorithm depends on $1/\varepsilon^4$  linearly, since we relax the decomposition form of $\Amat$, and $\Amat_1$ can be a low rank matrix.
\begin{theorem}
For matrix $\Amat$ and vector $\bm b$ as mentioned above, there exists an \be\Ord{R^{3}\Gamma^2\poly\pbra{\kappa\pbra{\Amat_1}, \kappa\pbra{\Amat_2},\log M,\log N, \vabs{\bm b}}/\varepsilon^4}\ee
time hybrid algorithm, which outputs
a classical vector $\bm s = \pbra{s_1, \cdots, s_R}$ such that the approximate solution
$\bm \hat{\bm x} = \sum_{1\leq j\leq R} s_j \ket{v_j}$, and $ \vabs{\Amat\hat{\bm x} - \bm b} - \min_x \vabs{\Amat \bm x - \bm b}\leq \varepsilon $ with high success probability.
\label{thm:AppGFLS}
 \end{theorem}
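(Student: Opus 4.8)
The plan is to mirror the two-stage decomposition used in the proof of Theorem~\ref{thm:AppLowRank}, but to replace the invertible-matrix subroutine (Proposition 10 of Ref.~\cite{huang2019near}) in the first stage by the perturbation-based Algorithm~\ref{Alg:overdetermined}, since now $\Amat_1$ is permitted to be rank-deficient. First I would set $\bm x^\ast := \argmin_{\bm x}\vabs{\Amat\bm x - \bm b}$ and $\bm y^\ast := \Amat_2\bm x^\ast$. Because $\rank(\Amat_2)=R$, the map $\bm x\mapsto\Amat_2\bm x$ is surjective onto $\Cbb^R$, so $\bm y^\ast = \argmin_{\bm y}\vabs{\Amat_1\bm y - \bm b}$ exactly; this reduces the task to (a) solving the over-determined system $\min_{\bm y}\vabs{\Amat_1\bm y - \bm b}$ for an estimate $\hat{\bm y}$ with $\vabs{\hat{\bm y}-\bm y^\ast}\leq\varepsilon_1$, and (b) solving $\min_{\bm x}\vabs{\Amat_2\bm x - \hat{\bm y}}$ in the $\ket{v_j}$ basis, outputting the coefficients $\bm s$.

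For stage (a), since $\Vmat_1 := \Amat_1^\dagger\Amat_1$ is positive semi-definite but possibly singular, I would invoke Algorithm~\ref{Alg:overdetermined} together with Lemma~\ref{lem:approxCorrect}, shifting $\hat{\Vmat}_1$ by $\lambda\Imat$ and estimating all entries of $\Vmat_1$ and $\Amat_1^\dagger\bm b$ by Hadamard tests. This is precisely the step responsible for the $1/\varepsilon^4$ scaling, because the perturbation bound of Lemma~\ref{lem:PerturbBounded} forces $\lambda = \Ord{\varepsilon_1/(\vabs{\Amat_1^{-1}}^2\vabs{\bm b})}$ and hence $T_1 = \Ord{R\,\poly(\cdots)/\varepsilon_1^4}$ samples per entry. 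For stage (b), I would exploit that $\Amat_2$ has full row rank, so that $\Vmat_2 := \Amat_2\Amat_2^\dagger\in\Cbb^{R\times R}$ is invertible and, by Lemma~\ref{lem:decompX}, the minimum-norm solution lies in $\mathrm{span}\,\cbra{\ket{v_1},\dots,\ket{v_R}}$. Writing $\bm x = \sum_j\alpha_j\vabs{\bm v_j}\ket{v_j}$ turns $\Amat_2\bm x = \hat{\bm y}$ into the consistent, determined system $\Vmat_2\bm\alpha = \hat{\bm y}$, which can be handled by Proposition 10 of Ref.~\cite{huang2019near} with only $1/\varepsilon^2$ sample overhead; I would then return $s_j := \hat\alpha_j\vabs{\bm v_j}$.

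The error-propagation bookkeeping is the step I expect to be the main obstacle. The output error splits as $\hat{\bm\alpha}-\bm\alpha^\ast = \Vmat_2^{-1}(\hat{\bm y}-\bm y^\ast)$ plus a term controlled by $\vabs{\hat{\Vmat}_2-\Vmat_2}$, so the stage-(a) error $\varepsilon_1$ is amplified by $\vabs{\Vmat_2^{-1}} = \vabs{\Amat_2^{-1}}^2$ when it enters stage (b). To reach the target $\vabs{\hat{\bm x}-\bm x^\ast}\leq\varepsilon/\vabs{\Amat}$ I would therefore set $\varepsilon_1 = \Theta\!\left(\varepsilon/(\vabs{\Amat}\vabs{\Amat_2}\vabs{\Amat_2^{-1}}^2)\right)$, using $\vabs{\hat{\bm x}-\bm x^\ast}\leq\vabs{\Amat_2}\vabs{\hat{\bm\alpha}-\bm\alpha^\ast}$ (which follows from $\Vmat_2 = \Amat_2\Amat_2^\dagger$). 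Substituting this $\varepsilon_1$ into $T_1 = \Ord{R\,\poly(\cdots)/\varepsilon_1^4}$ produces the $\vabs{\Amat_2^{-1}}^8$ factor and, together with $T_2 = \Ord{R\,\poly(\cdots)/\varepsilon^2}$, assembles the condition-number polynomials $\poly(\kappa(\Amat_1),\kappa(\Amat_2))$.

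Finally, I would convert the bound $\vabs{\hat{\bm x}-\bm x^\ast}\leq\varepsilon/\vabs{\Amat}$ into the residual guarantee $\vabs{\Amat\hat{\bm x}-\bm b} - \min_{\bm x}\vabs{\Amat\bm x - \bm b}\leq\varepsilon$ exactly as in the closing argument of Theorem~\ref{thm:SecOverdetermined} (projecting $\bm b$ onto the range and null space of $\Amat$ and applying the triangle inequality). Counting measurements, each of the $\Ord{R^2}$ entries of $\hat{\Vmat}_1$ and $\hat{\Vmat}_2$ is estimated with $\Ord{\Gamma^2(T_1+T_2)}$ Hadamard tests, and the classical linear-algebra cost is $\Ord{R^3}$, yielding the stated $\Ord{R^3\Gamma^2\,\poly(\kappa(\Amat_1),\kappa(\Amat_2),\log M,\log N,\vabs{\bm b})/\varepsilon^4}$ run time.
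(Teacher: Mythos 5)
Your proposal is correct and follows essentially the same route as the paper's own proof: the same two-stage split (a perturbed, Lemma~\ref{lem:approxCorrect}-style solve of the possibly rank-deficient system $\min_{\bm y}\vabs{\Amat_1\bm y-\bm b}$, which is the source of the $1/\varepsilon^4$ factor, followed by Proposition 10 of Ref.~\cite{huang2019near} applied to the invertible $\Vmat_2=\Amat_2\Amat_2^\dagger$ via Lemma~\ref{lem:decompX}), with the same error propagation yielding the $\vabs{\Amat_2^{-1}}^8$ factor. The only cosmetic difference is that you fix $\varepsilon_1$ directly in terms of the norms, whereas the paper equivalently sets $\varepsilon_1=\sqrt{R/T_2}$.
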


\begin{proof}
For matrix $\Amat\in \Cbb^{N\times M}$ and $\Amat = \Amat_1\Amat_2$, $\Amat_2\in \Cbb^{R\times M}$ is full row rank.
Let $\bm y := \Amat_{2} \bm x$, then $\Amat_{1}\bm y = \bm b$. Let $\Vmat_{1} = \Amat_{1}^{\dagger}\Amat_1$, and $\Gamma_{1}$ be the maximum absolute value of all of elements of $\Vmat_{1}$ and $\bm b$.
Similar to Algorithm \ref{Alg:overdetermined}, we can generate an approximate matrix $\hat{\Vmat}_{1}$ of $\Vmat_{1}$ by $\Gamma_{1}^{2}R^{2}T_{1}$ independent measurements of some quantum circuits such that $\vabs{\hat{\Vmat}_{1} - \Vmat_{1}}\leq \Ord{\sqrt{R/T_1}}$.

Let $\hat{\bm y}$ be the solution of the approximate and perturbed system $\hat{\Wmat}=\hat{\Vmat}_{1} + \lambda \Imat$, with $\lambda = \varepsilon_{1}/\pbra{2\vabs{\Amat_{1}^{-1}}^4\vabs{\Amat_{1}}^2\vabs{\bm b}}$.
Then we have $\vabs{\hat{\bm y} - \bm y}\leq \varepsilon_{1}$, with
$$ T_{1} = \frac{ cR \vabs{\Amat_{1}^{-1}}^8\vabs{\Amat_{1}}^2\vabs{\bm b}^2\pbra{\vabs{\Amat_1}(\vabs{\bm x^*} + 1) + \varepsilon_1}^2 }{\varepsilon_{1}^4},$$
for large constant $c$.

For the system $\Amat_{2}\bm x = \bm y$, let $\Vmat_{2} = \Amat_{2}\Amat_{2}^{\dagger}$. By Lemma \ref{lem:decompX}, there exists a vector $\bm \alpha\in \Cbb^{R}$ such that $\bm x = \sum_{i}\alpha_{i}\vabs{\bm a_{i}} \ket{a_{i}}$, and thus
$\Vmat_{2}\bm \alpha = \Amat_{2} \bm x$. Since $\Amat_{2}$ is full row rank, $\Vmat_{2}$ is full rank. Let $\hat{\Vmat}_{2}$ be the approximated matrix of $\Vmat_{2}$ with the same method as $\hat{\Vmat}$ in Algorithm \ref{Alg:underdetermined}. Then $\hat{\Vmat}_{2}$ can be obtained by $\Ord{\Gamma_2 R^{2} T_{2}}$ independent measurements of Hadamard tests, where $\Gamma_2$ is the maximum absolute value of all elements in $\Vmat_2$, and $\|\hat{\Vmat} - \Vmat\|\leq \Ord{\sqrt{R/T_{2}}}$.
Let $\hat{\bm \alpha}=\hat{\Vmat}^{-1}\bm b$ be the solution of the approximate system, and $\varepsilon_{1} = \sqrt{R/T_{2}}$. Since $\Vmat_{2}$ is invertible, by Proposition 10 of Huang, Bharti and Rebentrost~\cite{huang2019near}, $\|\Amat_{2}\hat{\bm x} - \bm y\| - \|\Amat_{2}\bm x - \bm y\| \leq \varepsilon/\vabs{\Amat_1}$, with $T_{2} = \Ord{R \vabs{\Amat_1}^2\vabs{\Vmat_{2}} \vabs{\Vmat_{2}^{-1}}^{2}(1+\vabs{\bm\alpha^\ast})^{2}/\varepsilon^2} = \Ord{R \vabs{\Amat_1}^2\vabs{\Amat_{2}}^{2} \vabs{\Amat_{2}^{-1}}^{4}(1+\vabs{\bm \alpha^\ast})^{2}/\varepsilon}$, where $\bm \alpha^\ast =\Vmat_2^{-1}\bm y$. Hence, 
$\vabs{\Amat \bm \hat{\bm x} - \bm y} - \vabs{\Amat \bm x - \bm y}\leq \varepsilon$, with quantum measurements
\be
&&\Ord{\Gamma^{2}R^{2}\pbra{T_{1} + T_{2}}}\\
&= &\Gamma^{2}R^{3}\Ord{ \frac{a_1 a_2^2 (\vabs{\bm x^\ast} + 1)^2}{\varepsilon^4} + \frac{(a_1+1) a_2}{\varepsilon^2}}
\ee
where 
$a_1 = \vabs{\Amat_{1}^{-1}}^8\vabs{\Amat_{1}}^4\vabs{\bm b}^2, a_2 = \vabs{\Amat_1}^2\vabs{\Amat_{2}}^2 \vabs{\Amat_{2}^{-1}}^{4}(1+\vabs{\bm\alpha^\ast})^{2}$, and
$\Gamma = \max (\Gamma_1, \Gamma_2)$. 
\end{proof}

\section{\label{app:ApplicationErrorBound}Application of the hybrid output in Algorithm \ref{Alg:underdetermined}}
In this section, we will give the error analysis for the two applications of our hybrid outputs, as stated in Lemmas \ref{clm:ApplicationError} and \ref{clm:ApplicationError2}.

Let $\hat{\bm y} = \sum_{i = 1}^{M} \hat{\alpha_i} \vabs{\bm a_i}\ket{a_i}$ be an approximation of optimal solution $\bm y = \sum_{i = 1}^M \alpha_i \vabs{\bm a_i}\ket{a_i}$ such that $\vabs{\hat{\bm \alpha} - \bm \alpha}\leq \eta$, where $\eta>0$ is the error of our algorithm. The more measurements we have in our Algorithm \ref{Alg:underdetermined}, the smaller 
the error $\eta$.
The first application is measuring an overlap with an arbitrary quantum state. 
\begin{lemma}
Given $\bm y$ and $\hat{\bm y}$ as defined above, $\varepsilon\in (0, 1)$, and a poly-logarithmic depth circuit $\Umat_v \in \Cbb^{N\times N}$.
Let $\ket{v} = \Umat_v\ket{0^{\log N}}$, and $\eta = \frac{\varepsilon}{2\vabs{\Amat}_F}$. 
There exists an algorithm which outputs the inner product $\widetilde{\bra{v}\bm y}$, such that $\abs{\widetilde{\bra{v}{\bm y}} - \bra{v}\bm y}\leq \varepsilon$ with $\Ord{\frac{M\vabs{\Amat}_F^2\vabs{\bm \alpha}^2}{\varepsilon^2}}$ independent measurements.
\label{clm:ApplicationError}
\end{lemma}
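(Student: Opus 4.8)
The plan is to express the target as $\bra{v}\bm y = \sum_{i=1}^M \alpha_i \vabs{\bm a_i}\braket{v|a_i}$ and to return the estimate $\widetilde{\bra{v}\bm y} := \sum_{i=1}^M \hat{\alpha}_i \vabs{\bm a_i}\,\widetilde{\braket{v|a_i}}$, where the $\hat{\alpha}_i$ are the coefficients produced by Algorithm \ref{Alg:underdetermined} and each $\widetilde{\braket{v|a_i}}$ is obtained from the Hadamard test of $\cbra{\ket{0^{\log N}}, \Umat_v^{\dagger}\Umat_i}$, using that $\braket{v|a_i} = \bra{0^{\log N}}\Umat_v^{\dagger}\Umat_i\ket{0^{\log N}}$. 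The total error has two independent origins — the imperfect coefficients $\hat{\bm\alpha}$ and the finite-sample Hadamard estimates — so I would separate them by inserting the intermediate quantity $\bra{v}\hat{\bm y} = \sum_i \hat{\alpha}_i\vabs{\bm a_i}\braket{v|a_i}$ and applying the triangle inequality $\abs{\widetilde{\bra{v}\bm y} - \bra{v}\bm y} \leq \abs{\widetilde{\bra{v}\bm y} - \bra{v}\hat{\bm y}} + \abs{\bra{v}\hat{\bm y} - \bra{v}\bm y}$, then forcing each summand to be at most $\varepsilon/2$.

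First I would dispose of the coefficient error. Since $\Amat = \sum_i \vabs{\bm a_i}\ket{a_i}\bra{i}$ gives $\bm y = \Amat\bm\alpha$ and $\hat{\bm y} = \Amat\hat{\bm\alpha}$, the second summand equals $\abs{\bra{v}\Amat(\hat{\bm\alpha}-\bm\alpha)}$, which is bounded by $\vabs{\Amat}\,\vabs{\hat{\bm\alpha}-\bm\alpha} \leq \vabs{\Amat}_F\,\eta = \varepsilon/2$, using $\vabs{\ket{v}}=1$, the inequality $\vabs{\Amat}\leq\vabs{\Amat}_F$, the hypothesis $\vabs{\hat{\bm\alpha}-\bm\alpha}\leq\eta$, and the choice $\eta = \varepsilon/(2\vabs{\Amat}_F)$. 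For the sampling error I would run each of the $M$ Hadamard tests to additive accuracy $t$, which by Hoeffding's inequality costs $\Ord{1/t^2}$ measurements per test (the failure probability over all $M$ tests being handled by a union bound and the $\log$ factor omitted as elsewhere in the paper). Then triangle inequality followed by Cauchy--Schwarz gives
\be
\abs{\widetilde{\bra{v}\bm y} - \bra{v}\hat{\bm y}} &\leq& t\sum_{i=1}^M \abs{\hat{\alpha}_i}\vabs{\bm a_i} \leq t\,\vabs{\hat{\bm\alpha}}\sqrt{\sum_{i=1}^M \vabs{\bm a_i}^2} = t\,\vabs{\hat{\bm\alpha}}\,\vabs{\Amat}_F,
\ee
where I use $\vabs{\Amat}_F^2 = \sum_i \vabs{\bm a_i}^2$. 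Setting $t = \varepsilon/(2\vabs{\hat{\bm\alpha}}\vabs{\Amat}_F)$ makes this at most $\varepsilon/2$, and combining with the previous bound yields total error at most $\varepsilon$. The measurement count is $M\cdot\Ord{1/t^2} = \Ord{M\vabs{\hat{\bm\alpha}}^2\vabs{\Amat}_F^2/\varepsilon^2} = \Ord{M\vabs{\Amat}_F^2\vabs{\bm\alpha}^2/\varepsilon^2}$, since $\vabs{\hat{\bm\alpha}} \leq \vabs{\bm\alpha}+\eta$ differs from $\vabs{\bm\alpha}$ only by a negligible additive amount.

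The main obstacle is making the Frobenius-norm scaling appear \emph{exactly} as stated rather than a looser $\max_i\vabs{\bm a_i}$-type factor; the decisive step is the Cauchy--Schwarz split $\sum_i \abs{\hat{\alpha}_i}\vabs{\bm a_i} \leq \vabs{\hat{\bm\alpha}}\,(\sum_i\vabs{\bm a_i}^2)^{1/2}$, which converts the coefficient norm and the column norms precisely into $\vabs{\hat{\bm\alpha}}\vabs{\Amat}_F$. The remaining care is purely bookkeeping: verifying that each of the two error contributions truly lands at $\varepsilon/2$ for the stated $\eta$ and $t$, and that $\hat{\bm\alpha}$ may be replaced by $\bm\alpha$ in the final measurement count without affecting the order.
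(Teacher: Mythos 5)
Your proposal is correct and follows essentially the same route as the paper's proof: the same estimator $\sum_i \hat{\alpha}_i\vabs{\bm a_i}\widetilde{\braket{v|a_i}}$, the same triangle-inequality split through the intermediate quantity $\bra{v}\hat{\bm y}$, and the same Cauchy--Schwarz step producing the $\vabs{\hat{\bm\alpha}}\vabs{\Amat}_F$ factor and hence the stated $\Ord{M\vabs{\Amat}_F^2\vabs{\bm\alpha}^2/\varepsilon^2}$ count. The only cosmetic differences are that you bound the coefficient-error term via $\vabs{\Amat}\le\vabs{\Amat}_F$ instead of a second direct Cauchy--Schwarz, and you budget the error as $\varepsilon/2+\varepsilon/2$ up front rather than solving for $T$ at the end; both are equivalent to the paper's argument.
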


\begin{proof}
We can use Hadamard test to approximate $\braket{v|a_i} = \bra{0^{\log N}}\Umat_v^{\dagger} \Umat_i\ket{0^{\log N}}$. Suppose we independently take $T$ measurements of the Hadamard test $\cbra{\ket{0^{\log N}}, \Umat_v^{\dagger} \Umat_i}$ and obtain the approximation  $\widetilde{\braket{v|a_i}}$. Then we have 
 $ \abs{\widetilde{\braket{v|a_i}} - \braket{v|a_i}}\leq \Ord{1/\sqrt{T}}$.
 Let $\widetilde{\bra{v}\bm y} = \sum_i \hat{\alpha}_i \vabs{\bm a_i}\widetilde{\braket{v|a_i}}$.
 Hence
 \begin{widetext}
\begin{align}
       \abs{\widetilde{\bra{v}{\bm y}} - \bra{v}\bm y} &\leq \abs{\widetilde{\bra{v}\bm y} - \bra{v}\hat{\bm y}} +   \abs{\bra{v}\hat{\bm y} - \bra{v}\bm y}\\
   & \leq \abs{\sum_{i = 1}^M \hat\alpha_i \vabs{\bm a_i}\Ord{\sqrt{1/T}}} + \abs{ \sum_{i = 1}^M \hat{\alpha}_i\vabs{\bm a_i}\braket{v|a_i} - \sum_{i = 1}^M \alpha_i \vabs{\bm a_i} \braket{v|a_i}} \\
   &\leq \vabs{\hat{\bm \alpha}}\vabs{\Amat}_F \Ord{\sqrt{1/T}} + \vabs{\bm \alpha - \hat{\bm \alpha}}\vabs{\Amat}_F \\
   &\leq \vabs{\bm \alpha}\vabs{\Amat}_F \Ord{\sqrt{1/T}} +
   \vabs{\hat{\bm \alpha} - \bm \alpha}\vabs{\Amat}_F \Ord{\sqrt{1/T}}
  + \vabs{\bm \alpha - \hat{\bm \alpha}}\vabs{\Amat}_F\\
   &\leq \varepsilon
\end{align}
 \end{widetext}

The last inequality holds when $T = \Ord{\pbra{\varepsilon + \vabs{\Amat}_F\vabs{\bm \alpha}}^2/\varepsilon^2}$. The expression can be simplified to $T = \Ord{\vabs{\Amat}_F^2\vabs{\bm \alpha}^2/\varepsilon^2}$ with the fact that $T=\Omega(1)$. Since we need to approximate all of $\braket{v|a_i}$ for $i\in[M]$, the total measurements equals $\Ord{M\vabs{\Amat}_F^2\vabs{\bm \alpha}^2/\varepsilon^2}$.
\end{proof}
The second application is measuring an observable on the output. 
\begin{lemma}
Given $\bm y, \hat{\bm y} $ as defined above, $K_H = \Ord{\polylog N}$,  $\varepsilon\leq \min\cbra{1, \Delta_H K_H \vabs{\Amat}_F^2 \vabs{\bm \alpha}^2}$, and poly-logarithmic depth quantum circuits $\Hmat_i$ for $i\in K_H$.
Let $\Hmat = \sum_{i\leq K_H} \gamma_i \Hmat_i$, where $0<\gamma_i<\Delta_H$ for $i\in[K_H]$ are coefficients.
$\eta =\min\cbra{ \frac{\varepsilon}{4\vabs{\Amat}_F^2 \vabs{\bm \alpha}\Delta_H K_H}, \vabs{\bm \alpha}\varepsilon}$.
There exists an algorithm which outputs $\widetilde{\bm y^{\dagger} \Hmat\bm y}$ such that $\abs{ \widetilde{\bm y^{\dagger} \Hmat\bm y} - \bm y^{\dagger} \Hmat \bm y}\leq \varepsilon$ with $\tOrd{\frac{M^2\Delta_H^2\vabs{\Amat}_F^4\vabs{\bm \alpha}^4}{\varepsilon^2}}$ independent measurements.
\label{clm:ApplicationError2}
\end{lemma}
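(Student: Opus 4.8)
The plan is to reduce the estimation of the quadratic form $\bm y^{\dagger}\Hmat\bm y$ to a collection of Hadamard tests on the matrix elements $\bra{a_i}\Hmat_k\ket{a_j}$, and then to bound separately the error coming from the approximate coefficients $\hat{\bm\alpha}$ and the error coming from the finite-sample Hadamard estimates. First I would expand, using $\bm y = \Amat\bm\alpha$ and $\Hmat=\sum_{k\leq K_H}\gamma_k\Hmat_k$,
\begin{align}
\bm y^{\dagger}\Hmat\bm y = \sum_{i,j=1}^{M}\sum_{k=1}^{K_H}\overline{\alpha_i}\,\alpha_j\,\vabs{\bm a_i}\vabs{\bm a_j}\,\gamma_k\,\bra{a_i}\Hmat_k\ket{a_j},
\end{align}
and observe that each element $\bra{a_i}\Hmat_k\ket{a_j}=\bra{0^{\log N}}\Umat_i^{\dagger}\Hmat_k\Umat_j\ket{0^{\log N}}$ is the output of a Hadamard test of $\cbra{\ket{0^{\log N}},\Umat_i^{\dagger}\Hmat_k\Umat_j}$, whose argument is a unitary of poly-logarithmic depth (by Assumption \ref{assume:quantummatrix} and the assumption on $\Hmat_k$). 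By the analysis of Sec.~\ref{subsec:HadamardTest}, $T$ repetitions (of real and imaginary parts) produce an estimate $\widetilde{\bra{a_i}\Hmat_k\ket{a_j}}$ with $\abs{\widetilde{\bra{a_i}\Hmat_k\ket{a_j}}-\bra{a_i}\Hmat_k\ket{a_j}}=\Ord{1/\sqrt{T}}$ with high probability. The estimator is then $\widetilde{\bm y^{\dagger}\Hmat\bm y}:=\sum_{i,j,k}\overline{\hat\alpha_i}\,\hat\alpha_j\,\vabs{\bm a_i}\vabs{\bm a_j}\,\gamma_k\,\widetilde{\bra{a_i}\Hmat_k\ket{a_j}}$, using the coefficients $\hat{\bm\alpha}$ output by Algorithm \ref{Alg:underdetermined}.

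Next I would split the total error through the intermediate quantity $\hat{\bm y}^{\dagger}\Hmat\hat{\bm y}=\sum_{i,j,k}\overline{\hat\alpha_i}\hat\alpha_j\vabs{\bm a_i}\vabs{\bm a_j}\gamma_k\bra{a_i}\Hmat_k\ket{a_j}$ (exact elements, approximate coefficients), so that
\begin{align}
\abs{\widetilde{\bm y^{\dagger}\Hmat\bm y}-\bm y^{\dagger}\Hmat\bm y}\leq\abs{\widetilde{\bm y^{\dagger}\Hmat\bm y}-\hat{\bm y}^{\dagger}\Hmat\hat{\bm y}}+\abs{\hat{\bm y}^{\dagger}\Hmat\hat{\bm y}-\bm y^{\dagger}\Hmat\bm y}.
\end{align}
For the first term, using $\sum_k\gamma_k\leq\Delta_H K_H$, the per-element error $\Ord{1/\sqrt T}$, and the Cauchy--Schwarz bound $\sum_i\abs{\hat\alpha_i}\vabs{\bm a_i}\leq\vabs{\hat{\bm\alpha}}\vabs{\Amat}_F$, I obtain a bound of order $\vabs{\hat{\bm\alpha}}^2\vabs{\Amat}_F^2\Delta_H K_H/\sqrt{T}$.

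For the second (coefficient) term, writing $\Mmat:=\Amat^{\dagger}\Hmat\Amat$ with $\hat{\bm y}=\Amat\hat{\bm\alpha}$, $\bm y=\Amat\bm\alpha$, the telescoping identity
\begin{align}
\hat{\bm\alpha}^{\dagger}\Mmat\hat{\bm\alpha}-\bm\alpha^{\dagger}\Mmat\bm\alpha=\hat{\bm\alpha}^{\dagger}\Mmat\pbra{\hat{\bm\alpha}-\bm\alpha}+\pbra{\hat{\bm\alpha}-\bm\alpha}^{\dagger}\Mmat\bm\alpha
\end{align}
keeps the dependence on $\vabs{\hat{\bm\alpha}-\bm\alpha}\leq\eta$ linear, yielding a bound of order $\vabs{\Mmat}\,\eta\pbra{\vabs{\hat{\bm\alpha}}+\vabs{\bm\alpha}}$, where $\vabs{\Mmat}\leq\vabs{\Amat}^2\Delta_H K_H\leq\vabs{\Amat}_F^2\Delta_H K_H$. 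To turn these into the stated parameters I would invoke the second branch of $\eta$: since $\eta\leq\vabs{\bm\alpha}\varepsilon\leq\vabs{\bm\alpha}$ (as $\varepsilon\leq1$) we get $\vabs{\hat{\bm\alpha}}\leq2\vabs{\bm\alpha}$, so both terms are controlled by $\vabs{\bm\alpha}$. The first-branch value $\eta=\varepsilon/\pbra{4\vabs{\Amat}_F^2\vabs{\bm\alpha}\Delta_H K_H}$ then makes the coefficient term at most $\varepsilon/2$, and choosing $T=\Ord{\vabs{\Amat}_F^4\vabs{\bm\alpha}^4\Delta_H^2 K_H^2/\varepsilon^2}$ makes the estimation term at most $\varepsilon/2$.

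Finally, since there are $M^2 K_H$ distinct matrix elements, each estimated with $T$ repetitions, the measurement count is $\Ord{M^2 K_H\,T}=\Ord{M^2 K_H^3\Delta_H^2\vabs{\Amat}_F^4\vabs{\bm\alpha}^4/\varepsilon^2}$, which equals $\tOrd{M^2\Delta_H^2\vabs{\Amat}_F^4\vabs{\bm\alpha}^4/\varepsilon^2}$ after absorbing the factor $K_H^3=\polylog N$ into the $\tOrd{\cdot}$ notation. The main obstacle is organizing the two distinct error sources so that the quadratic-in-$\hat{\bm\alpha}$ form does not generate an uncontrolled $\eta^2$ contribution and so that the constants in $\eta$ and $T$ combine to a clean total error $\varepsilon$; the telescoping identity together with the a priori bound $\vabs{\hat{\bm\alpha}}\leq2\vabs{\bm\alpha}$ (precisely the role of the second branch in the definition of $\eta$) is what resolves this.
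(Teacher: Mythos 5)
Your proposal is correct and follows the paper's overall skeleton — the same estimator built from Hadamard tests on $\bra{a_i}\Hmat_k\ket{a_j}$, the same triangle-inequality split through the intermediate quantity $\hat{\bm y}^{\dagger}\Hmat\hat{\bm y}$, and the same use of the two branches of $\eta$ (the second branch giving $\vabs{\hat{\bm\alpha}}\leq 2\vabs{\bm\alpha}$, the first branch killing the coefficient error). Where you genuinely diverge is in bounding the coefficient term $\abs{\hat{\bm y}^{\dagger}\Hmat\hat{\bm y}-\bm y^{\dagger}\Hmat\bm y}$: the paper expands it entrywise, applies Cauchy--Schwarz to get $\Delta_H K_H\vabs{\Amat}_F^2\bigl(\sum_{i,k}\abs{\hat\alpha_i^{\ast}\hat\alpha_k-\alpha_i^{\ast}\alpha_k}^2\bigr)^{1/2}$, and then bounds this rank-one Frobenius difference by $3\vabs{\bm\alpha}\eta$ via the expansion $\pbra{\alpha_i+\varepsilon_i}\pbra{\alpha_k+\varepsilon_k}-\alpha_i\alpha_k$ (a step that, as written, leans on the dubious claim that the $\alpha_i$ are real and positive); you instead telescope the quadratic form $\hat{\bm\alpha}^{\dagger}\Mmat\hat{\bm\alpha}-\bm\alpha^{\dagger}\Mmat\bm\alpha$ with $\Mmat=\Amat^{\dagger}\Hmat\Amat$ and use the operator-norm bound $\vabs{\Mmat}\leq\vabs{\Amat}_F^2\Delta_H K_H$. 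Your route is cleaner and strictly more general: it handles complex $\bm\alpha$ with no positivity assumption and gives the linear-in-$\eta$ bound in one line, whereas the paper's entrywise computation is more elementary but fragile. One small arithmetic point: with your constants the coefficient term comes out as $3\varepsilon/4$ (since $\vabs{\hat{\bm\alpha}}+\vabs{\bm\alpha}\leq 3\vabs{\bm\alpha}$), not $\varepsilon/2$; this is harmless because the constant hidden in your choice of $T$ can be enlarged to make the estimation term at most $\varepsilon/4$, exactly as the paper's own proof relies on an unspecified constant $c$ in $T$ to close its budget.
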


\begin{proof}
Let $\widetilde{\bra{a_i}\Hmat_j\ket{a_k }}$ denote the approximation of $\bra{a_i} \Hmat_j\ket{a_k }$ by $T$ independent measurements of the Hadamard test $\cbra{\ket{0^{\log N}}, \Umat_i^{\dagger} \Hmat_j \Umat_k}$. Note that
$\widetilde{\bm y^{\dagger} \Hmat \bm y} = \sum_{i,j,k} \hat{\alpha}_i^{\ast} \gamma_j \hat{\alpha}_k \vabs{\bm a_i}\vabs{\bm a_k}\widetilde{\bra{a_i}\Hmat_j\ket{a_k}}$.
Then we have
\begin{widetext}
\begin{align}
     \abs{\widetilde{\bm y^{\dagger}\Hmat \bm y} - \bm y^{\dagger}\Hmat \bm y}
     &\leq
    \abs{\widetilde{\bm y^{\dagger}\Hmat \bm y} - \hat{\bm y}^{\dagger}\Hmat \hat{\bm y}} + \abs{\hat{\bm y}^{\dagger}\Hmat \hat{\bm y} - \bm y^{\dagger}\Hmat \bm y}\\
    &\leq \abs{\sum_{i,j,k} \hat{\alpha}_i^{\ast} \gamma_j \hat{\alpha}_k \vabs{\bm a_i}\vabs{\bm a_k}
   \Ord{ \sqrt{1/T}}} \\
   &+ \abs{\sum_{j} \gamma_j \sum_{i,k}\pbra{\hat{\alpha}_i^{\ast}\hat{\alpha}_k -\alpha_i^\ast \alpha_k}
    \vabs{\bm a_i}\vabs{\bm a_k}\bra{a_i }\Hmat_j \ket{a_k}}\\
    &\leq \Delta_H K_H\vabs{\Amat}_F^2\Ord{\sqrt{1/T}} \pbra{\eta^2 + \vabs{\bm \alpha}^2} \\
    &+ 
  \Delta_H K_H   \vabs{\Amat}_F^2 \sqrt{\sum_{i,k} \pbra{\hat{\alpha}_i^{\ast}\hat{\alpha}_k - \alpha_i^{\ast} \alpha_k}^2}\\
    &\leq \varepsilon.
\end{align}
\end{widetext}
The last inequality holds when 
(i) $\sum_{i,k} \pbra{\hat{\alpha}_i^{\ast}\hat{\alpha}_k - \alpha_i^{\ast} \alpha_k}^2 = \Ord{\vabs{\bm \alpha}^2 \eta^2}$ and (ii) $T = \frac{c\Delta_H^2 K_H^2 \vabs{\Amat}_F^4 \pbra{\eta^2 + \vabs{\bm \alpha}^2}^2}{\varepsilon^2}$ for a suitable constant $c$. 
The number of measurements can be simplified to 
\begin{align}
    T &= \frac{c'\Delta_H^2 K_H^2 \vabs{\Amat}_F^4 \vabs{\bm \alpha}^4}{\varepsilon^2} + \frac{c'\varepsilon^2}{\Delta_H^2 K_H^2 \vabs{\Amat}_F^4\vabs{\bm \alpha}^4} \\
&= \Ord{ \frac{\Delta_H^2 K_H^2 \vabs{\Amat}_F^4 \vabs{\bm \alpha}^4}{\varepsilon^2}},
\end{align}
for a suitable constant $c'$, since $\varepsilon\leq \Delta_H K_H \vabs{\Amat}_F^2 \vabs{\bm \alpha}^2$.
To show (i),  let $ \abs{\alpha_i - \hat\alpha_i} = \varepsilon_i$. Then $\sum_i \varepsilon_i^2 = \eta^2$.
 By the definition of $\bm \alpha$, we have $\alpha_i>0$. 
Hence,
\begin{align}
\begin{aligned}
    \sum_{i,k} \pbra{\hat{\alpha}_i^{\ast}\hat{\alpha}_k - \alpha_i^{\ast} \alpha_k}^2 &\leq \sum_{i,k}\pbra{\pbra{\alpha_i + \varepsilon_i} \pbra{\alpha_k + \varepsilon_k} - \alpha_i \alpha_k}^2\\
    &\leq 4\vabs{\bm \alpha}^2 \eta^2 + 4\vabs{\bm \alpha} \eta^3 + \eta^4\\
    &\leq 9\vabs{\bm \alpha}^2\eta^2.
\end{aligned}
\end{align}
The last inequality holds since $\eta\leq \vabs{\bm \alpha}$.
\end{proof}

\section{Optimized CNOT circuit for copy operation\label{app:copy}}

\begin{proof}
By Refs. \cite{moore2001parallel, jiang2020optimal}, the CNOT circuit constructed by $\CNOT_{1,j}$ for $2\leq j \leq n$ can be represented by a matrix $\Mmat\in \cbra{0,1}^{n\times n}$, in which only the first column and the diagonal elements are $1$, \emph{i.e.},
\begin{align}
    \Mmat = \begin{pmatrix}
    1 & 0 & 0 &\cdots & 0\\
    1 & 1 & 0 &\cdots & 0\\
    1 & 0 & 1 & \cdots & 0\\
    &&\cdots &&\\
    1 & 0 & 0 & \cdots & 1
    \end{pmatrix}.
    \label{eq:MatrixCnotOneCTGate}
\end{align}
The representation is unique, \emph{i.e.}, the CNOT circuits with the same matrix representation $\Mmat$ are equivalent.
A CNOT circuit with matrix representation $\Mmat$ followed by a CNOT gate $\CNOT_{i,j}$ is equivalent to add row $i$ to row $j$ of $\Mmat$.
Thus the construction of a CNOT circuit is equivalent to going from $\Imat$ to $\Mmat$ with paralleled row additions.
Here, row addition ${\rm AddRow}(\Mmat,i,j)$ means add row $i$ to row $j$ of matrix $\Mmat$.
Paralleled row addition means to do row additions in parallel.
The paralleled row addition algorithm from $\Imat$ to $\Mmat$ is as follows:
 \begin{itemize}
     \item Perform ${\rm AddRow} (\Mmat,2^{k-1}(2j + 1), 2^{k}j)$ for $j$ from $0$ to $\frac{n}{2^k}-1$, where $ k $ increases from $1$ to $\log n - 1$, except for all the row operations about the first row, \emph{i.e.}, without performing ${\rm AddRow} (\Mmat, j,0)$;
     \item Perform ${\rm AddRow} (\Mmat, 2^{k-1}(2j + 1), 2^{k}j)$ for $j$ from $0$ to $\frac{n}{2^k}-1$, and $ k $ decrease from $\log n$ to 1.
 \end{itemize}
For convenience here we suppose $\log n$ is an integer.  This assumption is easy to be generalized to the non-integer case.  Notice that CNOT$_{ji}$ is equivalent to add row $j$ to row $i$, and the transpose of $\Mmat$ is corresponding to transformation: $(x_1, x_2, \dots, x_n) \xrightarrow{\Mmat^T} \bm (x_1+\cdots + x_n, x_2, \cdots, x_n)$.  Hence, a simple way to verify the above process is to check whether the transpose operation of the above process equals $\Mmat^T$.
Since for each fixed $k$, all of row operations can be performed in parallel, there are $2\ceil{\log n}-1$ paralleled row additions in total.
\end{proof}

\section{
Optimization of a control circuit
\label{app:onelayercontrol}}
In this section, we give the proof of Lemma \ref{lem:OneLayerLog}.

\begin{proof}[Proof of Lemma \ref{lem:OneLayerLog}]
Observe that any one layer of a quantum circuit can be represented as
\be \Ccal = \bigotimes_{k\leq s}\CNOT_{i_{k},j_{k}}\bigotimes_{p\leq t} \Rmat_{p}(\theta),\ee 
where $i_{k},j_{k},p$ are disjoint qubits for any $k\in[s], p\in[t]$, and $s+t\leq n$.
In the following, we optimize the depth of Control-CNOTs and  Control-$\Rmat(\theta)$ respectively.

The controlled CNOT circuits which contain $s$ CNOTs are actually $s$ Toffoli gates which share one common control qubit.
Observe that we can greatly reduce the depth by using additional qubits to avoid the concentration of the control qubit of the Toffoli gates to the single ancilla by the idea of Ref.~\cite{barenco1995elementary}, as depicted in Fig. \ref{cir:ToffoliEquivalent}.
We say an $n+m$-qubit unitary $\Umat$ implements an $n$-qubit matrix $\Mmat$ with $m$ borrowed ancilla qubits if $\Umat(\ket{x}\ket{a}) =(\Mmat\ket{x})\ket{a}$ for any $a\in \cbra{0,1}^m$ and $x\in\cbra{0,1}^n$.
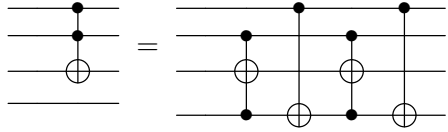
\begin{figure}[ht]
\centering\[
\Qcircuit @C=1em @R=.7em {
&\qw & \ctrl{1} & \qw\\
&\qw & \ctrl{1} & \qw\\
&\qw & \targ & \qw\\
&\qw & \qw & \qw
}
\Qcircuit @C=1em @R=.7em {
&\\
&\\
&=\\
&
}\quad
\Qcircuit @C=1em @R=.7em {
&\qw &\qw & \ctrl{3} & \qw & \ctrl{3}& \qw\\
&\qw & \ctrl{1} &\qw & \ctrl{1} & \qw & \qw\\
&\qw & \targ & \qw & \targ &\qw & \qw\\
&\qw & \ctrl{-1}& \targ & \ctrl{-1} & \targ & \qw
}
\]
\caption{\cite{barenco1995elementary}Implement Toffoli gate with one borrowed ancilla qubit.}\label{cir:ToffoliEquivalent}
\end{figure}

The main idea of our optimization algorithm is to divide the CNOT gates of $\Ccal$ into two sets $S_1$ and $ S_2$, in which $0\leq 
s_1 - s_2\leq 1$, and $s_1,s_2$ are the size of $S_1,S_2$ respectively. 
First optimize the depth of the Control-CNOTs in set $S_1$ by utilizing the rests qubits (qubits in set $S_2$) as the borrowed ancillas.
And then similarly optimize the depth of Control-CNOTs in set $S_2$.

In details, 
we randomly choose $s_1$ qubits corresponding to gates of $S_2$ as borrowed ancillas, and replace all of Control-CNOTs for CNOTs in set $S_1$ to a new equivalent circuit, 
as in the right hand side of Fig. \ref{cir:ToffoliEquivalent}, 
by using these $s_{1}$ qubits as borrowed ancillas.
Without loss of generality, let $S_1:=\cbra{\CNOT_{1,2}, \cdots, \CNOT_{2s_1-1, 2s_1}}$, and $S_2:=\cbra{\CNOT_{2s_1+1,2s_1+2}, \cdots, \CNOT_{2s_1+2s_2-1, 2s_1+2s_2}}$.
This process can be implemented by performing:
\begin{itemize}
    \item [(1)] Toffoli$(2i, 2s_1 + i, 2i+1)$
    for $1\leq i\leq s_1$;
    \item [(2)] $\CNOT_{1,2s_1+i}$ for $1\leq i \leq s_1$;
    \item [(3)] Toffoli$(2i, 2s_1 + i, 2i+1)$ for $1\leq i\leq s_1$;
     \item [(4)] $\CNOT_{1,2s_1+i}$ for $1\leq i \leq s_1$;
\end{itemize}
By Claim \ref{claim:CnotOneCTGate}, both step (2) and (4) can be paralleled to $2\ceil{\log s_{1}} - 1$-depth respectively. The depth of Toffoli gate can be reduced to $8$ by Amy \cite{amy2013algorithms}, thus the total depth of step (1-4) are less than $ 4\ceil{\log s_{1}}+6$. Therefore, the depth of Control-CNOTs to construct Control-$\Ccal$ equals $4\ceil{\log s_{1}} + 4\ceil{\log s_{2}}+ 12\leq 8\ceil{\log s } +8$, the equality holds when $s_{1} = \ceil{s/2}$ and $s_{2} = \lfloor s/2\rfloor$.

\begin{figure}
\[
\Qcircuit @C=1em @R=.7em {
&\ctrl{1} & \qw \\
&\gate{\Rmat(\theta)} & \qw
}
\quad \Qcircuit @C=1em @R=.7em {
& &\\
&= &
}\quad
\Qcircuit @C=1em @R=.7em {
&\qw & \ctrl{1} &\qw &\ctrl{1} &\qw & \gate{e^{i\phi}} & \qw\\
& \gate{\Rmat(c)} & \targ & \gate{\Rmat(b)} &\targ &\gate{\Rmat(a)} &\qw &\qw
}
\]
    \caption{\cite{barenco1995elementary} Convert Control-$\Rmat(\theta)$ to CNOTs $+$ one-qubit gates, where $\Rmat(\theta) = e^{i\phi}\Rmat(a)X\Rmat(b)X\Rmat(c)$, and $\Rmat(a)\Rmat(b)\Rmat(c)=\Imat$, where $X$ is Pauli-$X$.}
    \label{cir:CUDecomp}
\end{figure}

For control of $t$ single-qubit gates, without loss of generality, let them be 
$$\Rmat_1(\theta_1),\Rmat_2(\theta_2),\cdots,\Rmat_t(\theta_t),$$ 
where $\Rmat_j(\theta)$ represents $\Rmat(\theta)$ operating on the $j$-th qubit.
 By Nielsen \emph{et al.} \cite{nielsen2002quantum}, any Control-$\Rmat(\theta)$ gate can be decomposed into four single-qubit gates combined with two CNOTs, as depicted in Fig. \ref{cir:CUDecomp}.
The circuit construction for the control of $t$ single-qubit gates is as follows: 
\begin{itemize}
    \item[(1)] Perform $\Rmat_j(c_j)$ for $1\leq j\leq t$;
    \item [(2)] Perform $\CNOT_{0,j}$ for $1\leq j\leq t$;
    \item[(3)] Perform $\Rmat_j(b_j)$ for $1\leq j\leq t$;
        \item [(4)] Perform $\CNOT_{0,j}$ for $1\leq j\leq t$;
    \item [(5)] Perform $\Rmat_j(c_j)$ for $1\leq j\leq t$;
    \item[(6)] Perform $e^{i\sum_j \phi_j}$ on the qubit 0;
\end{itemize}
where $0$ is the original control qubit. Since by Claim \ref{claim:CnotOneCTGate}, steps (2) and (4) can be optimized to $2\ceil{\log t} - 1$-depth respectively, and the depth of other steps all equal one, the depth of Control-$\cbra{\Rmat(\theta)}$ equals $4\ceil{\log t} + 1$.

Therefore, the total depth of Control-$\Ccal$ is less than $8\ceil{\log {s}} + 8 + 4\ceil{\log t} + 1\leq 12\ceil{\log n} +9$.
\end{proof}

\section{Optimize the depth of a CNOT circuit under lattice\label{app:CtrlCNOTLattice}}
In this section, we give the proof of Lemma \ref{lem:CNOTMap2lattice}. Before the proof, we first introduce a fact that there exists a one-to-one correspondence between an $n$-qubit CNOT circuit and an $n\times n$ invertible matrix. Specifically, the $j$-th column of the matrix for $n$-qubit CNOT circuit $\Ccal$ equals the $n$-boolean tuple obtained by operating $\Ccal$ on $\bm e_j = (0,\cdots,0, 1, 0, \cdots, 0)$ where only the $j$-th entry is 1 and all the other $n-1$ entries are 0. The construction holds because of the linearity of CNOT circuits, \emph{i.e.}, for any $i,j\in[n]$, $\Ccal\ket{i \oplus j} = \Ccal\ket{i} \oplus \Ccal\ket{j}$.  For example, a CNOT gate $\CNOT_{1,2}$ can be represented as matrix $\begin{pmatrix}
1 & 0 \\
1 & 1
\end{pmatrix}$\footnote{$\cdot \oplus\cdot $ is bit to bit XOR of two numbers.}. 

\begin{proof}[Proof of Lemma \ref{lem:CNOTMap2lattice}.]

Let CNOT circuit 
\be\Ccal_1 := \prod_{j=1}^k\CNOT_{1,i_j},\ee
where $i_j\ne i_{j'}$ for $j\ne j'$ and $i_j\in[n]$. Let set $S:=\cbra{1,i_1,...i_k}$. Let the top left corner vertex be $(1,1)$ in the $l_1\times l_2$ lattice. For vertex $(i,j)$, which is in row $i$ and column $j$ of the lattice.
First generate a spanning tree by removing all of the vertical edges connecting vertices $(i,j)$ and $(i+1,j)$ for $i\in [n-1] $ and $ 2\leq j\leq n$. Then recurrently
delete all of the leaves not in $S$, and the edges corresponding to these leaves, until all of the leaves are vertices in set $S$.
Denote this tree as tree $T$.
Let the root of tree $T$ be qubit on vertex $(1,1)$ of the lattice, denoted as qubit $1$ (we can use SWAP gates to change the root be qubit $1$), and use $c(i)$ to represent the set of children of qubit $i$.
The following algorithm gives an $\Ord{l_1+l_2}$-depth equivalent CNOT circuit under tree $T$ for $\Ccal_1$.
\begin{itemize}
    \item[(1)] 
  Let $\Tcal$ be the tree set which contains all of the maximal sub-trees $T_j$ of $T$ such that the root and leaves of $T_j$ are in $S$ and all of the remaining internal vertices are not in $S$.
  Permute the elements of $\Tcal$ such that if the tree $T_i$ is in front of the tree $T_j$, then the root of $T_j$ is not the descendant of the root of $T_i$ in tree $T$.
 Sequentially 
  take the maximum number of trees in $\Tcal$ until $\Tcal$ is empty,
   with the above order such that these selected trees have no intersection of their vertices. Let $d(T)$ denote the depth of tree $T$. Let the layer of root of $T$ be layer 1.
   For the selected trees $T_i$ in parallel:
  \begin{itemize}
      \item[(a)] 
      Perform $\CNOT_{i,c(i)}$ for $i$ in layer $d(T_i)-1$ decrease to layer $2$ sequentially.
      In one layer perform the CNOT operations in parallel if there is no collision.
      If $i$ has more than one child vertex, perform CNOT gates with control qubit being $i$ and target qubits being one of $c(i)$ sequentially.
      \item[(b)] 
      Perform $\CNOT_{i,c(i)}$ for $i$ in layer $1$ increase to layer $d(T_i)-1$ sequentially.  In one layer perform the CNOT operations in parallel if there is no collision. 
      If $i$ has more than one child vertex, perform CNOT gates with control qubit being $i$ and target qubits being one of $c(i)$ sequentially.
      \item[(c)] Repeatedly perform Steps (1)-(a),(1)-(b) except all of CNOT gates related to the leaves.
  \end{itemize}
    \item[(2)] Reverse the order of elements in $\Tcal$ to obtain a opposite sequence of tree as Step (1).
    Delete the first tree in the tree sequence which contains vertex 1 to get tree set $\Tcal'$.
    Sequentially take maximum number of trees in $\Tcal'$ with this order until $\Tcal$ is empty, such that these selected trees have no intersection of vertices. For the selected trees perform (1)(a-c) in parallel.
\end{itemize}
With this construction, we have the depth of $\Ccal_1$ is $\Ord{d(T)} = \Ord{l_1 + l_2}$.

For any leaves $i$ in $T_j$ with input $x_i$, it outputs  $x_i+x_k$ where $k$ is the root of $T_j$ after performing Step (1).
And then by Step (2), the input of vertex 1 (root of tree $T$) is added to all of the other vertices in $S$, letting all of the other vertices unchanged, and thus implementing $\Ccal_1$.
\end{proof}

\end{appendix}

\bibliographystyle{apsrev4-1}
 \bibliography{ref}

\end{document}